\newcommand{\argmax}{\mathop{\rm arg\,max}}
\newcommand{\argmin}{\mathop{\rm arg\,min}}
\newcommand{\supp}{\mathop{\rm supp}}
\newcommand{\ALG}{{\rm ALG}}
\newcommand{\OPT}{{\rm OPT}}
\newcommand{\sv}{single-valuable\xspace}
\newcommand{\SV}{S}
\newcommand{\ot}{\leftarrow}
\newcommand{\app}{\gamma}
\newcommand{\Rset}{\mathbb{R}}
\newcommand{\submax}{{\sc SMPSC}\xspace}
\newcommand{\knapsack}{{\sc KPSC}\xspace}
\newcommand{\submaxUC}{{\sc SMPUC}\xspace}
\newcommand{\knapsackUC}{{\sc KPUC}\xspace}
\newcommand{\para}[1]{
\medskip
\noindent
{\bfseries #1:}\hspace{-0.2em}
}
\newtheorem{theorem}{Theorem}
\newtheorem{lemma}{Lemma}
\newtheorem{claim}{Claim}
\date{}
\title{Submodular maximization with uncertain knapsack capacity}
\author{Yasushi Kawase\\
Tokyo Institute of Technology, Tokyo, Japan.
  Email: kawase.y.ab@m.titech.ac.jp
\and
Hanna Sumita\\
  National Institute of Informatics, Tokyo, Japan. JST, ERATO,
  Kawarabayashi Large Graph Project.
  Email: sumita@nii.ac.jp
\and 
Takuro Fukunaga\\
  RIKEN Center for Advanced Intelligence Project, Tokyo, Japan.\\
  Email: takuro.fukunaga@riken.jp
}
\begin{document}

\maketitle

\begin{abstract}
We consider the maximization problem of monotone submodular functions under an uncertain knapsack constraint. Specifically, the problem is discussed in the situation
that the knapsack
 capacity is not given explicitly and can be accessed only through an
 oracle that answers whether or not the current solution is feasible when an
 item is added to the solution.
 Assuming that cancellation of the last item is allowed when it overflows the
 knapsack capacity,
 we discuss the robustness ratios of adaptive policies for this problem, 
 which are the worst case ratios of the objective values achieved by the output solutions
 to the optimal objective values.
 We present a randomized policy of robustness ratio $(1-1/e)/2$,
 and a deterministic policy of robustness ratio $2(1-1/e)/21$.
 We also consider a universal policy that chooses items following a
 precomputed sequence. 
 We present a randomized universal policy of robustness ratio
 $(1-1/\sqrt[4]{e})/2$.
 When the cancellation is not allowed, no randomized adaptive policy
 achieves a constant robustness ratio. Because of this hardness,
 we assume that a probability distribution of the knapsack capacity
 is given, and consider computing a sequence of items
 that maximizes the expected objective value. We present a polynomial-time
 randomized algorithm of approximation ratio
 $(1-1/\sqrt[4]{e})/4-\epsilon$ for any small constant
 $\epsilon >0$.
 \end{abstract}

 \section{Introduction}
The submodular maximization is one of the most well-studied 
combinatorial optimization problems. 
Since it captures an essential part of decision-making situations,
it has a huge number of applications
in diverse areas of computer science.
Nevertheless, 
the standard
setting of the submodular maximization problem fails to capture
several realistic situations.
For example, let us consider 
choosing several items to maximize
a reward represented by a submodular function
subject to a resource limitation.
When the amount of the available resource is exactly known, 
this problem 
 is formulated as the submodular maximization problem with a knapsack constraint.
 However, in many practical cases, precise information on the available resource
is not given. Thus, algorithms for the standard submodular maximization problem
cannot be applied to this situation.
Motivated by this fact, we study 
robust maximization of submodular functions with an uncertain knapsack capacity.
Besides the practical applications,
it is interesting to study this problem because it shows how much
robustness can be achieved for an uncertain knapsack capacity in submodular maximization.

More specifically,
we study the \emph{submodular maximization problem with an unknown knapsack capacity} (\submaxUC).
In this problem, we are given a set $I$ of items, and a
monotone nonnegative submodular function $f\colon 2^I \rightarrow \Rset_+$ such that $f(\emptyset)=0$,
where each item $i \in I$ is associated with a size $s(i)$.
The objective is to find a set of items that maximizes the submodular
function subject to a knapsack constraint, 
but we assume that the knapsack capacity is unknown.
We have access to the knapsack capacity through an oracle;
we add items to the knapsack one by one, and we see 
whether or not the selected items violates the knapsack constraint only after the addition.
If a selected item fits the knapsack, the selection of this item is irrevocable.
When the total size of the selected items exceeds the capacity, there are two settings 
according to whether or not the last selection can be canceled. 
If the cancellation is allowed, then we remove the last selected item from the knapsack,
and continue adding the remaining items to the knapsack. 
In the other setting, we stop the selection, and the final output is defined as the item set in the knapsack before adding the last item.

For the setting where the cancellation is allowed, we consider an \emph{adaptive policy}, which is defined as a decision tree to decide which item to pack into the knapsack next. 
A \emph{randomized} policy is a probability distribution over adaptive policies. 
Performance of an adaptive policy is evaluated by the robustness ratio defined as follows. 
For any number $C \in \Rset_+$, let $\OPT_C$ denote the optimal item set when the capacity is $C$, and let $\ALG_C$ denote an output of the policy. 
Note that if the policy is a randomized one, then $\ALG_C$ is a random variable. 
We call the adaptive policy \emph{$\alpha$-robust}, for some $\alpha \leq 1$, if for any $C \in \Rset_+$, the expected objective value of the policy's output is within a ratio $\alpha$ of the optimal value, i.e., $\mathbb{E}[f(\ALG_C)] / f(\OPT_C) \geq \alpha$. 
We also call the ratio $\alpha$ the \emph{robustness ratio} of the policy.

One main purpose of this paper is to present algorithms that produce adaptive policies of constant robustness ratios for \submaxUC.
Moreover, we consider a special type of adaptive policy called a \emph{universal} policy. 
A universal policy selects items following a precomputed order of items regardless of the observations made while packing.  
Thus, a universal policy is identified with a sequence of the given items. 
This is in contrast to general adaptive policies, where the next item to try can vary with the observations made up to that point. 
We present an algorithm that produces a randomized universal policy that achieves a constant robustness ratio.

If the cancellation is not allowed, then there is no difference between
adaptive and universal policies because the selection terminates once a selected item does
not fit the knapsack.
In this case, we observe that no randomized adaptive policy achieves a constant robustness ratio.
Due to this hardness, we consider a \emph{stochastic} knapsack capacity
when the cancellation is not allowed. 
In this situation, we assume that the knapsack capacity is determined according to some probability distribution and the information of the distribution is available.
Based on this assumption, we compute a sequence of items as a solution. 
When the knapsack capacity is realized, the items in the prefix of the sequence are selected
so that their total size does not exceed the realized capacity.
The objective of the problem is to maximize the expected value of the
submodular function $f$ for the selected items. 
We address this problem as the \emph{submodular maximization problem with a stochastic knapsack capacity} (\submax). 
We say that the \emph{approximation ratio} of a sequence is $\alpha \ (\leq 1)$ if its expected objective value is at least $\alpha$ times the maximum expected value of $f$ for any instance. 
The sequence computed in an $\alpha$-robust policy for \submaxUC achieves an $\alpha$-approximation ratio for \submax. 
However, the opposite does not hold, and an algorithm of a constant approximation ratio may exist for \submax
even though no randomized adaptive policy achieves a constant robustness
ratio for \submaxUC.
Indeed, we present such an algorithm.

\subsection{Related studies}
There are a huge number of studies on the submodular maximization problems (e.g., \cite{Sviridenko04}), but we are aware of no previous work on \submaxUC or \submax.
Regarding studies on the stochastic setting of the problem, several papers proposed concepts of submodularity for random set functions and discussed adaptive policies to maximize those functions~\cite{AsadpourNS08,GolovinK11}.
There are also studies on 
the submodular maximization over an item set in which
each item is activated stochastically~\cite{AdamczykSW16,FeldmanSZ16,GuptaNS17}.
However, as far as we know, there is no study on the problem with stochastic constraints.

When the objective function is modular (i.e., the function returns the sum
of the values associated with the selected items),
the submodular maximization problem with a knapsack constraint is
equivalent to the classic knapsack problem.
For the knapsack problem, there are numerous studies on the stochastic sizes and rewards of items~\cite{DeanGV08,GuptaKMR11,Ma14}.
This problem is called the stochastic knapsack problem. 
Note that this is different from the knapsack problem with a stochastic capacity (\knapsack),
and there is no direct relationship between them.
 However, we observe that most of the algorithms for the stochastic knapsack problem can be applied to \knapsack.
 Indeed, 
one of our algorithms for \submax is based on the idea of Gupta et al.~\cite{GuptaKMR11}
 for the stochastic knapsack problem.

The covering version of \knapsack is
studied in a context of single machine scheduling with nonuniform processing
speed. This problem is known to be strongly NP-hard~\cite{HohnJ15},
which implies that pseudo-polynomial time algorithms are unlikely to exist.
This is in contrast to the fact that the classic knapsack problem and its covering version admit pseudo-polynomial time algorithms.
Megow and Verschae~\cite{MegowV13} gave a PTAS for the covering version of \knapsack.

To the best of our knowledge, \knapsack itself has not been studied well. The only previous study we are aware of is
the thesis of Dabney~\cite{dabney2010ptas}, wherein a PTAS is presented
for the problem.
Since the knapsack problem and its covering version are equivalent in
the existence of exact algorithms, the strongly NP-hardness of the
covering version implies the same hardness for \knapsack.

Regarding the \emph{knapsack problem with an unknown capacity} (\knapsackUC), Megow and Mestre~\cite{MM2013} mentioned that no deterministic policy achieves a constant robustness ratio when cancellation is not allowed. 
They presented an algorithm that constructs for each instance a policy whose robustness ratio is arbitrarily close to the one of an optimal policy that achieves the largest robustness ratio for the instance. 
When cancellation is allowed, Disser et al.~\cite{DisserKMS17}
provided a deterministic $1/2$-robust universal policy for \knapsackUC.
They also
proved that no deterministic adaptive policy achieves a robustness ratio better
than $1/2$, which means that the robustness ratio of their deterministic
universal policy is best possible even for any deterministic adaptive policy.

\subsection{Contributions}
\begin{table}[tb]
\caption{Summary of main results in this paper}
\label{table:contributions}
	\begin{center}
		\begin{tabular}{c|c|c}\hline 
& with cancellation & without cancellation \\ \hline
unknown&  
\multirow{4}{*}{
\begin{tabular}{c}
rand.~adaptive $(1-1/{e})/2$-robust\\
rand.~universal $(1-{1}/{\sqrt[4]{e}})/2$-robust\\
det.~adaptive $2(1-{1}/{e})/21$-robust \\
\end{tabular}
}
		 &
\multirow{2}{*}{
 no constant robustness ratio
}\\
capacity &    &\\ \cline{1-1} \cline{3-3}

stochastic &  & rand.~pseudo-poly time $(1/4-o(1))$-approx.\\
capacity &  & rand. $((1-{1}/{\sqrt[4]{e}})/4-\epsilon)$-approx.
\\ \hline
		\end{tabular}
	\end{center}
\end{table}

Contributions in this paper are summarized in Table~\ref{table:contributions}.
 For the case where cancellation is allowed,
we present three polynomial-time algorithms
for {\submaxUC}.
These algorithms produce

 \begin{itemize}
  \setlength{\itemsep}{0pt}
\item a randomized adaptive policy of robustness ratio $(1-1/e)/2$ (Section~\ref{sec:rand_adaptive});
\item a deterministic adaptive policy of robustness ratio $2(1-1/e)/21$ (Section \ref{sec:det_adaptive}); 
\item a randomized universal policy of robustness ratio $(1-1/\sqrt[4]{e})/2$ (Section \ref{sec:rand_universal}). 
 \end{itemize}

Our algorithms are constructed based on a simple greedy algorithm~\cite{Leskovec+2007} for the monotone submodular maximization problem with a knapsack constraint. 
The greedy algorithm outputs a better solution among two candidates, one of which is constructed greedily based on the increase in the objective function value per unit of size, and the other of which is based on the increase in the objective function value.
In our randomized adaptive policy,
we achieve the robustness ratio $(1-1/e)/2$ by guaranteeing that each of the two candidate solutions is output by our policy with probability $1/2$.

We convert this randomized policy into a deterministic one by mixing the
two strategies which correspond to the two candidate solutions.
We remark that the same approach is taken for \knapsackUC to construct a deterministic $1/2$-robust universal policy by Disser et al.~\cite{DisserKMS17}. 
They call an item \emph{swap item} if it corresponds to a single-item solution for some knapsack capacity.
A key idea in their policy is to pack swap items earlier than the others.
However, their technique fully relies on the property that the objective function is modular, and their choice of swap items is not suitable for \submaxUC. 
In the present paper, we introduce a new notion of \emph{\sv{} items}.
This enables us to design a deterministic $2(1-1/e)/21$-robust policy.
We remark that our proof technique is also different from the standard one used in related work. 
Moreover, we modify the randomized adaptive policy to obtain the randomized universal policy. 
A key idea here is to guess a capacity by a doubling strategy. 

We also show that no randomized adaptive policy achieves a robustness ratio better than $8/9$
 for \knapsackUC  (Section~\ref{sec:hardness_cancel}).
 It is known that the robustness ratio achieved by deterministic policies for this problem is at most $1/2$~\cite{DisserKMS17},
 but there was no upper bound on the robustness ratio for randomized
 policies.
 Disser~et~al.~\cite{DisserKMS17} mentioned that
 it is an interesting open problem to improve the ratio $1/2$
 by randomized policies.
 Our upper bound implies that, even if randomized policies are
 considered, the ratio cannot be improved better than $8/9$.
 In addition, we show that no deterministic policy
 achieves robustness ratio better than $(1+\sqrt{5})/4$
 and no randomized policy achieves robustness ratio better than
 $(5+\sqrt{5})/8$
 for \submaxUC
even when each item has a unit size
(Section~\ref{app.unitcase}).

When cancellation is not allowed, it has been
already known that \knapsackUC
admits no deterministic universal policy of a constant
robustness ratio~\cite{MM2013}. We advance this hardness result by showing that no randomized adaptive policy achieves a constant robustness ratio
(Section~\ref{sec:hardness_nocancel}).

For \submax
without cancellation,
we present the following two algorithms:
\begin{itemize}
 \item a pseudo-polynomial time randomized algorithm of
       approximation ratio $1/4-o(1)$ (Section~\ref{subsec.pseudo-stochastic});

       \item a polynomial-time randomized algorithm of 
       approximation ratio $(1-1/\sqrt[4]{e})/4-\epsilon$ for
       any small constant $\epsilon>0$ (Section \ref{sec:poly_16approx}).
\end{itemize}

The former algorithms is based on the observation that
the problem can be reduced to the submodular maximization problem with an interval independent constraint. 
The latter algorithm is based on the 
idea of Gupta et al.~\cite{GuptaKMR11} for the stochastic knapsack
problem.
Gupta et al.\ regarded the knapsack capacity as a time limit,
and showed that a rounding algorithm for a time-index linear program (LP)
gives an adaptive policy for the stochastic knapsack problem.
Although the formulation size of the time-index LP is not polynomial,
a simple doubling technique reduces the formulation size to polynomial
with a loss of the approximation ratio.
In our algorithm for \submax, 
we first introduce a time-index convex relaxation of the problem using
the multilinear extension of the objective function,
and show that the rounding algorithm of Gupta~et~al.\
is a monotone contention resolution scheme
 for any realization of the knapsack capacity.
 This observation gives
a pseudo-polynomial time algorithm of approximation ratio $(1-1/\sqrt[4]{e})/2-o(1)$ for 
\submax.
We then transform it into a polynomial-time algorithm. 
This transformation requires a careful sketching of knapsack capacity,
which was not necessary for the stochastic knapsack problem.

\subsection{Organization}
The rest of this paper is organized as follows. Section~\ref{sec.prelim}
gives notations and preliminary facts used in this paper. 
Section~\ref{sec:submaxuc}
presents the adaptive policies for \submaxUC with cancellation.
Section~\ref{sec:upperbound}
provides the upper-bounds on the robustness ratios.
Section~\ref{sec.stochastic} presents the approximation algorithms 
for \submax without cancellation.
Section~\ref{sec:conclusion} concludes the paper. 

\section{Preliminaries}
\label{sec.prelim}

 In this section, we define terminologies used in this paper, and
 introduce existing results which we will use.

\para{Maximization of monotone submodular functions}
The inputs of the problem are a set $I$ of  $n$ items
and a nonnegative set function $f:2^I\to\mathbb{R}_+$.
In this paper, we assume that (i) $f$ satisfies $f(\emptyset)=0$,
(ii) $f$ is \emph{submodular} (i.e., $f(X)+f(Y)\ge f(X\cup Y)+f(X\cap Y)$ for any $X,Y
\subseteq I$),
and (iii) $f$ is \emph{monotone} (i.e., $f(X)\le f(Y)$ for any $X,Y\subseteq I$ with $X
\subseteq Y$).
The function $f$ is given as an oracle that returns the value of $f(X)$ for any query $X
\subseteq I$.
Let $\mathcal{I} \subseteq 2^I$ be any family 
such that $X \subseteq Y \in \mathcal{I}$
implies $X \in \mathcal{I}$. The $\mathcal{I}$-constrained
submodular maximization problem
seeks $X \in \mathcal{I}$
that maximizes $f(X)$.

We focus on the case where $\mathcal{I}$ corresponds to a knapsack constraint.
Namely, each item $i\in I$ is associated with a size $s(i)$,
and $\mathcal{I}$ is defined as $\{X \subseteq I \colon \sum_{i \in X}s(i)\leq C\}$
for some knapsack capacity $C > 0$.
We assume that the item size $s(i)$ ($i \in I$)
and the knapsack capacity $C$ are positive integers.
We denote $\sum_{i \in X}s(i)$ by $s(X)$ for any $X \subseteq I$.

\para{Problem SMPUC}
In \submaxUC, the knapsack capacity $C$ is unknown.
We can see whether an item set fits the knapsack only when an item is added
to the item set.

A solution for \submaxUC is an adaptive policy $\mathcal{P}$, which is represented as a binary decision tree that contains every item at most once along each path from the root to a leaf. 
Each node of the decision tree is an item to try packing into the knapsack. 
A \emph{randomized} policy is a probability distribution over binary decision trees. 
One of the decision trees is selected according to the probability distribution. 
For a fixed capacity $C$, the output of a policy $\mathcal{P}$ is an item set denoted by $\mathcal{P}(C)\subseteq I$ obtained as follows.
We start with $\mathcal{P}(C) = \emptyset$ and check whether the item $r$ at the root of $\mathcal{P}$ fits the knapsack, i.e., 
whether $s(r) + s(\mathcal{P}(C)) \le C$. 
If the item fits, then we add $r$ to $\mathcal{P}(C)$ and continue packing recursively with the left subtree of $r$. 
Otherwise, we have two options:
when cancellation is allowed,
we discard $r$ and continue packing recursively with the right subtree of $r$;
when cancellation is not allowed,
we discard $r$ and output $\mathcal{P}(C)$ to terminate the process.

When a policy does not depend on the observation made while packing, we call such a policy \emph{universal}. 
Since every path from the root to a leaf in a universal policy is identical, we can identify a universal policy with a sequence $\Pi = (\Pi_1, \ldots, \Pi_n)$ of items in $I$. 
For a fixed capacity $C$, the output of a universal policy, denoted by $\Pi(C)$, is constructed as follows.
We start with $\Pi(C)=\emptyset$, and add items to $\pi(C)$ in $n$ iterations.
In the $i$th iteration, we check whether $s(\Pi(C)) + s(\Pi_i) \leq C$ holds or not.
If true, then $\Pi_i$ is added to $X$.
Otherwise, $\Pi_i$ is discarded, and we proceed to the next iteration when cancellation is allowed,
and we terminate the process when cancellation is not allowed.
 
\para{Problem SMPSC}
In \submax, the knapsack capacity $C$ is given according to some probability distribution.
Let $T=\sum_{i \in I}s(i)$.
For each $t \in [T]:=\{0,1,\ldots,T\}$,
we denote by $p(t)$ the probability that the knapsack capacity is $t$.
We assume that the probability is given to an algorithm through an oracle that returns the value of $\sum_{t' =t}^T p(t')$ for any query $t \in [T]$.
Hence, the input size of a problem instance is $\Theta(n\log T)$
and an algorithm runs in pseudo-polynomial time if its running time depends on $T$ linearly.
A solution for \submax is a universal policy, i.e., a sequence $\Pi=(\Pi_1,\ldots,\Pi_n)$ of the items in $I$. 
When a capacity $C$ is decided, the output $\Pi(C)$ of $\Pi$ is constructed in the same way as universal policies for \submaxUC. 
The objective of \submax is to find a sequence $\Pi$ that maximizes $\mathbb{E}_C[f(\Pi(C))]$.

\para{Multilinear extension, continuous greedy, and contention
    resolution scheme}
From any vector $x \in [0,1]^I$,
we define a random subset $R_x$ of $I$
so that each $i \in I$ is included in $R_x$ with probability $x_i$,
where the inclusion of $i$ is independent from the inclusion of the other items.
For a submodular function $f\colon 2^I \rightarrow \Rset_+$, 
its \emph{multilinear extension} $F\colon [0,1]^I \rightarrow \Rset_+$
is defined by $F(x)=\mathbb{E}[f(R_x)]=\sum_{X \subseteq I} f(X)\prod_{i \in X}x_i \prod_{i' \in I\setminus X}(1-x_{i'})$ for all $x \in [0,1]^I$.
This function $F$
satisfies the \emph{smooth monotone submodularity},
that is,
$\partial F(x)/\partial x_i \geq 0$ for any $i \in I$
and $\partial^2 F(x)/(\partial x_i \partial x_j) \leq 0$ for any $i,j \in I$.
Although it is hard to compute the value of $F(x)$ exactly,
it usually can be approximated with arbitrary precision by a Monte-Carlo
simulation.
In this paper, to make discussion simple, we assume that $F$ can be
    computed exactly.

A popular approach for solving the $\mathcal{I}$-constrained submodular
maximization problem is to use a continuous relaxation of the problem.
Let $P \subseteq [0,1]^I$ be a polytope
in which each integer vector in $P$ is the incidence vector of a member of $\mathcal{I}$.
Then, $\max_{x \in P}F(x) \geq \max_{X \in \mathcal{I}}f(X)$ holds.
In this approach, it is usually assumed that $P$ is \emph{downward-closed}
(i.e., if $x,y \in [0,1]^I$ satisfies $y\leq x \in P$, then $y \in P$),
and \emph{solvable}
(i.e., the maximization problem $\max_{x \in P} \sum_{i\in I} w_i x_i$
can be solved in polynomial time
for any $w \in \Rset_+^I$).

Calinescu et al.~\cite{CalinescuCPV11}
gave an algorithm called \emph{continuous greedy}
for a continuous maximization problem $\max_{x \in P} F(x)$
over a solvable downward-closed polytope $P$.
They proved that the continuous greedy
outputs a vector $x \in P$
such that $F(x) \geq (1-1/e-o(1)) \max_{x'\in P}F(x')$.
Feldman~\cite{Feldmanphd}
extended its analysis by observing that
the continuous greedy algorithm with stopping time $b \geq 0$
outputs a vector $x \in [0,1]^I$ such that $x/b \in P$
and $F(x) \geq (1-e^{-b}-o(1)) \max_{X \in \mathcal{I}}f(X)$
(The performance guarantee depending on the stopping time
is originally given 
for the measured continuous greedy algorithm
proposed by \cite{FeldmanNS11}).
It is easy to see that his analysis can be modified to prove a slightly stronger
result $F(x) \geq (1-e^{-b}-o(1)) \max_{x' \in P}F(x')$.
In addition, this analysis requires
only the smooth monotone submodularity as a property of $F$.

A fractional vector $x \in P$ can be rounded into an integer vector
by a contention resolution scheme.
Let $b,c \in [0,1]$. 
For a vector $x$, we denote $\supp(x)=\{i \in I \colon x_i > 0\}$.
We consider an algorithm
that receives $x \in \{z \colon z/b \in P \}$ and $A \subseteq I$ as inputs
and returns a random subset $\pi_x(A) \subseteq A\cap \supp(x)$.
Such an algorithm is called 
 \emph{$(b,c)$-balanced contention resolution
 scheme} if
 $\pi_x(A) \in \mathcal{I}$
 with probability 1
 for all $x$ and $A$,
 and 
 $\Pr[i \in \pi_x(R_x) \mid i \in R_x] \geq c$
 holds for all $x$ and $i \in \supp(x)$
 (recall that $R_x$ is the random subset of $I$ determined from $x$).
 It is also called \emph{monotone} if
 $\Pr[i \in \pi_x(A)] \geq \Pr[i \in \pi_x(A')]$
 for any $i \in A \subseteq A' \subseteq I$.
 If a monotone $(b,c)$-balanced contention resolution scheme is
 available,
 then we can achieve the approximation ratio claimed in the following
 theorem
 by applying it to a fractional vector computed by the measured
 continuous greedy algorithm with stopping time $b$.
 This fact is summarized as in the following theorem.
 
  \begin{theorem}[\cite{FeldmanNS11}]
  \label{thm.contscheme}
 If there exists a monotone $(b,c)$-balanced contention resolution
 scheme for $\mathcal{I}$, then the
 $\mathcal{I}$-constrained submodular maximization problem 
 admits an approximation algorithm 
of ratio $(1-e^{-b})c-o(1)$ for any nonnegative monotone submodular function.
  \end{theorem}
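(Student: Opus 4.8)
The plan is to plug together the two tools from the preliminaries — the measured continuous greedy and the given contention resolution scheme — and to isolate the one genuinely nontrivial step as a rounding lemma. Concretely, I would take $P$ to be the solvable downward-closed polytope associated with $\mathcal{I}$ (its integral points being the incidence vectors of the members of $\mathcal{I}$) and run the measured continuous greedy with stopping time $b$ on $\max_{x\in P}F(x)$. By the analysis recalled above this produces, in polynomial time, a vector $x\in[0,1]^I$ with $x/b\in P$ and $F(x)\ge(1-e^{-b}-o(1))\max_{X\in\mathcal{I}}f(X)$. I would then sample $R_x$, form $S:=\pi_x(R_x)$, and output $S$; by definition of a $(b,c)$-balanced scheme, $S\in\mathcal{I}$ with probability $1$, so the output is feasible. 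Since $\mathbb{E}[f(R_x)]=F(x)$, the theorem follows once one proves the rounding lemma
\[
\mathbb{E}[f(\pi_x(R_x))]\ \ge\ c\cdot F(x),
\]
because then $\mathbb{E}[f(S)]\ge c\,(1-e^{-b}-o(1))\max_{X\in\mathcal{I}}f(X)\ge\bigl((1-e^{-b})c-o(1)\bigr)\max_{X\in\mathcal{I}}f(X)$, which is the claimed ratio.

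To prove the rounding lemma I would fix an arbitrary order $e_1,\dots,e_m$ of $\supp(x)$, write $R^{(j)}_A:=A\cap\{e_1,\dots,e_{j-1}\}$ and $f_e(A):=f(A\cup\{e\})-f(A)$, and telescope $f$ along this order, obtaining $f(A)=\sum_{j=1}^m\mathbf{1}[e_j\in A]\,f_{e_j}(R^{(j)}_A)$ for every $A\subseteq\supp(x)$. Applying this to $A=R_x$ and using the independence of the coordinates of $R_x$ gives $F(x)=\sum_j x_{e_j}\,\mathbb{E}_B[f_{e_j}(B)]$ with $B:=R^{(j)}_{R_x}$. Applying it to $A=\pi_x(R_x)\subseteq R_x$, and using that $f_{e_j}$ is monotone and submodular (so $f_{e_j}(R^{(j)}_{\pi_x(R_x)})\ge f_{e_j}(R^{(j)}_{R_x})\ge0$), gives $\mathbb{E}[f(\pi_x(R_x))]\ge\sum_j x_{e_j}\,\mathbb{E}_B[f_{e_j}(B)\,q_j(B)]$, where $q_j(B):=\Pr[e_j\in\pi_x(R_x)\mid R^{(j)}_{R_x}=B,\ e_j\in R_x]$ is the survival probability of $e_j$ averaged over the later coordinates of $R_x$ and the scheme's internal randomness. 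It therefore suffices to show, for each $j$, that $\mathbb{E}_B[f_{e_j}(B)\,q_j(B)]\ge c\,\mathbb{E}_B[f_{e_j}(B)]$. Here $f_{e_j}(\cdot)$ is non-increasing in $B$ by submodularity, and $q_j(\cdot)$ is non-increasing in $B$ because the monotonicity of the scheme makes $\Pr[e_j\in\pi_x(B\cup\{e_j\}\cup R')]$ non-increasing in $B$ for each fixed realization $R'$ of the later coordinates. Since $B=R^{(j)}_{R_x}$ has a product distribution, the FKG (Harris) inequality gives $\mathbb{E}_B[f_{e_j}(B)q_j(B)]\ge\mathbb{E}_B[f_{e_j}(B)]\cdot\mathbb{E}_B[q_j(B)]$, and $\mathbb{E}_B[q_j(B)]=\Pr[e_j\in\pi_x(R_x)\mid e_j\in R_x]\ge c$ by $c$-balancedness; summing over $j$ finishes the lemma and hence the theorem.

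The main obstacle is exactly this last step: the indicator that $e_j$ survives the scheme and the marginal value $f_{e_j}(R^{(j)}_{R_x})$ are correlated through the earlier part of $R_x$, so one cannot simply replace $q_j(R^{(j)}_{R_x})$ by the aggregate survival probability $c$ inside the expectation. The resolution is to observe that both quantities are non-increasing functions of $R^{(j)}_{R_x}$ — the marginal by submodularity of $f$, the survival probability by monotonicity of the scheme — and then to decouple them via FKG over the underlying product measure before invoking $c$-balancedness; this is the crux, and everything else is assembling facts already stated in the preliminaries. The remaining points are routine: the measured continuous greedy and the scheme run in polynomial time (the scheme being assumed efficiently implementable), $P$ is solvable and downward-closed as set up there, and $F$ is assumed exactly computable, so the only error incurred is the $o(1)$ inherited from the continuous greedy.
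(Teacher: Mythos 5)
Your proposal is correct, but note that the paper itself does not prove Theorem~\ref{thm.contscheme}: it is imported verbatim from \cite{FeldmanNS11} (together with the contention-resolution framework of Chekuri, Vondr\'ak and Zenklusen), so there is no in-paper argument to compare against. Your reconstruction — measured continuous greedy with stopping time $b$ to get $x$ with $x/b\in P$ and $F(x)\ge(1-e^{-b}-o(1))\max_{X\in\mathcal{I}}f(X)$, followed by the rounding bound $\mathbb{E}[f(\pi_x(R_x))]\ge c\,F(x)$ proved by telescoping marginals and decoupling the survival probability from the marginal value via FKG/Harris (both being non-increasing functions of the earlier part of the product-distributed $R_x$, by scheme monotonicity and submodularity respectively) — is exactly the standard proof of the cited result, so your approach coincides with the source's rather than offering a different route.
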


\section{Adaptive policies for \submaxUC with cancellation}
\label{sec:submaxuc}

\subsection{Randomized $(1-1/e)/2$-robust policy}
\label{sec:rand_adaptive}

In this subsection, we present a randomized adaptive policy for \submaxUC in the situation that the cancellation is allowed. 
The idea of our algorithm is based on a simple greedy algorithm~\cite{Leskovec+2007} for the submodular maximization problem with a knapsack constraint. 
The greedy algorithm generates two candidate item sets.  
One set is obtained greedily by repeatedly inserting  an item maximizing the increase in the objective function value per unit of size. 
The other is obtained similarly by packing an item maximizing the increase in the objective function value. 
Then, the algorithm returns the set with the larger value, which leads to a $(1-1/e)/2$-approximation solution.

The idea of choosing a better solution is not suitable for \submaxUC, in which we cannot remove items from the knapsack. 
We resolve this issue by generating at random one of two policies $\mathcal{P}^1$ and $\mathcal{P}^2$ that are analogous to the above two greedy methods. 
One policy $\mathcal{P}^1$ corresponds to the greedy algorithm based on the increase in the objective function value per unit of size. 
We formally present this policy as Algorithm \ref{alg:greedy1}. 
The item $i_j$ corresponds to a node of depth $j-1$ in $\mathcal{P}^1$. 
We remark that Algorithm \ref{alg:greedy1} chooses $i_j$ independently of the knapsack capacity, but the choice depends on the observations which items fit the knapsack and which items did not so far. 
For generality of the algorithm, we assume that the algorithm receives 
an initial state $U$ of the knapsack, which is defined as a subset of $I$
(this will be used in Section~\ref{sec:det_adaptive}).

\begin{algorithm}[t]
\caption{Greedy algorithm $\mathcal{P}^1$ for $(I, U)$}\label{alg:greedy1}
$X\ot U$, $R\ot I \setminus U$\;
\ForEach{$j=1,\ldots, |I\setminus U|$}{
  let $i_j\in\argmax\left\{(f(X\cup\{i\})-f(X))/s(i) \colon i\in R\right\}$\;
  \If(\tcp*[h]{left subtree}){$i_j$ fits the knapsack (i.e., $s(X)+s(i_j)\le C$)}{$X\ot X\cup\{i_j\}$}
  \Else(\tcp*[h]{right subtree}){discard $i_j$}
 $R\ot R\setminus\{i_j\}$\;
 }
    \Return{$X$}\;
\end{algorithm}

The policy $\mathcal{P}^2$ tries to pack items greedily based on the increase in the objective function value. 
Our algorithm is summarized in Algorithm \ref{alg:submodular_universal}. 
We remark that Algorithm \ref{alg:submodular_universal} chooses the item $i_j$ for each iteration $j$ in polynomial time with respect to the cardinality of $I$. 

\begin{algorithm}[t]
\caption{Randomized $(1-1/e)/2$-robust adaptive policy}\label{alg:submodular_universal}
flip a coin\;
\lIf(\tcp*[h]{policy $\mathcal{P}^1$}){head}{execute Algorithm \ref{alg:greedy1} for $(I, \emptyset)$}
\Else(\tcp*[h]{policy $\mathcal{P}^2$}){
  $X\ot \emptyset$, $R\ot I$\;
  \ForEach{$j=1,\ldots, |I|$}{
    let $i_j\in\argmax\left\{f(X\cup\{i\})-f(X)\colon i\in R\right\}$\;
    \If(\tcp*[h]{left subtree}){$i_j$ fits the knapsack (i.e., $s(X)+s(i_j)\le C$)}{$X\ot X\cup\{i_j\}$}
    \Else(\tcp*[h]{right subtree}){discard $i_j$}
    $R\ot R\setminus\{i_j\}$\;
  }
    \Return{$X$}\;
 }
\end{algorithm}

We analyze the robustness ratio of Algorithm \ref{alg:submodular_universal}. 
In the execution of Algorithm \ref{alg:greedy1} for $(I, U)$ under some capacity $C$, we call the order $(i_1, \ldots, i_{|I \setminus U|})$ of items in $I \setminus U$ the \emph{greedy order} for $(I, U)$ with capacity $C$, where $i_j$ is the $j$th selected item at line 3. 
A key concept in the analysis of Algorithm \ref{alg:submodular_universal} is to focus on the first item in the greedy order that is a member of $\OPT_C$ but is spilled from $\mathcal{P}^1(C)$. 
The following lemma is useful in analysis of Algorithm \ref{alg:submodular_universal} and also algorithms given in subsequent sections. 

\begin{lemma}\label{lemma:submodular_greedy}
Let $C, C'$ be any positive numbers, and let $q$ be the smallest index such that $i_q \in \OPT_{C'}$ and $i_q \not\in \mathcal{P}^1(C)$ (let $q=\infty$ if there is no such index). 
When Algorithm \ref{alg:greedy1} is executed for $(I, U)$ with capacity $C$, 
it holds for any index $j$ that
  \begin{align*}
  & f(((\mathcal{P}^1(C) \cup \OPT_{C'}) \cap \{i_1, \ldots, i_{j}\}) \cup  U)
    \ge \left(1-\exp\left(-\frac{s((\mathcal{P}^1(C) \cup \OPT_{C'}) \cap \{i_1, \ldots, i_{j}\})}{C'}\right)\right)\cdot f(\OPT_{C'}). 
  \end{align*}
  Moreover, $(\mathcal{P}^1(C) \cup \OPT_{C'}) \cap \{i_1, \ldots, i_{j}\}=
  \mathcal{P}^1(C) \cap \{i_1, \ldots, i_{j}\}$ holds for any $j < q$. 
\end{lemma}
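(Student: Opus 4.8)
The plan is to argue by a greedy-increment analysis, tracking how much of $\OPT_{C'}$ has been ``absorbed'' into the current set along the greedy order. Write $O = \OPT_{C'}$ and let $A_j := (\mathcal{P}^1(C) \cup O) \cap \{i_1,\ldots,i_j\}$ for brevity. I want to show $f(A_j \cup U) \ge \bigl(1 - \exp(-s(A_j)/C')\bigr) f(O)$ for every $j$, together with the ``moreover'' claim that $A_j = \mathcal{P}^1(C) \cap \{i_1,\ldots,i_j\}$ for all $j < q$. The second part is immediate: for $j<q$, by definition of $q$ no index $\le j$ lies in $O \setminus \mathcal{P}^1(C)$, so every $i_\ell$ with $\ell\le j$ that belongs to $O$ also belongs to $\mathcal{P}^1(C)$; hence $O \cap \{i_1,\ldots,i_j\} \subseteq \mathcal{P}^1(C)$, and the union collapses. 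I would dispose of this first.

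For the main inequality I would induct on $j$. The base case $j=0$ (or the first $j$ with $A_j\neq\emptyset$) is trivial since the right side is $\le 0$. For the inductive step, consider item $i_j$. If $i_j \notin \mathcal{P}^1(C) \cup O$, then $A_j = A_{j-1}$ and $s(A_j)=s(A_{j-1})$, so the claim carries over verbatim by monotonicity. The interesting case is $i_j \in \mathcal{P}^1(C) \cup O$. Here I use the greedy choice at line~3 of Algorithm~\ref{alg:greedy1}: at the moment $i_j$ was selected, the current set was exactly $(\mathcal{P}^1(C)\cup U)\cap\{i_1,\ldots,i_{j-1}\}$ extended by $U$'s initial contents --- more precisely, the greedy set at step $j$ is $X_{j-1} := U \cup (\mathcal{P}^1(C)\cap\{i_1,\ldots,i_{j-1}\})$, and $i_j$ maximizes $(f(X_{j-1}\cup\{i\})-f(X_{j-1}))/s(i)$ over all remaining items, in particular over all items of $O$ not yet considered. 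By submodularity, $f(O \cup X_{j-1}) - f(X_{j-1}) \le \sum_{i \in O,\ i\notin\{i_1,\ldots,i_{j-1}\}} (f(X_{j-1}\cup\{i\}) - f(X_{j-1})) \le \frac{f(X_{j-1}\cup\{i_j\}) - f(X_{j-1})}{s(i_j)} \cdot s(O) \le C' \cdot \frac{f(X_{j-1}\cup\{i_j\}) - f(X_{j-1})}{s(i_j)}$. Since $f(O\cup X_{j-1}) \ge f(O)$ by monotonicity, this gives the per-step gain bound $f(X_{j-1}\cup\{i_j\}) - f(X_{j-1}) \ge \frac{s(i_j)}{C'}\bigl(f(O) - f(X_{j-1})\bigr)$, which is the familiar ``geometric decay of the residual'' estimate.

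The remaining work is to convert this increment bound into the stated closed form and, crucially, to handle the discrepancy between the greedy set $X_{j-1}$ and the set $A_{j-1}\cup U$ appearing in the lemma. Note $A_{j-1} \cup U$ contains $X_{j-1}$ together with possibly some items of $O\setminus\mathcal{P}^1(C)$ that were spilled; by monotonicity $f(A_{j-1}\cup U) \ge f(X_{j-1})$, so the residual $f(O) - f(A_{j-1}\cup U) \le f(O)-f(X_{j-1})$ only helps. When $i_j\in O\setminus\mathcal{P}^1(C)$ (so $i_j$ is added to $A_j$ but not selected greedily) I instead bound the gain $f(A_j\cup U)-f(A_{j-1}\cup U)$ directly by the marginal of $i_j$ over $X_{j-1}$, which by the greedy ratio inequality is at least $\frac{s(i_j)}{C'}(f(O)-f(X_{j-1})) \ge \frac{s(i_j)}{C'}(f(O)-f(A_{j-1}\cup U))$ as well --- the same recursion. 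Then writing $g_j := f(O) - f(A_j\cup U)$, we get $g_j \le g_{j-1}\bigl(1 - s(i_j)/C'\bigr) \le g_{j-1}\exp(-s(i_j)/C')$ whenever $i_j$ is counted, and $g_j \le g_{j-1}$ otherwise; multiplying over all steps $\le j$ and using $g_0 \le f(O)$ yields $g_j \le f(O)\exp(-s(A_j)/C')$, which rearranges to the claim. The main obstacle I anticipate is exactly this bookkeeping: making sure the greedy ratio inequality (which compares against the greedily-built $X_{j-1}$) is legitimately applied to bound the increment of the larger set $A_{j-1}\cup U$, and that the remaining items of $O$ are genuinely still available to the argmax at step $j$ (they are, since they have index $\ge j$ and so have not been discarded from $R$). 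Once the correct ``residual'' quantity $g_j$ is identified so that both the selected-item case and the spilled-item case feed the same recursion, the computation is routine.
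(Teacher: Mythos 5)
Your route is the same as the paper's: induct along the greedy order, turn the decomposition bound (Lemma~\ref{lemma:submodular_decompose}) plus the greedy ratio choice into the residual recursion $g_j \le g_{j-1}\bigl(1-s(i_j)/C'\bigr) \le g_{j-1}\exp(-s(i_j)/C')$ at every step with $i_j\in\mathcal{P}^1(C)\cup\OPT_{C'}$, telescope, and handle the ``moreover'' clause exactly as the paper does. For all indices $j\le q$ your argument is correct and essentially coincides with the paper's proof, because in that range no item of $\OPT_{C'}$ has yet been spilled, so $A_{j-1}\cup U$ \emph{is} the algorithm's current set $X_{j-1}$: your decomposition sum over $i\in\OPT_{C'}\setminus\{i_1,\dots,i_{j-1}\}$ genuinely covers $\OPT_{C'}\setminus X_{j-1}$, all of those items are still in $R$, and the greedy ratio inequality applies to the very set whose increment you need. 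This is also the only range in which the lemma is invoked downstream (Theorems~\ref{thm:unknown_rand_adaptive}, \ref{thm:det_univ_policy}, \ref{thm:unknown_rand_univ} all use $j=q$ or $j=q'$), and the paper's own inductive step works with $Z_{j-1}\cup U$ and identifies its marginal-density maximizer with the algorithm's choice, which is exactly the $j\le q$ situation.

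Where your write-up does not go through as stated is the case $j>q$, i.e., after some optimal item has been spilled. Two steps fail there. First, Lemma~\ref{lemma:submodular_decompose} forces the sum to range over all of $\OPT_{C'}\setminus X_{j-1}$, which for $j>q$ additionally contains the already-spilled optimal items; these have been removed from $R$, so their marginals are not dominated by the greedy ratio of $i_j$ with respect to $X_{j-1}$ and cannot simply be omitted from your sum. Second, your patch for the spilled-item case bounds the increment $f(A_j\cup U)-f(A_{j-1}\cup U)$ \emph{from below} by the marginal of $i_j$ over $X_{j-1}$; since $X_{j-1}\subseteq A_{j-1}\cup U$, submodularity gives precisely the opposite inequality, so this step is backwards (it is an equality only when $A_{j-1}\cup U=X_{j-1}$, i.e., $j\le q$). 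So you have proved everything that the paper actually uses, by the same argument as the paper, but as a proof of the statement ``for any index $j$'' the $j>q$ bookkeeping would need a genuinely different idea (or the claim should be restricted to $j\le q$).
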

To prove Lemma \ref{lemma:submodular_greedy}, we show the following two lemmas. 

\label{app.greedy}
\begin{lemma}\label{lemma:submodular_decompose}
  For any $X\subseteq Y\subseteq I$, we have
  \begin{align*}
    f(Y)\le f(X)+\sum_{i\in Y\setminus X}(f(X\cup\{i\})-f(X)).
  \end{align*}
\end{lemma}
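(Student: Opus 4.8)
The plan is to reduce the statement to the standard diminishing-returns form of submodularity and then telescope. First I would record the diminishing-returns inequality: for any $A \subseteq B \subseteq I$ and any $i \in I \setminus B$, we have $f(A \cup \{i\}) - f(A) \ge f(B \cup \{i\}) - f(B)$. This follows directly from the defining inequality $f(P) + f(Q) \ge f(P \cup Q) + f(P \cap Q)$ of Section~\ref{sec.prelim} by taking $P = A \cup \{i\}$ and $Q = B$: since $A \subseteq B$ and $i \notin B$, we have $P \cup Q = B \cup \{i\}$ and $P \cap Q = A$, so $f(A \cup \{i\}) + f(B) \ge f(B \cup \{i\}) + f(A)$, which rearranges to the claim. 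Note that monotonicity of $f$ is not needed for this step.

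Next I would enumerate the elements of $Y \setminus X$ as $j_1, \ldots, j_k$ in an arbitrary order, and set $X_0 = X$ and $X_\ell = X \cup \{j_1, \ldots, j_\ell\}$ for $\ell = 1, \ldots, k$, so that $X_k = Y$. Writing $f(Y) - f(X)$ as the telescoping sum $\sum_{\ell=1}^{k} \bigl( f(X_\ell) - f(X_{\ell-1}) \bigr)$, I would bound each summand by the diminishing-returns inequality applied with $A = X$, $B = X_{\ell-1}$, and $i = j_\ell$ (valid since $X \subseteq X_{\ell-1}$ and $j_\ell \notin X_{\ell-1}$), obtaining $f(X_\ell) - f(X_{\ell-1}) \le f(X \cup \{j_\ell\}) - f(X)$. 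Summing over $\ell$ yields $f(Y) - f(X) \le \sum_{\ell=1}^{k} \bigl( f(X \cup \{j_\ell\}) - f(X) \bigr) = \sum_{i \in Y \setminus X} \bigl( f(X \cup \{i\}) - f(X) \bigr)$, which is exactly the asserted bound after moving $f(X)$ to the right-hand side.

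There is no genuine obstacle here; the statement is a textbook consequence of submodularity, and the argument is insensitive to the chosen ordering of $Y \setminus X$. The only point that deserves a moment's care is that the paper states submodularity in the symmetric form rather than the diminishing-returns form, so the derivation of the latter (done above in one line) should be spelled out explicitly before the telescoping argument.
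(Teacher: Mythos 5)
Your proof is correct and follows essentially the same route as the paper: enumerate $Y\setminus X$, telescope $f(Y)-f(X)$ over the chain $X=X_0\subseteq X_1\subseteq\cdots\subseteq X_k=Y$, and bound each increment by $f(X\cup\{j_\ell\})-f(X)$ via submodularity. The only difference is that you explicitly derive the diminishing-returns inequality from the symmetric definition, a step the paper leaves implicit.
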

\begin{proof}
  Suppose that $Y\setminus X=\{a_1,\dots,a_l\}$, where $l=|Y\setminus X|$. 
  Let $X_j=X\cup\{a_1,\dots,a_j\}$ for each $j=1, \ldots, l$.
  Note that $X_0=X$ and $X_l=Y$.
  Since $f$ is submodular, we have $f(X_j)-f(X_{j-1})\le f(X\cup\{a_j\})-f(X)$. 
  By summing both sides over $j$, we obtain
  $f(Y)-f(X)=\sum_{j=1}^l(f(X_j)-f(X_{j-1}))\le \sum_{j=1}^l(f(X\cup\{a_j\})-f(X))$,  
  which proves the lemma.
\end{proof}

\begin{lemma}\label{lemma:submodular_inc}
Let $C$ be any positive number. 
  For any $X\subseteq Y \subseteq I$ and any $i \in I\setminus Y$ such that 
$\OPT_C \setminus X \subseteq I \setminus Y$ and 
\begin{align*}
\frac{f(X\cup\{i\})-f(X)}{s(i)}
=\max \left\{\frac{f(X\cup\{v\})-f(X)}{s(v)}\colon v \in I \setminus Y \right\},
\end{align*}
it holds that
  \begin{align*}
    f(X\cup\{i\})-f(X)\ge \frac{s(i)}{C}(f(\OPT_C)-f(X)).
  \end{align*}
\end{lemma}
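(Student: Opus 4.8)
The plan is to run the standard greedy-versus-optimum comparison, using the maximality of the ratio for $i$ together with Lemma~\ref{lemma:submodular_decompose}. Write $O=\OPT_C$, so that $s(O)\le C$ by feasibility, and observe that the hypothesis $\OPT_C\setminus X\subseteq I\setminus Y$ says precisely that every element of $O\setminus X$ is among the candidates over which the ratio $(f(X\cup\{v\})-f(X))/s(v)$ is maximized. Hence, for each $v\in O\setminus X$,
\[
\frac{f(X\cup\{v\})-f(X)}{s(v)}\le\frac{f(X\cup\{i\})-f(X)}{s(i)},
\qquad\text{i.e.}\qquad
f(X\cup\{v\})-f(X)\le\frac{s(v)}{s(i)}\bigl(f(X\cup\{i\})-f(X)\bigr).
\]

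Next I would apply Lemma~\ref{lemma:submodular_decompose} to the pair $X\subseteq X\cup O$ to get $f(X\cup O)\le f(X)+\sum_{v\in O\setminus X}\bigl(f(X\cup\{v\})-f(X)\bigr)$, and then use monotonicity of $f$ (so $f(O)\le f(X\cup O)$) to obtain
\[
f(O)-f(X)\le\sum_{v\in O\setminus X}\bigl(f(X\cup\{v\})-f(X)\bigr)
\le\frac{f(X\cup\{i\})-f(X)}{s(i)}\sum_{v\in O\setminus X}s(v)
=\frac{f(X\cup\{i\})-f(X)}{s(i)}\,s(O\setminus X).
\]
Since $s(O\setminus X)\le s(O)\le C$, the right-hand side is at most $\frac{C}{s(i)}\bigl(f(X\cup\{i\})-f(X)\bigr)$, and rearranging gives exactly $f(X\cup\{i\})-f(X)\ge\frac{s(i)}{C}\bigl(f(\OPT_C)-f(X)\bigr)$.

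One minor point to address separately is the degenerate case $O\subseteq X$ (i.e. $O\setminus X=\emptyset$): then the chain above is vacuous, but monotonicity gives $f(\OPT_C)\le f(X)$, so the claimed lower bound is nonpositive while $f(X\cup\{i\})-f(X)\ge0$ again by monotonicity, so the inequality holds trivially. I do not anticipate any real obstacle here; the only thing that must be used with care is that the maximum defining $i$ is taken over $I\setminus Y$, which is why the hypothesis $\OPT_C\setminus X\subseteq I\setminus Y$ (rather than just $\OPT_C\setminus X\subseteq I\setminus X$) is exactly what is needed to make $O\setminus X$ a legitimate set of competitors.
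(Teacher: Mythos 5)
Your proof is correct and follows essentially the same route as the paper's: apply Lemma~\ref{lemma:submodular_decompose} to $X\subseteq X\cup\OPT_C$, use monotonicity to pass from $f(\OPT_C\cup X)$ to $f(\OPT_C)$, bound each marginal gain via the maximality of the ratio at $i$, and bound $s(\OPT_C\setminus X)$ by $C$. The separate treatment of the case $\OPT_C\subseteq X$ is harmless but unnecessary, since the same chain of inequalities (with an empty sum) already yields the claim there.
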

\begin{proof}
By applying Lemma \ref{lemma:submodular_decompose} with $Y=\OPT_C \cup X \ (\supseteq X)$, we have $f(\OPT_C \cup X) \leq f(X)+\sum_{v \in \OPT_C\setminus X}(f(X\cup\{v\})-f(X))$. 
In addition, $f(\OPT_C) \leq f(\OPT_C \cup X)$ holds by monotonicity. 
Thus, we have
  \begin{align*}
    f(\OPT_C)-f(X)
    &\le \sum_{v\in \OPT_C\setminus X}(f(X\cup\{v\})-f(X))
    = \sum_{v\in \OPT_C\setminus X}s(v)\cdot \frac{f(X\cup\{v\})-f(X)}{s(v)}\\
    &\le \left(\sum_{v\in \OPT_C\setminus X}s(v) \right)\cdot \frac{f(X\cup\{i\})-f(X)}{s(i)}
    \le C\cdot \frac{f(X\cup\{i\})-f(X)}{s(i)}.
  \end{align*}
\end{proof}

\begin{proof}[Proof of Lemma \ref{lemma:submodular_greedy}]
We denote $Y_j = U \cup \{i_1, \ldots, i_{j}\}$ and $Z_j = (\mathcal{P}^1(C) \cup \OPT_{C'}) \cap \{i_1, \ldots, i_j\}$ for each $j$. 
Let $Y_0 = U$ and $Z_0 = \emptyset$. 
  We first prove by induction on $j$ that $f(Z_{j} \cup U)\ge (1-\exp(-s(Z_{j})/{C'}))\cdot f(\OPT_{C'})$ for any $j$.
  For $j=0$, the statement clearly holds because $1-\exp(-s(Z_0)/{C'})=0$ and $f(Z_0 \cup U)\geq f(\emptyset) = 0$.

Suppose that $f(Z_{j-1} \cup U)\ge (1-\exp(-s(Z_{j-1})/{C'}))\cdot f(\OPT_{C'})$ holds for some $j  \geq 0$.
If $i_j \not\in \mathcal{P}^1(C) \cup \OPT_{C'}$, then we have $Z_j = Z_{j-1}$, and hence it is easy to see that $f(Z_{j} \cup U) \geq (1-\exp(-s(Z_{j})/{C'}))\cdot f(\OPT_{C'})$. 
In the following, we may assume that $i_j \in \mathcal{P}^1(C) \cup \OPT_{C'}$. 
Note that $Z_{j-1} \cup \{i_j\} = Z_j$. 
We observe that by choice of $i_j$ in Algorithm \ref{alg:greedy1}, it holds that
\begin{align*}
\frac{f((Z_{j-1} \cup U)\cup\{i_j\})-f(Z_{j-1} \cup U)}{s(i_j)} 
=\max \left\{\frac{f((Z_{j-1} \cup U)\cup\{v\})-f(Z_{j-1} \cup U)}{s(v)}
\colon v\in I\setminus Y_{j-1}\right\}. 
\end{align*}
Because $\OPT_{C'} \setminus (Z_{j-1} \cup U) \subseteq I \setminus Y_{j-1}$, we can apply Lemma \ref{lemma:submodular_inc} with $X=Z_{j-1} \cup U$, $Y=Y_{j-1}$ and $i=i_j$ to derive 
\begin{align*}
f(Z_{j} \cup U) - f(Z_{j-1} \cup U) \geq \frac{s(i_j)}{C'}(f(\OPT_{C'})-f(Z_{j-1} \cup U)).
\end{align*}
Therefore, by using this inequality, we see that
  \begin{align*}
    f(Z_j \cup U)
    &\ge f(Z_{j-1} \cup U)+\frac{s(i_j)}{C'}(f(\OPT_{C'})-f(Z_{j-1} \cup U))\\
    &=\left(1-\frac{s(i_j)}{C'}\right)f(Z_{j-1} \cup U)+\frac{s(i_j)}{C'}f(\OPT_{C'})\\
    &\ge \left(1-\frac{s(i_j)}{C'}\right)\cdot\left(1-\exp\left(-\frac{s(Z_{j-1})}{C'}\right)\right)\cdot f(\OPT_{C'})+\frac{s(i_j)}{C'}f(\OPT_{C'})\\
    &= \left(1-\left(1-\frac{s(i_j)}{C'}\right)\cdot\exp\left(-\frac{s(Z_{j-1})}{C'}\right)\right)\cdot f(\OPT_{C'})\\
    &\ge \left(1-\exp\left(-\frac{s(i_j)}{C'}\right)\cdot\exp\left(-\frac{s(Z_{j-1})}{C'}\right)\right)\cdot f(\OPT_{C'})
    = \left(1-\exp\left(-\frac{s(Z_{j})}{C'}\right)\right)\cdot f(\OPT_{C'}),
  \end{align*}
  where the last inequality holds because $1-x\le\exp(-x)$ for any $x$.
  
It remains to show that $Z_j = \mathcal{P}^1(C) \cap \{i_1, \ldots, i_{j}\}$ for any $j < q$. 
The inclusion $\supseteq$ is clear. 
For any $j \leq q$, we have $\OPT_{C'} \cap \{i_1, \ldots, i_j\} \subseteq \mathcal{P}^1(C) \cap \{i_1, \ldots, i_j\}$ by choice of $q$. 
Thus, $Z_j = (\OPT_{C'} \cap \{i_1, \ldots, i_j\} ) \cup (\mathcal{P}^1(C) \cap \{i_1, \ldots, i_{j}\}) \subseteq \mathcal{P}^1(C) \cap \{i_1, \ldots, i_{j}\}$. 
This completes the proof.
\end{proof}

We are ready to analyze the robustness ratio of Algorithm \ref{alg:submodular_universal}. 
For any $C \in \mathbb{R}_+$, we denote $I_C = \{ i \colon s(i) \leq C\}$ and denote by $i^C$ an item with the largest value in this set, i.e., $i^C \in \argmax\{f(\{i\})\colon i\in I_C\}$. 

\begin{theorem}\label{thm:unknown_rand_adaptive}
  Algorithm \ref{alg:submodular_universal} is a randomized $(1-1/{e})/2 > 0.316$-robust adaptive policy.
\end{theorem}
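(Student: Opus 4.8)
The plan is to show that for every capacity $C$ the two deterministic policies run by Algorithm \ref{alg:submodular_universal} satisfy $f(\mathcal{P}^1(C)) + f(\mathcal{P}^2(C)) \ge (1-1/e)\,f(\OPT_C)$; since each branch is taken with probability $1/2$, this gives $\mathbb{E}[f(\ALG_C)] = \tfrac12\bigl(f(\mathcal{P}^1(C)) + f(\mathcal{P}^2(C))\bigr) \ge \tfrac{1-1/e}{2}\,f(\OPT_C)$, which is the claim. First I would dispose of the degenerate case $I_C = \emptyset$: then $\OPT_C = \emptyset$, $f(\OPT_C)=0$, and there is nothing to prove. So assume $I_C \neq \emptyset$.

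The heart of the argument is an application of Lemma \ref{lemma:submodular_greedy} with $C' = C$ and $U = \emptyset$, letting $(i_1,\dots,i_n)$ be the greedy order for $(I,\emptyset)$ with capacity $C$ and $q$ the smallest index with $i_q \in \OPT_C$ and $i_q \notin \mathcal{P}^1(C)$. If $q = \infty$, then every item of $\OPT_C$ lies in $\mathcal{P}^1(C)$ (every item of $I$ occurs in the greedy order because $U=\emptyset$), so monotonicity already gives $f(\mathcal{P}^1(C)) \ge f(\OPT_C)$ and we are done without even using $\mathcal{P}^2$. If $q$ is finite, write $P' := \mathcal{P}^1(C) \cap \{i_1,\dots,i_{q-1}\}$; the ``moreover'' clause of Lemma \ref{lemma:submodular_greedy} applied at index $q-1$, together with $i_q \in \OPT_C \setminus \mathcal{P}^1(C)$, identifies $(\mathcal{P}^1(C)\cup\OPT_C)\cap\{i_1,\dots,i_q\}$ with $P'\cup\{i_q\}$, so plugging $j=q$ into the lemma yields
\[
 f(P'\cup\{i_q\}) \ge \Bigl(1 - \exp\bigl(-(s(P')+s(i_q))/C\bigr)\Bigr)\, f(\OPT_C).
\]
Since $\mathcal{P}^1$ discards $i_q$ at iteration $q$, its current load satisfies $s(P')+s(i_q) > C$, so the right-hand side is strictly larger than $(1-1/e)\,f(\OPT_C)$.

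To finish I would combine three elementary facts. Monotonicity gives $f(\mathcal{P}^1(C)) \ge f(P')$; submodularity together with $i_q\notin P'$ gives $f(P')+f(\{i_q\}) \ge f(P'\cup\{i_q\})$; and, since $i_q\in\OPT_C$ forces $s(i_q)\le C$, i.e.\ $i_q\in I_C$, we have $f(\{i_q\})\le f(\{i^C\})$. Chaining these with the displayed inequality gives $f(\mathcal{P}^1(C))+f(\mathcal{P}^2(C)) \ge f(P')+f(\{i^C\}) \ge f(P')+f(\{i_q\}) \ge f(P'\cup\{i_q\}) > (1-1/e)\,f(\OPT_C)$, provided we have established $f(\mathcal{P}^2(C)) \ge f(\{i^C\})$.

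That last inequality is the one step requiring a dedicated (if routine) argument, and I expect it to be the main obstacle, because $\mathcal{P}^2$ may squander its early tries on items larger than $C$ before it ever succeeds, so one must track its behavior under cancellation. The key observation is that as long as $\mathcal{P}^2$'s packed set is still empty an item is discarded precisely when its size exceeds $C$; hence $i^C$, whose size is at most $C$, is never discarded before the first successful insertion, so at the moment of that insertion the inserted item maximizes the singleton value over a candidate pool still containing $i^C$ and thus has value at least $f(\{i^C\})$. Because the packed set never shrinks thereafter, monotonicity yields $f(\mathcal{P}^2(C)) \ge f(\{i^C\})$, completing the proof.
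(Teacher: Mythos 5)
Your proof is correct and takes essentially the same route as the paper's: apply Lemma \ref{lemma:submodular_greedy} with $C'=C$ at the first optimal item $i_q$ spilled by $\mathcal{P}^1$, use $s(P')+s(i_q)>C$ to obtain the $(1-1/e)$ factor, and chain monotonicity and submodularity with the bound $f(\mathcal{P}^2(C))\ge f(\{i^C\})\ge f(\{i_q\})$. The only difference is that you justify $f(\mathcal{P}^2(C))\ge f(\{i^C\})$ by analyzing $\mathcal{P}^2$'s first successful insertion, whereas the paper simply asserts $i^C\in\mathcal{P}^2(C)$; your more careful version is in fact the safer statement (the paper's assertion holds only up to tie-breaking), but the substance of the argument is the same.
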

\begin{proof}
Let $\mathcal{P}^1$ (respectively, $\mathcal{P}^2$) be the adaptive policy in Algorithm \ref{alg:submodular_universal} when the coin comes up head (respectively, tail). 
Suppose that the given capacity is $C$.
The expected value of the output by Algorithm \ref{alg:submodular_universal} is $\ALG_C = (f(\mathcal{P}^1(C)) + f(\mathcal{P}^2(C)))/2$. 
If $\OPT_C\subseteq \mathcal{P}^1(C)$, then we get $\ALG_C\ge f(\mathcal{P}^1(C))/2\ge f(\OPT_C)/2$.
Assume that $\OPT_C\not\subseteq \mathcal{P}^1(C)$.
Let $(i_1,i_2,\dots,i_n)$ be the greedy order for $(I, \emptyset)$ with capacity $C$. 
Let $q$ be the smallest index such that $i_q \in \OPT_C$ and $i_q \not\in \mathcal{P}^1(C)$. 
We have $f(\mathcal{P}^2(C))\ge f(\{i^C\})\ge f(\{i_q\})$ because $i^C \in \mathcal{P}^2(C)$ and $i_q \in I_C$.
By the monotonicity and submodularity of $f$, we have
\begin{align*}
  \ALG_C&
  \ge \frac{f(\mathcal{P}^1(C) \cap \{i_1,\dots,i_{q-1}\})+f(\{i_q\})}{2} 
  \ge \frac{f((\mathcal{P}^1(C) \cap \{i_1,\dots,i_{q-1}\}) \cup \{i_q\})}{2}.
\end{align*}
Note that by the choice of $q$, we have $(\mathcal{P}^1(C) \cap \{i_1, \ldots, i_{q-1}\}) \cup \{i_q\} = (\mathcal{P}^1(C) \cup \OPT_{C}) \cap \{i_1, \ldots, i_{q}\}$ and $s((\mathcal{P}^1(C) \cap \{i_1, \ldots, i_{q-1}\}) \cup \{i_q\}) > C$.
Thus, Lemma \ref{lemma:submodular_greedy} implies that ${f((\mathcal{P}^1(C) \cup \OPT_C) \cap \{i_1, \ldots, i_{q}\})} \ge {(1-1/e)}f(\OPT_C)$, and hence we have $\ALG_C \geq {(1-1/e)}f(\OPT_C)/2$. 
\end{proof}

\subsection{Deterministic $2(1-1/e)/21$-robust policy}
\label{sec:det_adaptive}

In this subsection, we present a deterministic adaptive policy for \submaxUC by modifying Algorithm \ref{alg:submodular_universal}. 
To this end, let us review the result of Disser et al.~\cite{DisserKMS17} for \knapsackUC, which is identical to \submaxUC with modular objective functions. 
They obtained a deterministic $1/2$-robust universal policy for \knapsackUC based on the greedy order for $(I, \emptyset)$ with the given capacity. 
Their policy first tries to insert items with large values, which are called \emph{swap items}. 
For a greedy order $(i_1, \ldots, i_n)$, an item $i_j$ is called a swap item if $f(\{i_j\})\ge f(\mathcal{P}^1(C))$ for some capacity $C$ such that $i_j$ is the first item that overflows the knapsack when the items are packed in the greedy order. 
The key property is that the greedy order does not depend on the capacity $C$ when $f$ is modular. 
This enables them to determine swap items from the unique greedy order, and to obtain a deterministic universal policy. 

On the other hand, it is hard to apply their idea to our purpose. 
The difficulty is that the greedy order varies according to the capacity when $f$ is submodular. 
Thus, the notion of swap items is not suitable in \submaxUC for choosing the items that should be tried first. 
In this paper, we introduce \emph{\sv{} items}, which are items $i$ satisfying $f(\{i\}) \geq 2\cdot f(\OPT_{s(i)/2})~(=2\cdot \max\{f(X) \colon s(X) \leq s(i)/2\})$. 
In the design of our algorithm, we use a polynomial-time $\app$-approximation algorithm that computes $f(\OPT_{s(i)/2})$;
for example, $\app =1-1/e$~\cite{Sviridenko04}.
Our algorithm calculates the set $\SV$ of the \sv{} items in a sense of $\app$-approximation.
To be precise, it holds that $f(\{i\})\ge 2\app\cdot f(\OPT_{s(i)/2})$ for any item $i\in \SV$ and $f(\{i\})\le 2\cdot f(\OPT_{s(i)/2})$ for any item $i \not\in S$.

Our algorithm first tries to insert items in $\SV$ until one of these items fits the knapsack (or all the items have been canceled) and then it executes Algorithm \ref{alg:greedy1} with the remaining items.
We remark that, unlike the algorithm for \knapsackUC~\cite{DisserKMS17}, 
our algorithm executes Algorithm \ref{alg:greedy1} once a \sv{} item fits the knapsack.
We summarize our algorithm in Algorithm \ref{alg:submodular_det}.

\begin{algorithm}
\caption{Deterministic $2\app/21$-robust policy}\label{alg:submodular_det}
$U \ot \emptyset$, $R\ot I$, $\SV\ot \emptyset$\;
\ForEach{$i\in I$}{
  Let $L$ be a $\app$-approximate solution to $\max\{f(X)\colon s(X)\le s(i)/2,~X\subseteq I\}$\;
  \lIf{$f(\{i\})\ge 2 f(L)$}{$\SV\ot \SV\cup\{i\}$}
}
\While{$U=\emptyset$ and $\SV\cap R\ne\emptyset$}{
  let $i\in\argmax\{f(\{i\})\colon i\in \SV\cap R\}$\;\label{alg:sd_max}
  \lIf(\tcp*[h]{left subtree}){$i$ fits the knapsack (i.e., $s(i)\le C$)}{$U\ot \{i\}$}
  \lElse(\tcp*[h]{right subtree}){discard $i$}
  $R\ot R\setminus\{i\}$\;
}
execute Algorithm \ref{alg:greedy1} for $(R\cup U, U)$\;\label{alg:sd_greedy}
\end{algorithm}

Note that our algorithm constructs $\SV$ and decides which item to try in polynomial time. 
In the rest of this subsection, we let $R$, $U$, and $\SV$ denote the item sets at the beginning of line~\ref{alg:sd_greedy}.
Then $U$ is empty if $\SV\cap I_C=\emptyset$, and $U$ consists of exactly one item $i^*\in\argmax\{f(\{i\})\colon i\in I_C\cap \SV\}$ otherwise.
The following theorem is the main result of this section. 

\begin{theorem}\label{thm:det_univ_policy}
Algorithm \ref{alg:submodular_det} using a $\app$-approximation algorithm in line 3 is a $\min\{2\app/21,\allowbreak (1-1/\sqrt[3]{e})/3\}$-robust universal policy. 
In particular, it is $2(1-1/e)/21 > 0.060$-robust when $\app=1-1/e$, and it is $(1-1/\sqrt[3]{e})/3 > 0.094$-robust when $\app=1$.
\end{theorem}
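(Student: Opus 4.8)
The plan is to fix the realized capacity $C$, let $X$ be the item set finally returned by Algorithm~\ref{alg:submodular_det}, and prove that $f(X)\ge\beta\cdot f(\OPT_C)$ where $\beta=\min\{2\app/21,\,(1-1/\sqrt[3]{e})/3\}$. First I would record the structure at line~\ref{alg:sd_greedy}: every item of $I$ that is outside $R\cup U$ there is a \sv{} item that was tried in the while loop but did not fit, hence has size exceeding $C$, so it belongs to no set of total size at most $C$; consequently $\OPT_{C'}\subseteq R\cup U$ for every $C'\le C$, and $\OPT_{C'}$ is still an optimal set of size $\le C'$ within $R\cup U$. In particular Lemma~\ref{lemma:submodular_greedy} may be invoked, with any reference capacity $C'\le C$, for the run of Algorithm~\ref{alg:greedy1} on $(R\cup U,U)$ executed in line~\ref{alg:sd_greedy}. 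Recall also that $U=\emptyset$ when $I_C\cap\SV=\emptyset$ and otherwise $U=\{i^*\}$ with $f(\{i^*\})=\max\{f(\{i\})\colon i\in I_C\cap\SV\}$. If the greedy phase returns $X\supseteq\OPT_C$ we are done; so assume $\OPT_C\not\subseteq X$, let $(i_1,\dots,i_m)$ be the greedy order for $(R\cup U,U)$ with capacity $C$, and let $i_q$ be the first index with $i_q\in\OPT_C$ and $i_q\notin X$.

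Next I would establish the main inequality. Because $i_q$ was spilled, the packed set at that moment, namely $U\cup(X\cap\{i_1,\dots,i_{q-1}\})$, together with $i_q$, has total size exceeding $C$; writing $W=(X\cup\OPT_C)\cap\{i_1,\dots,i_q\}$, the ``moreover'' clause of Lemma~\ref{lemma:submodular_greedy} gives $W=(X\cap\{i_1,\dots,i_{q-1}\})\cup\{i_q\}$ and therefore $s(W)>C-s(U)$. Applying Lemma~\ref{lemma:submodular_greedy} with $j=q$ and $C'=C$, and then using subadditivity of $f$ (valid since $f$ is monotone and $f(\emptyset)=0$) to split off the term $f(\{i_q\})$ while bounding $f((X\cap\{i_1,\dots,i_{q-1}\})\cup U)\le f(X)$, yields
\[
 f(X)\ \ge\ \Bigl(1-\exp\bigl(-(C-s(U))/C\bigr)\Bigr)\,f(\OPT_C)\ -\ f(\{i_q\}).
\]
It thus remains to bound $f(\{i_q\})$ from above and to control $s(U)$.

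For the first, I would distinguish whether $i_q$ is \sv{}. If $i_q\in\SV$, then since $i_q\in\OPT_C$ has size $\le C$ and the while loop stops exactly at the largest-value fitting \sv{} item, we have $U\ne\emptyset$ and $f(\{i_q\})\le f(\{i^*\})=f(U)\le f(X)$, so the main inequality becomes $f(X)\ge\tfrac12(1-\exp(-(C-s(U))/C))\,f(\OPT_C)$. If $i_q\notin\SV$, then by the defining property of $\SV$ we have $f(\{i_q\})\le 2f(\OPT_{s(i_q)/2})$, and I would bound $f(\OPT_{s(i_q)/2})$ by a constant times $f(X)$ via a second use of Lemma~\ref{lemma:submodular_greedy} at the reduced scale $C'=s(i_q)/2\ (\le C/2)$: since the greedy still owns the full capacity $C\ge s(i_q)=2C'$, either $\OPT_{s(i_q)/2}\subseteq X$ (and then $f(X)\ge f(\OPT_{s(i_q)/2})\ge\tfrac12 f(\{i_q\})$), or at the first moment the greedy spills an item of $\OPT_{s(i_q)/2}$ it has already packed more than $C-s(i_q)/2\ge C'$, so Lemma~\ref{lemma:submodular_greedy} applied at that moment shows the packed prefix has $f$-value at least a constant fraction of $f(\OPT_{s(i_q)/2})$ up to a single spilled term, which — treating that spilled item again by whether it is \sv{} — bounds $f(\{i_q\})$ by an absolute constant times $f(X)$. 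The remaining situation, $U=\{i^*\}$ with $s(i^*)$ close to $C$ (exactly where the exponent $1-\exp(-(C-s(U))/C)$ becomes weak), I would handle separately: here $f(X)\ge f(\{i^*\})\ge 2\app\,f(\OPT_{s(i^*)/2})$, and $f(\OPT_{s(i^*)/2})\ge\Omega(1)\cdot f(\OPT_C)$ by a covering argument — $\OPT_C$ has only $O(1)$ items of size larger than $s(i^*)/2$, each of which is either \sv{} (hence of value $\le f(\{i^*\})\le f(X)$) or not \sv{} (hence of value $\le 2f(\OPT_{s(i^*)/2})$), and the remaining items of $\OPT_C$ partition into $O(1)$ blocks of size $\le s(i^*)/2$, so subadditivity gives $f(\OPT_C)\le O(1)\cdot\bigl(f(X)+f(\OPT_{s(i^*)/2})\bigr)$ and hence $f(X)=\Omega(\app)\cdot f(\OPT_C)$.

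Finally I would assemble the pieces: choosing the single threshold separating ``$s(i^*)$ large'' from ``$s(i^*)$ small'' — which simultaneously governs the exponent $1-\exp(-(C-s(U))/C)$ and the number of blocks in the covering argument — makes the two extreme cases balance at $2\app/21$ (the large-\sv{}-item regime, from the covering argument) and $(1-1/\sqrt[3]{e})/3$ (the regime $s(U)\le 2C/3$ together with $f(\{i_q\})\le 2f(X)$, where the main inequality gives $3f(X)\ge(1-e^{-1/3})f(\OPT_C)$), which is precisely the stated $\min$. I expect the main obstacle to be the case $i_q\notin\SV$ together with the large-\sv{}-item case: the superficially natural recursion ``bound $f(\{i_q\})$ by $2f(\OPT_{s(i_q)/2})$, whose first spilled item may again fail to be \sv{}, so recurse on $\OPT_{s(i_q)/4}$, \dots'' must not be allowed to run for $\Theta(\log s(i_q))$ rounds, since the accumulated constant would then cease to be absolute. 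The resolution is that the greedy retains capacity $C\ge s(i_q)$ while the reference sets shrink geometrically, so one step already pins the exponential factor away from $0$; the only genuine obstruction — a single oversized \sv{} item monopolizing the knapsack — is absorbed by the covering argument rather than by the recursion. Pinning down the covering count and the threshold so that everything composes to exactly $2\app/21$ and $(1-1/\sqrt[3]{e})/3$ is then a matter of careful bookkeeping.
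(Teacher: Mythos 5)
Your overall architecture (greedy analysis via Lemma~\ref{lemma:submodular_greedy}, the two properties of \sv{} items, a case split on $s(U)$ versus $C$, and a covering argument when the packed \sv{} item is large) is close in spirit to the paper's, but the quantitative core of your ``small $s(U)$'' regime has a genuine gap. You keep $\OPT_C$ as the reference set and subtract the spilled term, $f(X)\ge\bigl(1-e^{-(C-s(U))/C}\bigr)f(\OPT_C)-f(\{i_q\})$, and then claim $f(\{i_q\})\le 2f(X)$ whenever $s(U)\le 2C/3$. When $i_q\in\SV$ this is fine ($f(\{i_q\})\le f(\{i^*\})\le f(X)$), but when $i_q\notin\SV$ your proposed second application of Lemma~\ref{lemma:submodular_greedy} at scale $C'=s(i_q)/2$ does not deliver it: the greedy prefix packed before the first spill of $\OPT_{s(i_q)/2}$ is only guaranteed to have size exceeding $C-s(U)-s(i_q)/2$, which is negative when $s(U)$ is near $2C/3$ and $s(i_q)$ near $C$, so the bound is vacuous; if you instead include that spilled item you must bound its value in turn, and the recursion you yourself flag multiplies a factor $2$ per level and diverges --- the remark that ``one step pins the exponential factor away from $0$'' does not address this, since the problem is the accumulating single-item terms, not the exponent. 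Even retreating to the threshold $s(U)\le C/3$, this route yields at best $f(\{i_q\})\le \frac{2}{1-e^{-1/3}}f(X)\approx 7.05\,f(X)$ and hence roughly $0.060\cdot f(\OPT_C)$ in that branch, which falls short of the claimed $(1-1/\sqrt[3]{e})/3>0.094$; in particular the $\app=1$ statement of the theorem is not reached by your outline.

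The paper's proof avoids the spilled-item subtraction altogether, and this is the idea your proposal is missing. For $s(U)<C/3$ it first proves the structural inequality $f(\OPT_C)\le 3f(\OPT_{C/2})$ (Lemma~\ref{lemma:det_univ_policy2}, which needs $s(U)\le C/2$ and uses the \sv{}/non-\sv{} dichotomy plus a three-block partition), and then applies Lemma~\ref{lemma:submodular_greedy} with reference capacity $C/2$ to the greedy prefix \emph{strictly before} the first spilled item of $\OPT_{C/2}$: since that item has size at most $C/2$, the prefix has size greater than $C-s(U)-C/2\ge C/6$, giving $f(X)\ge(1-1/\sqrt[3]{e})f(\OPT_{C/2})$ with no $f(\{i_q\})$ term at all. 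For $s(U)\ge C/3$ it proves $f(\OPT_C)\le 7f(\OPT_{s(U)})$ and $f(\OPT_{s(U)})\le\frac{3}{2\app}f(\{i^*\})$ by a case analysis on the at most two items of $\OPT_C$ larger than $s(U)$; this two-step decomposition through $\OPT_{s(U)}$ is what yields exactly $21/(2\app)$, whereas your one-shot covering at scale $s(i^*)/2$ would need more blocks and gives a worse constant. As written, the proposal therefore does not establish the stated robustness ratios.
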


We describe the proof idea. 
We discuss the behavior of Algorithm~\ref{alg:submodular_det}
when the given capacity is $C$.
Since the output of Algorithm \ref{alg:submodular_det} is $\mathcal{P}^1(C)$ for $(R\cup U, U)$, one can think of a similar proof to the one of Theorem \ref{thm:unknown_rand_adaptive}. 
However, we may not be able to use the value $f(\{i_q\})$ in the evaluation of $f(\mathcal{P}^1(C))$ here because $\mathcal{P}^1(C)$ may not contain any items that bound $f(\{i_q\})$.
We show the theorem using a different approach. 
A basic idea is to divide $\OPT_C$ into several subsets $A_1, \ldots, A_k$ and derive a bound $f(\OPT_C) \leq f(A_1) + \cdots + f(A_k)$ by the submodularity of $f$. 
A key idea is to evaluate each $f(A_i)$ using properties of \sv{} items, which are shown as the following two lemmas. 

\begin{lemma}\label{lemma:det}
It holds that $f(\{i\}) \leq  \max \{ f(U), 2f(\OPT_{s(i)/2})\}$ for any item $i \in I_C$.
\end{lemma}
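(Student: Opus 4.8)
The plan is a two-case argument according to whether the item $i \in I_C$ lies in the set $\SV$ of \sv{} items computed in Algorithm~\ref{alg:submodular_det}.

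First, in the case $i \notin \SV$, I would invoke the fact already recorded from the construction of $\SV$ (using that the $\app$-approximate solution $L$ to $\max\{f(X)\colon s(X)\le s(i)/2,\ X\subseteq I\}$ is feasible, hence $f(L)\le f(\OPT_{s(i)/2})$): namely $f(\{i\}) \le 2f(\OPT_{s(i)/2})$. This is at most $\max\{f(U),\,2f(\OPT_{s(i)/2})\}$, so this case is immediate.

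Next, in the case $i \in \SV$, I would use the description of $U$ established just before the lemma. Since $i\in\SV\cap I_C\ne\emptyset$, the set $U$ at the start of line~\ref{alg:sd_greedy} is the singleton $\{i^*\}$ with $i^*\in\argmax\{f(\{j\})\colon j\in I_C\cap\SV\}$; indeed the while loop scans the \sv{} items in nonincreasing order of their singleton value and stops at the first one that fits the knapsack, equivalently the first one lying in $I_C$. As $i$ itself belongs to $I_C\cap\SV$, this yields $f(U)=f(\{i^*\})\ge f(\{i\})$, and therefore $f(\{i\}) \le f(U) \le \max\{f(U),\,2f(\OPT_{s(i)/2})\}$, completing the proof.

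I do not expect any genuine obstacle: the bound follows directly once the case split is made, and the only ingredient requiring a moment's care --- the exact content of $U$ at the start of line~\ref{alg:sd_greedy} --- has already been extracted from the structure of the while loop in Algorithm~\ref{alg:submodular_det}.
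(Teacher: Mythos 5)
Your proof is correct and follows essentially the same route as the paper: a case split on whether $i\in \SV$, using $f(\{i\})\le 2f(\OPT_{s(i)/2})$ from the construction of $\SV$ when $i\notin \SV$, and $f(\{i\})\le f(\{i^*\})=f(U)$ from the maximality of $i^*$ over $I_C\cap\SV$ when $i\in\SV$. The only difference is that you spell out explicitly why $U=\{i^*\}$ (the structure of the while loop), which the paper records just before the lemma and takes for granted in its one-line proof.
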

\begin{proof}
The lemma follows because $f(\{i\}) \leq f(U)$ if $i \in \SV$ and $f(\{i\}) \leq 2 f(\OPT_{s(i)/2})$ if $i \not\in \SV$. 
\end{proof}

\begin{lemma}\label{lemma:det_univ_policy2}
Let $s^* = s(U)$. 
If $s^* \leq C/2$,
then, for any number $x \in [s^*, C/2]$, 
it holds that $f(\OPT_{2x})\le 3  f(\OPT_{x})$. 
\end{lemma}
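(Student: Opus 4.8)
The plan is to bound $f(\OPT_{2x})$ by writing $\OPT_{2x}$ as a union of at most three item sets, each of total size at most $x$, and then to invoke subadditivity of $f$ (a standard consequence of monotone submodularity together with $f(\emptyset)=0$): if $\OPT_{2x}=A_1\cup A_2\cup A_3$ with $s(A_t)\le x$ for each $t$, then $f(\OPT_{2x})\le f(A_1)+f(A_2)+f(A_3)\le 3f(\OPT_x)$, since each $A_t$ is feasible for capacity $x$. First I would dispose of the trivial case $s(\OPT_{2x})\le x$, in which $\OPT_{2x}$ itself is feasible for capacity $x$.

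Next, I would use $s(\OPT_{2x})\le 2x$ to conclude that $\OPT_{2x}$ contains at most one item of size larger than $x$. If there is such an item $e$, I isolate it: the remainder $\OPT_{2x}\setminus\{e\}$ has size $s(\OPT_{2x})-s(e)<2x-x=x$, so $f(\OPT_{2x}\setminus\{e\})\le f(\OPT_x)$, and it suffices to show $f(\{e\})\le 2f(\OPT_x)$. Since $s(e)\le s(\OPT_{2x})\le 2x\le C$ we have $e\in I_C$, so Lemma~\ref{lemma:det} gives $f(\{e\})\le\max\{f(U),\,2f(\OPT_{s(e)/2})\}$; here $s(e)/2\le x$ yields $f(\OPT_{s(e)/2})\le f(\OPT_x)$, and because $s^*\le x$ the set $U$ (which is empty or a single item of size $s^*$) is feasible for capacity $x$, so $f(U)\le f(\OPT_x)$. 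Hence $f(\{e\})\le 2f(\OPT_x)$ and $f(\OPT_{2x})\le 2f(\OPT_x)+f(\OPT_x)=3f(\OPT_x)$.

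In the remaining case every item of $\OPT_{2x}$ has size at most $x$; fixing an arbitrary order of the items of $\OPT_{2x}$, I take the longest prefix $A_1$ with $s(A_1)\le x$ (proper and nonempty, since $s(\OPT_{2x})>x$ while each item has size $\le x$), let $A_2$ consist of the single next item, and let $A_3$ be the rest. Maximality of $A_1$ gives $s(A_1)+s(A_2)>x$, hence $s(A_3)=s(\OPT_{2x})-s(A_1)-s(A_2)<2x-x=x$, while $s(A_2)\le x$ by the case assumption; so all three parts are feasible for capacity $x$ and subadditivity finishes the argument.

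I expect the only real subtlety — and the reason the bound is $3$ rather than $4$ — to be the case split itself: one must \emph{not} isolate a large item and then split the remainder into two pieces (that would cost a factor of $4$), but rather split $\OPT_{2x}$ directly into three pieces when no item exceeds $x$, and exploit that the complement of a large item is automatically of size below $x$ when one item does exceed $x$. The appearance of $U$ and of the hypothesis $s^*\le C/2$ (which guarantees $s^*\le x$) is exactly what is needed to invoke Lemma~\ref{lemma:det} in the large-item case; everything else is routine.
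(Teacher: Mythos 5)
Your proof is correct and follows essentially the same route as the paper: isolate the (at most one) item of size exceeding $x$ and bound its singleton value via the \sv{}-item machinery, otherwise split $\OPT_{2x}$ into three parts of size at most $x$ and use subadditivity. The only difference is cosmetic --- the paper handles the large item separately according to whether it lies in $\SV$ (using the ordering of the while loop directly), whereas you merge both subcases by invoking Lemma~\ref{lemma:det} together with $f(U)\le f(\OPT_x)$ from $s^*\le x$, which is a valid streamlining.
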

\begin{proof}
First, suppose that $\OPT_{2x}$ contains some item $i$ with $s(i) > x$ and $i\in \SV$. 
Then we have $f(\OPT_{2x}) \le f(\{i\})+f(\OPT_{2x}\setminus\{i\})$ by the submodularity of $f$.
Since the existence of item $i$ implies $I_C \cap S\neq \emptyset$,
$U$ consists of exactly one item from $\SV$. We denote this item by $i^*$.
  Any item $i' \in \SV$ with $f(\{i'\}) > f(\{i^*\})$ does not fit the knapsack with capacity $C \geq 2x$. 
  This implies that $f(\{i\})\le f(\{i^*\})$. 
 Moreover, we have $f(\{i^*\}) \leq f(\OPT_{x})$ because $s(i^*) = s^* \leq x$. 
 Hence, $f(\{i\}) \leq f(\OPT_x)$ follows.
On the other hand, $f(\OPT_{2x}\setminus\{i\}) \leq f(\OPT_{2x-s(i)})\leq f(\OPT_x)$ holds, where the last inequality follows from $s(i)>x$.
Therefore, we see that
\begin{align*}
f(\OPT_{2x})\leq f(\{i\})+f(\OPT_{2x}\setminus\{i\})\le  2 f(\OPT_{x}).
\end{align*}
  
Second, suppose that $\OPT_{2x}$ contains some item $i$ with $s(i) > x$ and $i\not\in \SV$. 
  Recall that $i \not\in \SV$ implies $f(\{i\} ) \leq 2 f(\OPT_{s(i)/2})$. 
  Since $s(i)/2 \leq x < s(i)$, we see that 
  \begin{align*}
  f(\OPT_{2x}) \le f(\{i\})+f(\OPT_{2x}\setminus\{i\}) \le 2 f(\OPT_{s(i)/2})+f(\OPT_{2x-s(i)}) \le 3 f(\OPT_{x}).
  \end{align*}
  
Finally, assume that all items in $\OPT_{2x}$ have size at most $x$. 
  We can divide $\OPT_{2x}$ into three sets $A_1, A_2, A_3$ with $s(A_j) \leq x \ (j=1,2,3)$. 
  Therefore, we have 
  \begin{align*}
  f(\OPT_{2x}) \le f(A_1) +f(A_2)+f(A_3)  \le 3 f(\OPT_{x}).
  \end{align*}
The lemma follows from the arguments on the three cases. 
\end{proof}

\begin{proof}[Proof of Theorem \ref{thm:det_univ_policy}]
Let $\mathcal{P}$ be the deterministic policy described as Algorithm \ref{alg:submodular_det}.
Suppose that the given capacity is $C$. 
We remark that $\mathcal{P}(C) = \mathcal{P}^1(C)$ for $(R\cup U, U)$. 
We may assume that $\OPT_C\not\subseteq \mathcal{P}(C)$ since otherwise $f(\mathcal{P}(C))=f(\OPT_C)$.
Let $s^*=s(U)$. 
We branch the analysis into two cases: (a) $s^*<C/3$ and (b) $s^*\ge C/3$. 

\para{Case (a)}
We claim that 
\begin{align*}
f(\mathcal{P}(C))\ge (1-1/\sqrt[3]{e})\cdot f(\OPT_{C})/3.
\end{align*}
Since $s^* < C/3<C/2$, Lemma \ref{lemma:det_univ_policy2} indicates that $f(\OPT_{C/2}) \geq f(\OPT_C)/3$. 
We evaluate $f(\OPT_{C/2})$ by using Lemma \ref{lemma:submodular_greedy} with $C'=C/2$. 
We may assume that $\OPT_{C/2}\not\subseteq\mathcal{P}(C)$; otherwise $f(\mathcal{P}(C)) \geq f(\OPT_{C/2})$ holds, which implies that $f(\mathcal{P}(C)) \geq f(\OPT_C)/3$. 
Let $(i_1,\dots,i_{|R|})$ be the greedy order for $(R\cup U,U)$ with capacity $C$.
Let $q'$ be the smallest index such that $i_{q'} \in \OPT_{C/2}$ and $i_{q'}\not\in \mathcal{P}^1(C)$. 
We denote $Z=\mathcal{P}^1(C)\cap\{i_1,\dots,i_{q'-1}\}$. 
As $s(i_{q'})\le C/2$, we see that $s(Z)>C-s^*-s(i_{q'}) \geq C/6$.
Then Lemma \ref{lemma:submodular_greedy} implies that 
\[f(\mathcal{P}(C))\ge f(Z \cup U)\ge (1-1/\sqrt[3]{e})\cdot f(\OPT_{C/2}),\]
and hence the claim follows.

\para{Case (b)}
In this case, we prove the following two claims. 
Note that $U$ is nonempty since $s^* > 0$. 
Let $i^*$ denote the unique item in $U$.

\begin{claim}
$f(\OPT_C) \le 7f(\OPT_{s^*})$. 
\end{claim}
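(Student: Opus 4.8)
The plan is to upper-bound $f(\OPT_C)$ by cutting $\OPT_C$ into a few pieces, using the subadditivity of $f$ (which follows from submodularity and $f(\emptyset)=0$, giving $f(X_1\cup\dots\cup X_k)\le\sum_j f(X_j)$), and then bounding the value of each piece by a constant multiple of $f(\OPT_{s^*})$. Since $s^*\ge C/3$, we have $C\le 3s^*$, and therefore $\OPT_C$ contains at most two items of size exceeding $s^*$ and at most one item of size exceeding $2s^*$ (two such items would already overflow $C$). Two elementary tools carry out the accounting. First, a \emph{single-item bound}: by Lemma~\ref{lemma:det} together with $f(U)=f(\{i^*\})\le f(\OPT_{s^*})$ (using $s(i^*)=s^*$), every $i\in I_C$ satisfies $f(\{i\})\le\max\{f(\OPT_{s^*}),\,2f(\OPT_{s(i)/2})\}$; in particular $f(\{i\})\le 2f(\OPT_{s^*})$ whenever $s(i)\le 2s^*$. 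Second, a \emph{grouping bound}: any family of items of individual size at most $s^*$ and total size at most $3s^*$ splits, by a next-fit argument (pairs of consecutive bins have total load exceeding $s^*$), into at most five subsets of total size at most $s^*$ each, and every such subset $X$ satisfies $f(X)\le f(\OPT_{s^*})$; the number of subsets drops to three when the total is below $2s^*$ and to one when it is below $s^*$.

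First I would treat the case in which $\OPT_C$ has no item of size exceeding $2s^*$. Write $\OPT_C=M\cup D$, where $M$ is the set of (at most two) items of size in $(s^*,2s^*]$ and $D$ the set of items of size at most $s^*$. Applying the single-item bound to each item of $M$ and the grouping bound to $D$, and observing that two items in $M$ force $s(D)<s^*$ while a single item in $M$ forces $s(D)<2s^*$, each of the sub-cases $|M|\in\{0,1,2\}$ yields $f(\OPT_C)\le 5f(\OPT_{s^*})$.

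The remaining case is that $\OPT_C$ contains a (necessarily unique) item $i$ with $s(i)>2s^*$. Then $s(\OPT_C\setminus\{i\})=C-s(i)<s^*$, so $f(\OPT_C\setminus\{i\})\le f(\OPT_{s^*})$, and it remains to bound $f(\{i\})$. As $s(i)/2\le 3s^*/2$, the single-item bound gives $f(\{i\})\le\max\{f(\OPT_{s^*}),\,2f(\OPT_{3s^*/2})\}$, so the crux is the auxiliary estimate $f(\OPT_{3s^*/2})\le 3f(\OPT_{s^*})$. I would prove this by the same dichotomy: if $\OPT_{3s^*/2}$ contains an item $k$ of size larger than $s^*$, then $k$ is unique, $s(k)\le 3s^*/2\le 2s^*$, and $k\in I_C$ (because here $C\ge s(i)>2s^*>3s^*/2$), so $f(\OPT_{3s^*/2})\le f(\{k\})+f(\OPT_{3s^*/2-s(k)})\le f(\{k\})+f(\OPT_{s^*})\le 3f(\OPT_{s^*})$ by the single-item bound; otherwise every item of $\OPT_{3s^*/2}$ has size at most $s^*$ and the grouping bound (total below $2s^*$, hence three groups) again gives $\le 3f(\OPT_{s^*})$. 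Combining, $f(\{i\})\le 6f(\OPT_{s^*})$, hence $f(\OPT_C)\le 7f(\OPT_{s^*})$.

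I expect the main obstacle to be keeping the constants tight enough to land on exactly $7$: this is what forces the careful next-fit counts in the grouping bound, the auxiliary estimate $f(\OPT_{3s^*/2})\le 3f(\OPT_{s^*})$ that handles a single large non-single-valuable item, and — at each invocation of the single-item bound — the verification that the item in question really lies in $I_C$.
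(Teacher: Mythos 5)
Your proof is correct and follows essentially the same approach as the paper: split $\OPT_C$ according to its items of size exceeding $s^*$, bound each large item via Lemma~\ref{lemma:det} together with $f(U)\le f(\OPT_{s^*})$, and bound the small items by partitioning them into pieces of total size at most $s^*$, with the worst case (a single item of size at least $2s^*$) contributing $6f(\OPT_{s^*})+f(\OPT_{s^*})$. The only cosmetic difference is that where the paper invokes Lemma~\ref{lemma:det_univ_policy2} with $x=s^*$ to get $f(\OPT_{2s^*})\le 3f(\OPT_{s^*})$, you re-derive inline the (slightly tighter, equally sufficient) estimate $f(\OPT_{3s^*/2})\le 3f(\OPT_{s^*})$ by the same dichotomy used in that lemma's proof.
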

\begin{proof}
Let $T'=\{i\in \OPT_{C}\colon s(i)> s^*\}$. 
Since $C\leq 3s^*$, we observe that $T'$ has at most two items. 
We evaluate $f(\OPT_C)$ by splitting $\OPT_C$ depending on $T'$.

Suppose that $T' = \{i\}$ and $2s^* \leq s(i) \ (\leq C)$. 
It follows that $f(\OPT_C) \le f(\{i\})+f(\OPT_{C}\setminus\{i\})$. 
By Lemma \ref{lemma:det}, we have $f(\{i\}) \leq \max\{f(U), 2f(\OPT_{s(i)/2}) \} \leq \max \{ f(\OPT_{s^*}), 2f(\OPT_{s(i)/2})\}$. 
Since $s(i)/2 \leq C/2 < 2s^*$, it follows that $\max \{ f(\OPT_{s^*}), 2f(\OPT_{s(i)/2})\} \leq 2f(\OPT_{2s^*})$, which is at most $6f(\OPT_{s^*})$ by Lemma \ref{lemma:det_univ_policy2} together with $s^* \leq s(i)/2 \leq C/2$. 
Moreover, we have $f(\OPT_{C}\setminus\{i\}) \leq f(\OPT_{C-s(i)}) \leq f(\OPT_{s^*})$
since $C-s(i)\le 3s^*-2s^*= s^*$. 
Thus, we obtain 
		\begin{align*}
		f(\OPT_C) \le f(\{i\})+f(\OPT_{C}\setminus\{i\}) \le 6 f(\OPT_{s^*})+f(\OPT_{s^*}) 
		= 7f(\OPT_{s^*}). 
		\end{align*}
		
Assume that $T'=\{i\}$ and $s(i)< 2s^*$. 
We observe that $f(\{i\}) \leq 2f(\OPT_{s^*})$ by Lemma \ref{lemma:det} and $s(i)/2 < s^*$. 
Note that all items in $\OPT_{C}\setminus\{i\}$ have size at most $s^*$ and their total size is at most $2s^*$. 
Thus we can divide $\OPT_{C}\setminus\{i\}$ into three sets $A_1, A_2, A_3$ with $s(A_j) \leq s^* \ (j=1,2,3)$. 
Hence, we have 
		\begin{align*}
		f(\OPT_C) \le f(\{i\})+f(\OPT_{C}\setminus\{i\}) \le 2f(\OPT_{s^*})+3f(\OPT_{s^*})
		= 5f(\OPT_{s^*}).
		\end{align*}

If $T'=\{i,i'\}$, then $s(i),s(i')<2s^*$ and $C-s(i)-s(i') < s^*$. 
In this case, 
by Lemma~\ref{lemma:det} and $s(i)/2,s(i')/2 < s^*$,
we have $f(\{i\}), f(\{i'\}) \leq 2f(\OPT_{s^*})$.
This implies that
		\begin{align*}
		f(\OPT_C)
		\le f(\{i\})+f(\{i'\})+f(\OPT_{C}\setminus\{i,i'\}) 
		\le (2+2+1)f(\OPT_{s^*})
=5f(\OPT_{s^*}). 
		\end{align*}

Finally, if $T'=\emptyset$, i.e., $s(i)\le s^*$ for all $i\in\OPT_{C}$, then 
we can divide $\OPT_C$ into five sets $A_1, \ldots, A_5$ with $s(A_j) \leq s^* \ (\forall j)$, and hence we have $f(\OPT_C)
		\leq f(A_1) + \cdots + f(A_5) 
		\le 5f(\OPT_{s^*})$. 
Therefore, the claim holds. 
\end{proof}

\begin{claim}
$f(\OPT_{s^*})\le \frac{3}{2\app}\cdot f(\{i^*\})$. 
\end{claim}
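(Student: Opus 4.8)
The plan is to bound $f(\OPT_{s^*})$ in terms of $f(\OPT_{s^*/2})$ and then feed in the defining inequality of the \sv{} item $i^*$. Since $i^*\in\SV$, the construction of $\SV$ guarantees $f(\{i^*\})\ge 2\app\cdot f(\OPT_{s(i^*)/2})=2\app\cdot f(\OPT_{s^*/2})$, equivalently $f(\OPT_{s^*/2})\le\frac{1}{2\app}\,f(\{i^*\})$. So, modulo one exceptional case, it suffices to prove $f(\OPT_{s^*})\le 3\,f(\OPT_{s^*/2})$, which would give the claim directly.

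First I would observe that $s^*=s(i^*)\le C$ because $i^*$ fit the knapsack, so every $i\in\OPT_{s^*}$ satisfies $s(i)\le s^*\le C$ and hence $i\in I_C$; this is what will let us compare items of $\OPT_{s^*}$ with $i^*$. Then I would split $\OPT_{s^*}$ according to whether it contains an item of size larger than $s^*/2$ (there is at most one such item). If it does not, then $\OPT_{s^*}$ consists of items of size at most $s^*/2$ with total size at most $s^*$, so it splits into three sets $A_1,A_2,A_3$ of total size at most $s^*/2$ each; by submodularity $f(\OPT_{s^*})\le f(A_1)+f(A_2)+f(A_3)\le 3\,f(\OPT_{s^*/2})$, and we are done.

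Next I would treat the case where $\OPT_{s^*}$ contains a (unique) item $i$ with $s(i)>s^*/2$. By submodularity $f(\OPT_{s^*})\le f(\{i\})+f(\OPT_{s^*}\setminus\{i\})$, and since $s(\OPT_{s^*}\setminus\{i\})\le s^*-s(i)<s^*/2$ we get $f(\OPT_{s^*}\setminus\{i\})\le f(\OPT_{s^*/2})$ by monotonicity of the optimal value in the capacity. If $i\notin\SV$, then $f(\{i\})\le 2\,f(\OPT_{s(i)/2})\le 2\,f(\OPT_{s^*/2})$ (using $s(i)\le s^*$), so $f(\OPT_{s^*})\le 3\,f(\OPT_{s^*/2})\le\frac{3}{2\app}\,f(\{i^*\})$. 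If $i\in\SV$, then $i\in I_C\cap\SV$, so the maximality in the choice of $i^*$ gives $f(\{i\})\le f(\{i^*\})$; combining with $f(\OPT_{s^*/2})\le\frac{1}{2\app}\,f(\{i^*\})$ yields $f(\OPT_{s^*})\le\bigl(1+\tfrac{1}{2\app}\bigr)f(\{i^*\})\le\frac{3}{2\app}\,f(\{i^*\})$, where the last inequality uses $\app\le 1$.

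The step I expect to be the main obstacle is precisely the subcase $i\in\SV$ with $s(i)>s^*/2$: there the \sv{} inequality for $i$ only lower-bounds $f(\{i\})$ and so is useless for shrinking it, and one is forced instead to use that $i^*$ maximizes $f(\{\cdot\})$ over $I_C\cap\SV$ — which is exactly why the bookkeeping $s(i)\le s^*\le C$ matters. The remaining ingredients (submodularity, monotonicity of $C\mapsto f(\OPT_C)$, and the fact that items of size at most $x$ with total at most $2x$ pack into three parts of size at most $x$) are routine and are already used elsewhere in the proof.
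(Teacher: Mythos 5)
Your proof is correct and follows essentially the same route as the paper's: bound $f(\OPT_{s^*})$ by $3f(\OPT_{s^*/2})$ (or by $f(\{i^*\})+f(\OPT_{s^*/2})$ when the large item is \sv{}, using the maximality of $i^*$ over $I_C\cap\SV$), then apply $f(\{i^*\})\ge 2\app\, f(\OPT_{s^*/2})$ and $\app\le 1$. The only cosmetic difference is that the paper cases on the largest item of $\OPT_{s^*}$ and splits $\OPT_{s^*}\setminus\{i\}$ into two parts rather than splitting $\OPT_{s^*}$ into three, which is equivalent.
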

\begin{proof}
Recall that $f(\{i^*\})\ge 2 \app\cdot f(\OPT_{s^*/2})$.
Take an arbitrary item $i$ with the largest size in $\OPT_{s^*}$.

Assume that $s(i)>s^*/2$ and $i\in \SV$. 
We have $f(\{i\}) \leq \max \{ f(\{i'\}) \colon i' \in I_C \cap \SV \} = f(\{i^*\})$ since $i \in \SV$ and $s(i) \leq s^* \leq C$. 
Moreover, $f(\OPT_{s^*}\setminus\{i\})\leq f(\OPT_{s^*-s(i)})\leq f(\OPT_{s^*/2})$ follows from $s(i)> s^*/2$.
Hence, we have
$f(\OPT_{s^*})
\le f(\{i\})+f(\OPT_{s^*}\setminus\{i\})
\le f(\{i^*\})+f(\OPT_{s^*/2})
\le \left(1+\frac{1}{2\app}\right)f(\{i^*\})$. 

Suppose that $s(i)>s^*/2$ and $i\not\in \SV$. Then, 
 $f(\{i\})\le 2 f(\OPT_{s(i)/2})\le 2 f(\OPT_{s^*/2})$ holds by $i\not\in \SV$ and $s(i)\le s^*$,
and 
$f(\OPT_{s^*}\setminus\{i\})\leq  f(\OPT_{s^*/2})$ holds by $s(i)> s^*/2$.
These imply that $f(\OPT_{s^*})\le f(\{i\})+f(\OPT_{s^*}\setminus\{i\})
\le 2 f(\OPT_{s^*/2})+f(\OPT_{s^*/2})
\le \frac{3}{2\app}f(\{i^*\})$.   

If $s(i)\le s^*/2$, then we can divide $\OPT_{s^*}\setminus\{i\}$ into two sets $A_1,A_2$ such that $s(A_j)\le s^*/2$ $(j=1,2)$, and hence it follows that
$f(\OPT_{s^*})\le f(\OPT_{s(A_1)})+f(\OPT_{s(A_2)})+f(\OPT_{s(i)})\le 3f(\OPT_{s^*/2})\le \frac{3}{2\app}\cdot f(\{i^*\})$. 
Therefore, the claim follows since $3/(2\app) \geq 1+{1}/ ({2\app})$. 
\end{proof}

Combining these claims give
 $f(\mathcal{P}(C)) \geq (2\app/21) \cdot f(\OPT_C)$.
Hence the theorem is proven.
\end{proof}

\subsection{Randomized $(1-1/\sqrt[4]{e})/2$-robust universal policy}
\label{sec:rand_universal}

In this subsection, we devise a randomized $(1-1/{\sqrt[4]{e}})/2$-robust universal policy by modifying Algorithm \ref{alg:submodular_universal}. 
Note that we cannot use directly Algorithm \ref{alg:greedy1} in universal policies, because they do not use the observation while packing items so far. 
Instead we guess the capacity by the doubling strategy and emulate the execution of Algorithm \ref{alg:greedy1}. 
Our algorithm iteratively finds an item set $X$ maximizing $f(X)$ under a bound $C'$ on the total size $s(X)$, and then appends items of $X$ to the sequence. 
The bound $C'$ is set by the algorithm according to the input items. 
To compute the set $X$, we use Algorithm \ref{alg:greedy1} in which the capacity is set to be $C'$. 
We double the bound after each iteration. 

Also, we cannot use the other adaptive policy $\mathcal{P}^2$ in Algorithm \ref{alg:submodular_universal}. 
We remark that the proof of Theorem \ref{thm:unknown_rand_adaptive} uses only the fact that $f(\mathcal{P}^2(C))$ contains $i^C$ regarding to the policy $\mathcal{P}^2$. 
This implies that in the worst case analysis, there is no difference between $\mathcal{P}^2$ and a universal policy that inserts items based on the decreasing order according to the values of $f$. 
Thus, we replace $\mathcal{P}^2$ with the universal policy. 
Our algorithm is summarized in Algorithm \ref{alg:submodular_universal_rand_perm}. 

\begin{algorithm}[t]
\caption{Randomized $(1-1/\sqrt[4]{e})/2$-robust universal policy}\label{alg:submodular_universal_rand_perm}
\SetKwInOut{Input}{Input}
\SetKwInOut{Output}{Output}
$\Pi\ot ()$\;
flip a coin\;
\If{head}{
  $l\ot 1$, $s_{\min}\ot \min_{i\in I}s(i)$\;
  \For{$k\ot 0$ \KwTo $\lceil\log_2 (\sum_{i\in I}s(i)/s_{\min})\rceil$}{
    let $Y^{(k)}$ be the output $\mathcal{P}^1(2^{k} \cdot s_{\min})$ of the policy given in Algorithm \ref{alg:greedy1} for $(I, \emptyset)$\;
    \lForEach{$i\in Y^{(k)}\setminus\bigcup_{j=0}^{k-1}Y^{(j)}$}{$\Pi_l\ot i$, $l\ot l+1$}
  }
}
\lElse{
  let $\Pi$ be the decreasing order of items $i \in I$ in value $f(\{i\})$
 }
 \Return $\Pi$\;
\end{algorithm}

We remark that Algorithm \ref{alg:submodular_universal_rand_perm} constructs a sequence of items in polynomial time with respect to the input size. 
We can prove the following result by using Lemma \ref{lemma:submodular_greedy}. 

\begin{theorem}\label{thm:unknown_rand_univ}
	Algorithm \ref{alg:submodular_universal_rand_perm} is a $(1-1/{\sqrt[4]{e}})/2 > 0.110$-robust randomized universal policy.
\end{theorem}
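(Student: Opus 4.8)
The plan is to mirror the proof of Theorem~\ref{thm:unknown_rand_adaptive}. Write $\Pi^h$ for the sequence produced when the coin comes up head (the ``doubling‑greedy'' sequence built from the sets $Y^{(k)}=\mathcal{P}^1(2^k s_{\min})$) and $\Pi^t$ for the decreasing‑$f$ sequence produced on tail; since the coin is fair, $\mathbb{E}[f(\Pi(C))]=\tfrac12\bigl(f(\Pi^h(C))+f(\Pi^t(C))\bigr)$. Exactly as in the adaptive case, $\Pi^t(C)$ contains $i^C$ (the first item of size at most $C$ in the decreasing order is reached with an empty knapsack, hence packed), so $f(\Pi^t(C))\ge f(\{i^C\})$; this branch will supply the value of an ``overflowing'' optimal item. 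The boundary cases $C<2s_{\min}$ (at most one item fits, so $f(\OPT_C)=f(\{i^C\})$ and the tail branch alone gives ratio $1/2$) and $C\ge\sum_{i\in I}s(i)$ (everything fits, and $\Pi^h(C)=I=\OPT_C$) are immediate, so I would assume $2s_{\min}\le C\le\sum_{i\in I}s(i)$.

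The new ingredient is a structural claim about $\Pi^h$. Let $k$ be the largest integer with $2^{k+1}s_{\min}\le C$; then $C/4<2^k s_{\min}\le C/2$, and $Y^{(k)}$ is among the sets computed by the algorithm. I claim $\Pi^h(C)\supseteq Y^{(k)}$. The items appearing in the first $k+1$ blocks of $\Pi^h$ form, as a set, $\bigcup_{j=0}^k Y^{(j)}$, whose total size is at most $\sum_{j=0}^k 2^j s_{\min}<2^{k+1}s_{\min}\le C$; hence, replaying $\Pi^h$ under capacity $C$, at every insertion within these blocks the current knapsack content together with the new item is a subset of $\bigcup_{j=0}^k Y^{(j)}$ and so has size at most $C$, so no cancellation occurs and every item of $\bigcup_{j=0}^k Y^{(j)}$---in particular every item of $Y^{(k)}$---is actually packed. (This is a short induction on the block index, and is exactly where the doubling schedule is used.)

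Next I would invoke Lemma~\ref{lemma:submodular_greedy} for the run of Algorithm~\ref{alg:greedy1} on $(I,\emptyset)$ that produced $Y^{(k)}=\mathcal{P}^1(2^k s_{\min})$, taking the lemma's first parameter to be $2^k s_{\min}$ (with $U=\emptyset$) and its second parameter $C'$ to be the realized capacity $C$. Let $(i_1,i_2,\dots)$ be the associated greedy order and let $q$ be the smallest index with $i_q\in\OPT_C$ and $i_q\notin Y^{(k)}$. If no such $q$ exists, then $\OPT_C\subseteq Y^{(k)}\subseteq\Pi^h(C)$, so $f(\Pi^h(C))\ge f(\OPT_C)$ and we are done. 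Otherwise, since $i_q$ is examined at step $q$ of that greedy run but is not packed, $s(Y^{(k)}\cap\{i_1,\dots,i_{q-1}\})+s(i_q)>2^k s_{\min}>C/4$; setting $Z=(Y^{(k)}\cap\{i_1,\dots,i_{q-1}\})\cup\{i_q\}$, the ``moreover'' part of Lemma~\ref{lemma:submodular_greedy} identifies $Z$ with $(Y^{(k)}\cup\OPT_C)\cap\{i_1,\dots,i_q\}$, and the lemma with $j=q$ gives $f(Z)\ge\bigl(1-\exp(-s(Z)/C)\bigr)f(\OPT_C)>(1-1/\sqrt[4]{e})\,f(\OPT_C)$.

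Finally I would combine the branches. The head branch gives $f(\Pi^h(C))\ge f(Y^{(k)})\ge f(Y^{(k)}\cap\{i_1,\dots,i_{q-1}\})$ by monotonicity, and the tail branch gives $f(\Pi^t(C))\ge f(\{i^C\})\ge f(\{i_q\})$ since $i_q\in I_C$; averaging and applying submodularity in the form $f(A)+f(B)\ge f(A\cup B)$ yields $\mathbb{E}[f(\Pi(C))]\ge\tfrac12\bigl(f(Y^{(k)}\cap\{i_1,\dots,i_{q-1}\})+f(\{i_q\})\bigr)\ge\tfrac12 f(Z)>\tfrac12(1-1/\sqrt[4]{e})f(\OPT_C)$, which is the claimed bound (numerically $(1-1/\sqrt[4]{e})/2>0.110$). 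I expect the only genuinely delicate point to be the structural claim $\Pi^h(C)\supseteq Y^{(k)}$: one must verify that the geometric‑sum bound really rules out every cancellation through block $k$, and one must keep in mind that $i_q$ itself need not belong to $\Pi^h(C)$---it is the tail branch, not the head branch, that contributes $f(\{i_q\})$. Everything else is a transcription of the Theorem~\ref{thm:unknown_rand_adaptive} argument with $C$ replaced inside Lemma~\ref{lemma:submodular_greedy} by $2^k s_{\min}>C/4$, and it is precisely this halving‑then‑halving that degrades the factor $1-1/e$ to $1-1/\sqrt[4]{e}$.
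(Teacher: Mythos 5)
Your proposal is correct and follows essentially the same route as the paper's proof: you pick the precomputed greedy set whose capacity parameter lies in $(C/4,\,C/2]$ (the paper's $Y^{(k-2)}$), argue via the geometric sum of block sizes that it is entirely packed by the head sequence under capacity $C$, apply Lemma~\ref{lemma:submodular_greedy} to that greedy run against $\OPT_C$ with the overflow bound $s(Z)/C>1/4$, and combine with the tail branch's $f(\{i^C\})\ge f(\{i_q\})$ via submodularity. The only differences are cosmetic (your indexing of $k$, and your explicit treatment of the cases where $q$ does not exist or $C\ge s(I)$, which the paper leaves implicit).
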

\begin{proof}
	Let $\Pi^1$ (respectively, $\Pi^2$) be the sequence of items in $I$ returned by Algorithm \ref{alg:submodular_universal_rand_perm} when the coin comes up head (respectively, tail). 
	Suppose that the given capacity is $C$.
	The expected value of the output of Algorithm \ref{alg:submodular_universal_rand_perm} is $f(\ALG_C) = (f(\Pi^1(C)) + f(\Pi^2(C)))/2$. 
	We assume that $C\ge s_{\min}$ since otherwise $f(\OPT_C)=\ALG_C=0$. 
	Recall that $s_{\min} = \min_{i\in I}s(i)$. 
	Let $k$ be the number satisfying $2^{k-1} \cdot s_{\min}\le
 C<2^k \cdot s_{\min}$.
 We let $i_C \in \argmax \{f(\{i\}) \colon s(i)\leq C\}$.
	
	When $k=1$ (i.e., $C < 2 s_{\min}$), we have $\OPT_C=\{i^C\}$ because we can put only one item into the knapsack in this case.
	We also have $\Pi^2(C)=\{i^C\}$. 
	Hence, it holds that $f(\OPT_C)=f(\Pi^2(C))\le 2\cdot (f(\Pi^1(C)) + f(\Pi^2(C)))/2= 2\cdot f(\ALG_C)$.
	
	In what follows, we assume $k\ge 2$. 
	We observe that the total size of items in $\bigcup_{j=0}^{k-2}Y^{(j)}$ is at most $s\left(\bigcup_{j=0}^{k-2}Y^{(j)}\right)\le \sum_{j=0}^{k-2} s(Y^{(j)}) \leq \sum_{j=0}^{k-2} 2^j\cdot s_{\min}\le 2^{k-1}\cdot s_{\min} \leq C$. 
	Thus, all items in $\bigcup_{j=0}^{k-2}Y^{(j)}$, in particular
 those in $Y^{(k-2)}$, are contained in $\Pi^1(C)$. 
	We also observe that $f(\Pi^2(C)) \geq f(\{i^C\})$ because $i^C \in \Pi^2(C)$. 
	Hence, it holds that $2f(\ALG_C) = f(\Pi^1(C))+f(\Pi^2(C))  \ge f(Y^{(k-2)})+f(\{i^C\})$.

	We denote $C'=2^{k-2}\cdot s_{\min}$. 
	Recall that $Y^{(k-2)}$ is the output of the greedy algorithm for $(I, \emptyset)$ when the capacity is $C'$. 
	Let $(i_1,i_2,\dots,i_{n})$ be the greedy order for
 $(I,\emptyset)$ with capacity $C'$. 
	Let $q$ be the smallest index such that $i_q \in \OPT_C$ and $i_q \not\in Y^{(k-2)}$. 
	Since this definition implies $i_q \in I_C$,
	we have $f(\{i^C\}) \geq f(\{i_q\})$.
	By the monotonicity of $f$, 
	we also have $f(Y^{(k-2)})\geq f(Y^{(k-2)} \cap \{i_1,\dots,i_{q-1}\})$.
	Hence,
	\[
 f(\ALG_C)
	\ge \frac{f(Y^{(k-2)} \cap \{i_1,\dots,i_{q-1}\})+f(\{i_q\})}{2} \ge \frac{f((Y^{(k-2)} \cap \{i_1,\dots,i_{q-1}\}) \cup \{i_q\})}{2}.
	\]
	We evaluate $f((Y^{(k-2)} \cap \{i_1,\dots,i_{q-1}\}) \cup \{i_q\})$ using Lemma \ref{lemma:submodular_greedy}. 
	For notational convenience, we denote $Y' = (Y^{(k-2)} \cup \OPT_C) \cap \{i_1, \ldots, i_q\}$. 
	In a similar way to the proof of Theorem \ref{thm:unknown_rand_adaptive}, we can see that $(Y^{(k-2)} \cap \{i_1,\dots,i_{q-1}\}) \cup \{i_q\} = Y'$ by the choice of $q$. 
	Thus, we have $s(Y') > C'$, and it follows that $s(Y')/C \geq C'/C \geq (2^{k-2}\cdot s_{\min}) / (2^{k}\cdot s_{\min}) = 1/4$. 
	Therefore, by Lemma \ref{lemma:submodular_greedy}, we have $f(\ALG_C) \geq {f(Y')}/{2} \ge \left(1-{1}/{\sqrt[4]{e}}\right)f(\OPT_C)/2$.
\end{proof}

\section{Upper bounds on robustness ratios}
\label{sec:upperbound}

\subsection{Randomized policies for \knapsackUC with cancellation}
\label{sec:hardness_cancel}

In this subsection, we prove that no randomized policy achieves a robustness ratio better than $8/9$
even if the objective function is modular when cancellation is allowed.
It is known that the robustness ratio achieved by any deterministic policies for \knapsackUC is at most $1/2$~\cite{DisserKMS17},
but there was no upper bound on the robustness ratio for randomized policies.

Suppose that there are three items $a,b,c$ whose sizes are $2,3,4$, respectively, and
whose weights are equal to their own sizes.
Recall that the objective value is defined as the sum of the weights of
selected items.
Let $\OPT_{C'}$ be an optimal solution when the capacity is $C'\in\mathbb{R}_+$.
We provide an upper bound for this instance using Yao's principle \cite{yao1977}.
We consider an adversary that submits a probability distribution for the capacity as a mixed strategy. 
Let $C$ be the random variable which represents the capacity.
Then, the robustness ratio for this instance is upper-bounded by
\begin{align*}
  \max_{\mathcal{P}}\min_{C'\in\mathbb{R}_+}\frac{f(\mathcal{P}(C'))}{f(\OPT_{C'})}
  \le \max_{\mathcal{P}}\mathbb{E}_C\left[\frac{f(\mathcal{P}(C))}{f(\OPT_{C})}\right],
\end{align*}
where the maximum is taken over all randomized policies $\mathcal{P}$.

We assume that the adversary submits $C=4$ with probability $4/9$ and $C=5$ with probability $5/9$.
Then, we have
 \begin{align*}
  \max_{\mathcal{P}}\mathbb{E}_C\left[\frac{f(\mathcal{P}(C))}{f(\OPT_{C})}\right]
  =\max_{\mathcal{P}}\left(\frac{4}{9}\cdot\frac{f(\mathcal{P}(4))}{f(\OPT_{4})}+\frac{5}{9}\cdot\frac{f(\mathcal{P}(5))}{f(\OPT_{5})}\right).
 \end{align*}
   Note that  $f(\OPT_4)=4$ and $f(\OPT_4)=5$.
   Hence the right-hand side is equal to
   $\max_{\mathcal{P}}(f(\mathcal{P}(4))+f(\mathcal{P}(5)))/9$.
   In what follows, we prove that this is at most $8/9$.
   Note that the maximum is attained by a deterministic policy.
If a deterministic policy $\mathcal{P}$ chooses item $a$ first, then it can get objective value $2$ when the capacity is $4$.
Thus, we have $f(\mathcal{P}(4))+f(\mathcal{P}(5))\le 2+5=7$.
Similarly, the value is at most $3+5=8$ if it selects $b$ first
and the value is at most $4+4=8$ if it selects $c$ first.
Since $f(\mathcal{P}(4))+f(\mathcal{P}(5))\le \max\{7, 8, 8\} = 8$, we obtain the following theorem. 

\begin{theorem}
Even if cancellation is allowed,
no randomized policy has a robustness ratio better than $8/9 > 0.888$
for \knapsackUC.
\end{theorem}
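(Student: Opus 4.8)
The plan is to invoke Yao's principle \cite{yao1977} against a single, fixed adversary distribution on the capacity, and to show that even with cancellation allowed no deterministic policy beats $8/9$ against it. First I would fix the instance already set up in the discussion above: three items $a,b,c$ of sizes $2,3,4$ whose weights equal their sizes, so that $f$ is modular. The adversary draws capacity $C=4$ with probability $4/9$ and $C=5$ with probability $5/9$. Then the robustness ratio is at most $\max_{\mathcal{P}}\mathbb{E}_C[f(\mathcal{P}(C))/f(\OPT_C)]$, and because this expectation is linear in the mixing weights that define a randomized policy, its maximum is attained by a deterministic policy; using $f(\OPT_4)=4$ and $f(\OPT_5)=5$ it therefore suffices to bound $\tfrac{1}{9}\bigl(f(\mathcal{P}(4))+f(\mathcal{P}(5))\bigr)$ over deterministic policies $\mathcal{P}$.

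The crucial structural point, and the place where the cancellation setting must be handled with care, is the arithmetic of the instance at capacity $4$: every single item fits (each has size at most $4$), but no pair of items fits, since even the smallest pair $\{a,b\}$ has total size $5>4$. Hence, whatever the root item $r$ of $\mathcal{P}$ is, $r$ fits and is packed irrevocably when $C=4$, and afterwards no further item can ever be added no matter how many times the policy cancels and probes again; so $f(\mathcal{P}(4))=f(\{r\})$ exactly. This reduces the whole statement to a three-way case analysis on $r$.

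Then I would run through the cases. If $r=a$ then $f(\mathcal{P}(4))=2$ and $f(\mathcal{P}(5))\le f(\OPT_5)=5$, so the sum is at most $7$. If $r=b$ then $f(\mathcal{P}(4))=3$ and the sum is at most $8$. If $r=c$ then $f(\mathcal{P}(4))=4$, but now when $C=5$ the item $c$ fits and is packed irrevocably, after which neither $a$ ($4+2=6>5$) nor $b$ ($4+3=7>5$) fits, so cancellation is again useless and $f(\mathcal{P}(5))=4$, giving sum $8$. In every case $f(\mathcal{P}(4))+f(\mathcal{P}(5))\le 8$, so $\max_{\mathcal{P}}\tfrac{1}{9}\bigl(f(\mathcal{P}(4))+f(\mathcal{P}(5))\bigr)\le 8/9$, which is the claimed upper bound.

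I expect the only real obstacle to be making the ``cancellation does not help'' step airtight: one must verify, in the $C=4$ branch for every root and in the $C=5$ branch for root $c$, that committing the first fitting item leaves strictly too little residual capacity for any other item, so the decision tree's right-subtree probing after cancellations can never recover value. On this instance that check is immediate from the size inequalities above, and the conceptual content is exactly that the root choice which is greedy-optimal at $C=4$ (namely $c$) is precisely the one that gets stuck at value $4$ when $C=5$, so no single root can be simultaneously good for both capacities.
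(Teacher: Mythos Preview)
Your proposal is correct and follows essentially the same approach as the paper: the same three-item instance, the same adversary distribution $(4/9,5/9)$ on capacities $\{4,5\}$, the same reduction via Yao's principle to bounding $\tfrac{1}{9}(f(\mathcal{P}(4))+f(\mathcal{P}(5)))$ over deterministic policies, and the same three-way case split on the root item yielding sums at most $7,8,8$. Your write-up is slightly more explicit than the paper's in justifying why cancellation cannot help once the root item is packed, but this is the same argument made more carefully rather than a different one.
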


\subsection{Randomized policies for \knapsackUC without cancellation}
\label{sec:hardness_nocancel}

In this subsection, we prove that no randomized policy achieves a constant
robustness ratio
for \knapsackUC
in the setting where cancellation is not allowed.

Let $M$ be a positive integer of at least 2.
We assume that the item set $I$ consists of $n$ items $1,\ldots,n$, and the size and the weight of item $i$ is $M^i$.
The objective function $f$ is defined by $f(S)=\sum_{i \in S}M^i$ for any $S \subseteq I$.
We fix a policy, and show that the robustness ratio of this policy is $O(1/M)$ for this instance if $n = \Omega(M)$.

Since there are $n$ items, at least one item is chosen first by the policy with probability at most $1/n$.
Let $i$ denote such an item. 
Let us consider the case where the capacity is $M^i$.
When the policy does not choose $i$ first (this happens with probability at least $(n-1)/n$), 
the largest objective value achieved by the solution is $\sum_{i'=1}^{i-1}M^{i'} \leq 2M^{i-1}$.
Therefore, the expected objective value of the policy is at most $M^i / n + (n-1)/n \cdot 2M^{i-1}$.
On the other hand,
the optimal solution for this instance consists of only item $i$, which attains the objective value $M^i$.
The gap between these values is $O(1/M)$ if $n=\Omega(M)$.

\begin{theorem}
If cancellation is not allowed,
then no randomized policy achieves a constant robustness ratio for \knapsackUC.
\end{theorem}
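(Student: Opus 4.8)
The plan is to show that, however a randomized policy $\mathcal{P}$ is chosen, there is a family of instances on which its robustness ratio tends to $0$. I would use exactly the instance introduced above: fix an integer $M \ge 2$ and a size $n$, let $I = \{1,\dots,n\}$ with $s(i) = M^i$ and $f(S) = \sum_{i \in S} M^i$ for $S \subseteq I$, and leave $M$ and $n$ to be fixed at the end in terms of the target constant.

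First I would locate a ``bad'' capacity. A randomized policy is a distribution over decision trees, and the root of each tree is a single item, so the item tried first is a random variable taking values in $I$ (or, degenerately, none). The $n$ events ``item $j$ is tried first'' are pairwise disjoint, so their probabilities sum to at most $1$, and hence some item $i \in I$ is tried first with probability at most $1/n$. I would fix such an $i$ and set the capacity to $C = M^i$. Since $M \ge 2$, every item $j > i$ has $s(j) = M^j > M^i = C$ and so is infeasible; therefore the only feasible item of weight at least $M^i$ is $i$ itself, $\OPT_C = \{i\}$, and $f(\OPT_C) = M^i$.

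Next I would bound $\mathbb{E}[f(\mathcal{P}(C))]$ by splitting on the first item. With probability at most $1/n$, item $i$ is tried first, and then $f(\mathcal{P}(C)) \le f(\OPT_C) = M^i$ trivially. Otherwise some $j \ne i$ is tried first: if $j > i$ then $s(j) > C$, so $j$ does not fit and, cancellation being disallowed, the process halts with the empty set; if $j < i$ then $j$ fits, but after packing $j$ the item $i$ can never be added because $M^j + M^i > C$, so the output is contained in $\{1,\dots,i-1\}$ and has weight at most $\sum_{j'=1}^{i-1} M^{j'} = \frac{M^i - M}{M-1} \le 2 M^{i-1}$. Combining the cases, $\mathbb{E}[f(\mathcal{P}(C))] \le \frac{M^i}{n} + 2 M^{i-1}$, so the robustness ratio of $\mathcal{P}$ on this instance is at most $\frac{1}{n} + \frac{2}{M}$. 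Given a target constant $\alpha > 0$, choosing $M > 4/\alpha$ and $n > 2/\alpha$ makes this less than $\alpha$, so $\mathcal{P}$ is not $\alpha$-robust; as $\mathcal{P}$ was arbitrary, no randomized policy has a constant robustness ratio.

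I do not expect a real obstacle; the two points deserving a line of care are (i) the pigeonhole step, which is valid for randomized policies precisely because the distribution of the first item is supported on the finite set $I$, and (ii) the structural observation that, without cancellation, a policy that does not place $i$ on its very first move can never recover it, which is what renders the adversary's unique large item effectively unreachable. An equivalent but more ceremonious route would route the argument through Yao's principle, letting the adversary draw the capacity $M^i$ with $i$ uniform over $\{1,\dots,n\}$; the direct pigeonhole version above is shorter and I would prefer it.
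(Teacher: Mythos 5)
Your proposal is correct and is essentially identical to the paper's proof: the same instance with $s(i)=f(\{i\})=M^i$, the same pigeonhole argument selecting an item $i$ tried first with probability at most $1/n$, the adversarial capacity $C=M^i$, and the same bound $\mathbb{E}[f(\mathcal{P}(C))]\le M^i/n + 2M^{i-1}$ versus $f(\OPT_C)=M^i$, giving ratio $O(1/n+1/M)$. Your extra case analysis (first item larger than $i$ halts the process; first item smaller than $i$ forever blocks $i$) just spells out the step the paper states without elaboration.
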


\subsection{Deterministic policies for the unit size case of \submaxUC}
\label{app.unitcase}
In this subsection, we consider a special case of \submaxUC in which the
size of each item is $1$, i.e., the cardinality constraint case.
We show that no deterministic policy (even one with no computational
restrictions) achieves a robustness ratio better than $(1+\sqrt{5})/4~(>
0.809)$,
and no randomized policy achieves a robustness ratio better than
$(5+\sqrt{5})/8~(>0.904) $ for this case.
We remark that the cancellation is not useful
in the cardinality constraint case.
Also, for the cardinality constraint case of \submaxUC,
a greedy algorithm achieves $(1-1/e)$-robust (-approximation) and it is
known to be the best possible among policies that run in polynomial time.

To present the upper bound on the robustness ratio,
let us construct an instance of the problem.
Suppose that there are three items $a,b,c$, each of whose size is $1$, and the objective function $f$ is given by
\begin{align*}
&f(\emptyset)=0,~f(\{a\})=4,~f(\{b\})=f(\{c\})=1+\sqrt{5},\\
&f(\{a,b\})=f(\{a,c\})=3+\sqrt{5},~f(\{b,c\})=f(\{a,b,c\})=2+2\sqrt{5}.
\end{align*}
Note that the function is monotone submodular and symmetric in $b$ and $c$.
If the policy first packs $a$, then the robustness ratio for capacity $1$ is equal to $f(\{a\}/f(\{a\}) = 1$, and 
the one for capacity $2$ is ${f(\{a,b\})}/{f(\{b,c\})}={(3+\sqrt{5})}/{(2+2\sqrt{5})}={(1+\sqrt{5})}/{4}$. 
Otherwise, i.e., the policy first packs $b$ or $c$,
then the robustness ratio for capacity $1$ is 
$\frac{f(\{b\})}{f(\{a\})}={(1+\sqrt{5})}/{4}$, and the one for capacity $2$ is at least $f(\{a, b\})/f(\{ b, c\}) = {(1+\sqrt{5})}/{4}$. 
Thus, no deterministic policy achieves a robustness ratio better than $(1+\sqrt{5})/4$.

Finally, we prove that no randomized policy achieves a robustness ratio better than $(5+\sqrt{5})/8$.
Let us consider a randomized policy for the above instance that first inserts $a$ with probability $p$.
Then the robustness ratio for capacity $1$ is 
\begin{align*}
\frac{p\cdot f(\{a\})+(1-p)\cdot f(\{b\})}{f(\{a\})}
=\frac{4p+(1-p)(1+\sqrt{5})}{4}
=\frac{(3-\sqrt{5})p+(1+\sqrt{5})}{4}.
\end{align*}
Also, the robustness ratio for capacity $2$ is at most
\begin{align*}
\frac{p\cdot f(\{a,b\})+(1-p)\cdot f(\{b,c\})}{f(\{b,c\})}=\frac{(3+\sqrt{5})p+(2+2\sqrt{5})(1-p)}{(2+2\sqrt{5})}=\frac{4-(3-\sqrt{5})p}{4}.
\end{align*}
Note that the former value is monotone increasing for $p$, and the latter is monotone decreasing for $p$. 
Thus, the robustness ratio of the policy is at most
\begin{align*}
\min\left\{\frac{(3-\sqrt{5})p+(1+\sqrt{5})}{4},\frac{4-(3-\sqrt{5})p}{4}\right\}\le\frac{5+\sqrt{5}}{8}, 
\end{align*}
where the inequality holds when $p=1/2$.

\begin{theorem}
 For \submaxUC with only unit-size items,
 no deterministic policy achieves a robustness ratio better than
 $(1+\sqrt{5})/4$,
 and no randomized policy achieves a robustness ratio better than $(5+\sqrt{5})/8$.
\end{theorem}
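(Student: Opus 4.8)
The plan is to exhibit a single small instance on which every policy is forced to be far from optimal on at least one capacity, and to invoke Yao's principle to upgrade the deterministic bound to a randomized one. I would use three items $a,b,c$, each of unit size, and design a monotone submodular objective $f$ that is symmetric in $b$ and $c$, with $f(\{a\})$ comparatively large but the pair values exhibiting sharply diminishing returns. The only capacities that matter are $C=1$ and $C=2$: with unit sizes and an integer capacity a policy simply packs the first $C$ items of the order it follows, and cancellation is irrelevant because once $C$ items have been successfully inserted every remaining item overflows no matter which one is tried next. Hence, as far as $C\in\{1,2\}$ is concerned, a deterministic policy is determined by which item it tries first (and, if that one fits, which it tries second).

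For the deterministic bound I would choose the values so that the golden number $1+\sqrt5$ makes the two competing constraints coincide; concretely, $f(\{a\})=4$, $f(\{b\})=f(\{c\})=1+\sqrt5$, $f(\{a,b\})=f(\{a,c\})=3+\sqrt5$, and $f(\{b,c\})=f(\{a,b,c\})=2+2\sqrt5$. I would first verify, by a routine check of decreasing marginal returns, that $f$ is monotone and submodular. Then I would case-split on the policy's first item. If it tries $a$ first, then $\OPT_1=\{a\}$ is matched, but at $C=2$ it can get at most $f(\{a,b\})=3+\sqrt5$ against $f(\OPT_2)=f(\{b,c\})=2+2\sqrt5$, a ratio of $(3+\sqrt5)/(2+2\sqrt5)=(1+\sqrt5)/4$. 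If instead it tries $b$ or $c$ first, then already at $C=1$ it gets only $1+\sqrt5$ against $f(\OPT_1)=f(\{a\})=4$, again a ratio $(1+\sqrt5)/4$, and the $C=2$ ratio is no better. So no deterministic policy beats $(1+\sqrt5)/4$.

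For the randomized bound I would apply Yao's principle with an adversary that randomizes over $C\in\{1,2\}$, reducing the question to bounding $\max_{\mathcal{P}}\mathbb{E}_C[f(\mathcal{P}(C))/f(\OPT_C)]$ over deterministic $\mathcal{P}$; by the previous paragraph and the $b,c$-symmetry, it suffices to parametrize a policy by $p:=\Pr[\text{first item tried is }a]$. The expected ratio at $C=1$ is $\big(4p+(1+\sqrt5)(1-p)\big)/4=\big((3-\sqrt5)p+(1+\sqrt5)\big)/4$, which is increasing in $p$, while the expected ratio at $C=2$ is at most $\big((3+\sqrt5)p+(2+2\sqrt5)(1-p)\big)/(2+2\sqrt5)=\big(4-(3-\sqrt5)p\big)/4$, which is decreasing in $p$. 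The robustness ratio is therefore bounded by the pointwise minimum of these two affine functions of $p$, which is maximized at the crossing point $p=1/2$, where both equal $(5+\sqrt5)/8$.

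I expect the only genuinely delicate step to be pinning down the numeric values: they must simultaneously make $f$ monotone submodular, make the ``$a$ first'' ratio at $C=2$ equal to the ``$b$ first'' ratio at $C=1$ (so the deterministic argument has no escape case), and make the two affine functions of $p$ cross at an interior point for the randomized argument. Everything else — the monotone-submodularity verification and the arithmetic simplifications such as $(3+\sqrt5)/(2+2\sqrt5)=(1+\sqrt5)/4$ — is routine.
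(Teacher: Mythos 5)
Your proposal is correct and matches the paper's proof essentially verbatim: the same three unit-size items with the same values $f(\{a\})=4$, $f(\{b\})=f(\{c\})=1+\sqrt{5}$, $f(\{a,b\})=f(\{a,c\})=3+\sqrt{5}$, $f(\{b,c\})=f(\{a,b,c\})=2+2\sqrt{5}$, the same first-item case split for the deterministic bound, and the same parametrization by $p=\Pr[\text{$a$ tried first}]$ with the two affine ratios crossing at $p=1/2$ to give $(5+\sqrt{5})/8$. The only cosmetic difference is your mention of Yao's principle, which is not actually needed (and is not used by the paper here): your computation already bounds the robustness ratio of an arbitrary randomized policy directly via the pointwise minimum of the two capacity ratios.
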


\section{Approximation algorithms for \submax without cancellation}
\label{sec.stochastic}

\subsection{Pseudo-polynomial time $(1/4-o(1))$-approximation algorithm}
\label{subsec.pseudo-stochastic}

We present a pseudo-polynomial time $(1/4-o(1))$-approximation algorithm for \submax without cancellation.
We reduce the problem to the following problem.

\smallskip
\emph{Submodular maximization problem with an interval independent
constraint: }
We are given a set $I$ of items.
Each item $i \in I$ is associated with an interval $l_i$
on a line. 
We are also given a submodular function $f\colon 2^I
\rightarrow \Rset_+$.
The objective is to find a subset $I'$ of $I$
maximizing $f(I')$
subject to the constraint that no two intervals associated with items in
$I'$ intersect, i.e., $l_i \cap l_j = \emptyset$ for all $i,j \in I'$.

\smallskip
Feldman~\cite{Feldmanphd} showed that this problem admits a ($1/4-o(1)$)-approximation randomized algorithm for monotone
submodular functions.

Let us explain the reduction from \submax to
this problem.
Let $I$ be the set of items and $f \colon 2^I \rightarrow \Rset_+$
be the submodular function given in an instance of \submax.
Recall that $T=s(I)$ and $[T'] =\{0,1,\ldots,T'\}$.
For each $i \in I$,
we make $T-s(i)+1$ copy items $i_0,\ldots,i_{T-s(i)}$,
and $i_j$ is associated with the interval $[j,j+s_i-1]$ for each $j \in [T-s_i]$.
Let $I'$ denote the set of these items.
We define a function $f' \colon 2^{I'} \rightarrow \Rset_+$ 
by
\[
 f'(U') = \sum_{t=0}^T p(t) f(\{i \in I
 \colon \exists j \in [t-s_i], \ i_j \in U' \})
\]
for all $ U' \subseteq I'$.
It is not difficult to prove the following lemma.

\begin{lemma}
The function $f'$ is monotone and submodular.
\end{lemma}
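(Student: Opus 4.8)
The plan is to verify monotonicity and submodularity of $f'$ directly from its definition as a nonnegative combination (over $t \in [T]$ with weights $p(t) \ge 0$) of set functions of the form $U' \mapsto f(\phi_t(U'))$, where $\phi_t(U') = \{i \in I \colon \exists j \in [t-s_i],\ i_j \in U'\}$. Since monotonicity and submodularity are both preserved under nonnegative linear combinations, it suffices to show that each summand $g_t := f \circ \phi_t$ is monotone and submodular. This reduces the whole lemma to understanding the map $\phi_t \colon 2^{I'} \to 2^I$.

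**First I would** record the two structural properties of $\phi_t$ that do the work. (i) $\phi_t$ is monotone with respect to inclusion: if $U' \subseteq V'$ then $\phi_t(U') \subseteq \phi_t(V')$, which is immediate since the defining condition ``$\exists j \in [t-s_i],\ i_j \in U'$'' only becomes easier to satisfy as $U'$ grows. (ii) $\phi_t$ is ``union-preserving and intersection-subadditive'' in the sense that $\phi_t(U' \cup V') = \phi_t(U') \cup \phi_t(V')$ and $\phi_t(U' \cap V') \subseteq \phi_t(U') \cap \phi_t(V')$. The union identity holds because the existential quantifier distributes over a union; the intersection inclusion holds because $i_j \in U' \cap V'$ forces the witness condition for both $U'$ and $V'$ (indeed, the inclusion can be strict — a witness $i_j$ for $U'$ and a different witness $i_{j'}$ for $V'$ with $j \ne j'$ put $i$ in both $\phi_t(U')$ and $\phi_t(V')$ without $i$ lying in $\phi_t(U' \cap V')$).

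**Then** monotonicity of $g_t$ follows from (i) combined with monotonicity of $f$: $U' \subseteq V' \Rightarrow \phi_t(U') \subseteq \phi_t(V') \Rightarrow f(\phi_t(U')) \le f(\phi_t(V'))$. For submodularity, I would chain the inequalities
\[
 g_t(U') + g_t(V') = f(\phi_t(U')) + f(\phi_t(V')) \ge f(\phi_t(U') \cup \phi_t(V')) + f(\phi_t(U') \cap \phi_t(V')),
\]
using submodularity of $f$, and then rewrite the right-hand side using the union identity $\phi_t(U') \cup \phi_t(V') = \phi_t(U' \cup V')$ and the intersection inclusion $\phi_t(U' \cap V') \subseteq \phi_t(U') \cap \phi_t(V')$ together with monotonicity of $f$ to get $f(\phi_t(U') \cup \phi_t(V')) + f(\phi_t(U') \cap \phi_t(V')) \ge f(\phi_t(U' \cup V')) + f(\phi_t(U' \cap V')) = g_t(U' \cup V') + g_t(U' \cap V')$. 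Summing $p(t) \cdot (\cdots)$ over $t$ gives submodularity of $f'$, and the analogous sum for monotonicity closes the proof.

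**The only mildly subtle point** — and the one worth stating carefully rather than asserting — is the intersection inclusion in (ii), since it goes the ``wrong'' way compared to the union identity and one must be careful that it is still enough: the trick is exactly that submodularity of $f$ only needs an \emph{upper} bound on the value at the intersection, and monotonicity of $f$ lets us replace $f(\phi_t(U') \cap \phi_t(V'))$ by the larger-or-equal $f$-value at the smaller set $\phi_t(U' \cap V')$ is going the wrong way — so in fact one uses $f(\phi_t(U') \cap \phi_t(V')) \ge f(\phi_t(U' \cap V'))$ via monotonicity applied to $\phi_t(U' \cap V') \subseteq \phi_t(U') \cap \phi_t(V')$, which is the correct direction. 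Everything else is bookkeeping. I expect no real obstacle; the proof is a short routine verification once the two properties of $\phi_t$ are isolated.
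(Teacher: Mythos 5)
Your proof is correct, and since the paper simply asserts the lemma ("it is not difficult to prove") without giving an argument, your verification supplies exactly the routine check that was intended: $\phi_t$ preserves unions, maps intersections into intersections, and is monotone, so each $f\circ\phi_t$ is monotone submodular, and the nonnegative weights $p(t)$ preserve both properties. The only blemish is the tangled wording of your final paragraph, but the inequality you actually invoke, $f(\phi_t(U')\cap\phi_t(V'))\ge f(\phi_t(U'\cap V'))$ via monotonicity of $f$ applied to $\phi_t(U'\cap V')\subseteq\phi_t(U')\cap\phi_t(V')$, is the correct one and closes the argument.
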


Let $U' \subseteq I'$ be a solution 
for the instance of the problem with an interval independent constraint
that consists of the item set $I'$ (with associated intervals) and the
submodular function $f'$.
From $U'$, we define the ordering of $I$ as follows.
If a copy of $i \in I$ is included in $U'$,
then pick item $i$ at the time equal to the minimum index of its copies
in $U'$, i.e, $\min\{t \in [T] \colon i_t \in U'\}$.
Sort the items in the increasing order of 
the times at which they are picked.
The other items follow these items, and their order
is decided arbitrarily.
This sequence achieves the objective value of at least $f'(U')$.

\begin{theorem}
 Problem \submax without cancellation 
 admits a
 pseudo-polynomial time randomized $(1/4-o(1))$-approximation algorithm for monotone submodular functions.
\end{theorem}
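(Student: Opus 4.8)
The plan is to combine the reduction already described with the approximation guarantee that Feldman established for the submodular maximization problem with an interval independent constraint. Concretely, I would first verify the (omitted) lemma that $f'$ is monotone and submodular—this is routine: monotonicity is immediate since enlarging $U'$ can only enlarge each set $\{i \in I \colon \exists j \in [t-s_i], i_j \in U'\}$, and submodularity follows because $f'$ is a nonnegative combination (with weights $p(t)$) of functions of the form $U' \mapsto f(g_t(U'))$, where $g_t$ is a union of coordinate projections, and such compositions preserve submodularity. Granting this lemma, Feldman's result gives a pseudo-polynomial time randomized $(1/4-o(1))$-approximation algorithm for $\max\{f'(U') \colon U' \text{ is a feasible set of pairwise-disjoint intervals}\}$; the running time is pseudo-polynomial because $|I'| = \sum_{i \in I}(T - s(i) + 1) = O(nT)$ and evaluating $f'$ takes $O(T)$ oracle calls.

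Next I would establish the two directions linking the optimum of \submax\ to the optimum of the interval problem. For the ``$\geq$'' direction needed for the approximation bound: given any output $U'$ of the interval algorithm, the sequence $\Pi$ constructed in the excerpt (pick each item $i$ with a copy in $U'$ at time $\min\{t \colon i_t \in U'\}$, sort ascending, append the rest arbitrarily) satisfies $\mathbb{E}_C[f(\Pi(C))] \geq f'(U')$. The key observation is that since the intervals chosen in $U'$ are pairwise disjoint, the items picked at times $t_1 < t_2 < \cdots$ have the property that the prefix ending at the item picked at time $t_k$ has total size at most $t_k + s_{i} - 1 < $ (start of next interval) $\le t_{k+1}$, so for capacity $C$ every item whose pick-time $t$ satisfies $t + s_i - 1 \le C$ is actually selected by $\Pi(C)$; monotonicity of $f$ then gives $f(\Pi(C)) \ge f(\{i \colon \exists j \in [C - s_i], i_j \in U'\})$ for each $C$, and taking expectations over $C$ yields exactly $f'(U')$. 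For the ``$\leq$'' direction (needed so that the optimum we approximate is at least $\OPT$ of \submax): given an optimal sequence $\Pi^\ast$ for \submax, for each $i \in I$ that is ever selected let $j_i$ be its (fixed) start position in the packing, i.e. the total size of items before it; the copies $\{i_{j_i}\}$ have disjoint intervals and the resulting $U'$ has $f'(U') \ge \mathbb{E}_C[f(\Pi^\ast(C))]$, so $\mathrm{OPT}_{\text{interval}} \ge \mathrm{OPT}_{\submax}$.

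Putting these together: the interval algorithm returns $U'$ with $f'(U') \ge (1/4 - o(1))\,\mathrm{OPT}_{\text{interval}} \ge (1/4-o(1))\,\mathrm{OPT}_{\submax}$, and the derived sequence $\Pi$ has $\mathbb{E}_C[f(\Pi(C))] \ge f'(U') \ge (1/4-o(1))\,\mathrm{OPT}_{\submax}$, which is the claimed bound. The construction runs in pseudo-polynomial time overall.

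I expect the main obstacle to be the ``$\geq$'' direction of the sequence correctness, specifically the prefix-size bookkeeping that shows every item with a sufficiently early pick-time in $U'$ is genuinely packed by $\Pi(C)$. One must be careful that the items \emph{appended arbitrarily after} the $U'$-items do not interfere—they come strictly later in the order, so they cannot displace any $U'$-item, and that is what makes the disjoint-intervals hypothesis do its job. A minor secondary point is confirming that $f'$, as a convex combination over the realized capacities, correctly matches the objective $\mathbb{E}_C[f(\Pi(C))]$ rather than merely bounding it in one direction; the equality/inequality at the cutoff $t + s_i - 1 \le C$ versus the interval $[j, j+s_i-1] \subseteq [0,C]$ needs the index arithmetic to line up, which is where sloppiness would creep in.
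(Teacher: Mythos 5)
Your proposal is correct and follows essentially the same route as the paper: the identical reduction to the interval-independent-constraint problem with the weighted function $f'$, Feldman's $(1/4-o(1))$-approximation, and the sorted-by-pick-time sequence whose expected value is at least $f'(U')$; you simply fill in the details (submodularity of $f'$, the prefix-size induction, and the converse direction $\mathrm{OPT}_{\text{interval}}\ge \mathrm{OPT}$) that the paper leaves implicit. The only nit is the threshold ``$t+s_i-1\le C$'' in your packing argument, which should be $t+s_i\le C$ as in your final displayed inequality; this off-by-one is harmless and does not affect the conclusion.
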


Feldman~\cite{Feldmanphd} also gave a 
 randomized $1/(2e+o(1))$-approximation algorithm for 
 the problem with an interval independent constraint
 and nonmonotone submodular functions.
 Thus, if the submodular function is not monotone, then \submax admits 
 a pseudo-polynomial time randomized $1/(2e+o(1))$-approximation algorithm.

\subsection{Polynomial-time $((1-1/\sqrt[4]{e})/4-\epsilon)$-approximation algorithm}
\label{sec:poly_16approx}

In this subsection,
we present a polynomial-time algorithm of approximation ratio
$(1-1/\sqrt[4]{e})/4-\epsilon$ for any small constant $\epsilon >0$.
This algorithm is based on the idea of Gupta et al.~\cite{GuptaKMR11}
for the stochastic knapsack problem.
We first give a pseudo-polynomial time $((1-1/\sqrt[4]{e})/2-o(1))$-approximation algorithm,
and then we transform it into a polynomial-time algorithm.

Our algorithm relies on a continuous relaxation of the problem. 
The relaxation is formulated based on an idea of using time-indexed variables;
we regard the knapsack capacity as a time limit while considering that picking an item $i$ spends time $s(i)$.
In the relaxation, 
we have a variable $x_{ti} \in [0,1]$ for each $t \in [T-1]$ and $i \in I$,
and $x_{ti}=1$ represents that item $i$ is picked at time $t$.
 For each $t \in [T]$ and $i\in I$,
let $\bar{x}_{ti}=\sum_{t'\in [t-s(i)]}x_{t'i}$ if $t \geq s(i)$, and let $\bar{x}_{ti} = 0$ otherwise.
For each $t\in [T]$, let $\bar{x}_t$ be the $|I|$-dimensional vector whose component corresponding to $i \in I$ 
is  $\bar{x}_{ti}$.
Let $F\colon [0,1]^I \rightarrow \Rset_+$ be the multilinear extension of the submodular function $f$.
Then, the relaxation is described as 
\begin{equation}
\label{eq.lpnaive-loose}
 \begin{array}{lll}
  \text{maximize} & \bar{F}(x):=\sum_{t=1}^T p(t)F(\bar{x}_t) &\\
  \text{subject to} 
  & \sum_{t \in [T-1]} x_{ti} \leq 1, & \forall i \in I,\\
  & \sum_{i \in I} \sum_{t'\in [t]} x_{t'i} \min\{s(i), t\}\leq 2t, & \forall t=1,\ldots,T,\\
  & x_{ti} \geq 0, & \forall t \in [T-1],  \forall i \in I.
 \end{array}
\end{equation}

Let us see that 
\eqref{eq.lpnaive-loose} relaxes the problem.
It is not difficult to see that the first and the third constants are valid.
We prove that the second constraint is valid. 
Suppose that $x$ is an integer solution that corresponds to a sequence of items.
Let $I'$ be the set of items picked at time $t$ or earlier in this solution.
Notice that $\sum_{i \in I} \sum_{t'\in [t]} x_{t'i} \min\{s(i), t\}
= \sum_{i \in I'} \min\{s(i),t\}$ holds.
Let $j$ be the item picked latest in $I'$.
Then, since the process of all items in $I'\setminus \{j\}$ terminates by time $t$,
we have $\sum_{i \in I' \setminus \{j\}}s(i) \leq t$.
Therefore,
$\sum_{i \in I'} \min\{s(i),t\} = \min\{s(j),t\} + \sum_{i \in I'\setminus \{j\}}s(i)
\leq 2t$.

Note also that $\bar{F}$ 
is a smooth monotone submodular function;
i.e., 
$\partial \bar{F}(x)/\partial x_{ti}\geq 0$
for any $t \in [T-1]$ and $i \in I$,
and 
$\partial^2 \bar{F}(x)/(\partial x_{ti} \partial x_{t'i'})\leq 0$
for any $t,t'\in [T-1]$ and $i,i'\in I$.
Hence, we can apply the continuous greedy algorithm for solving
\eqref{eq.lpnaive-loose}.
Let $x^*$ be an obtained feasible solution for \eqref{eq.lpnaive-loose}.
We first present a rounding algorithm for this solution.
Since the formulation size of this relaxation is not polynomial, 
this part does not run in 
polynomial time. We convert the algorithm into a polynomial-time one later.

\para{Rounding algorithm}
The algorithm consists of two rounds.
In the first round, each item $i$ chooses an integer $t$ from $[T-1]$
with probability $x^*_{ti}/4$,
and chooses no integer with probability $1-\sum_{t \in [T-1]}x^*_{ti}/4$.
An item is discarded if it chooses no integer.
Let $I_1$ be the set of remaining items.
For each $i \in I_1$, let $t_i$ denote the integer chosen by $i$.

Then, the algorithm proceeds to the second round. 
For each $i \in I_1$, let $J_i$ denote $\{j \in I_1 \colon t_j \leq t_i\}$.
In the second round, item $i$ is discarded if $s(J_i) \geq t_{i}$.
Let $I_2$ denote the set of items remaining after the second round.
The algorithm outputs a sequence 
obtained by sorting the items $i \in I_2$ in the non-decreasing order of $t_i$, 
where ties are broken arbitrarily, 
and by appending the other items after those in $I_2$ in an arbitrary order.

For $t \in [T]$,
let $I_t=\{i \in I_2 \colon t_i\leq t-s(i)-1\}$.
If $i \in I_t$, then $i$ contributes to the objective value of the solution
when the knapsack capacity is at least $t$.

\begin{lemma}
 \label{lem.rounding}
 For any $t \in [T]$, $I_t$ is the output
 of a monotone $(1/4,1/2)$-balanced contention resolution scheme
 for the maximization problem of $f$ under the knapsack capacity $t$
 and the fractional solution $\bar{x}^*_t$.
 Hence, the sequence output by the algorithm achieves an objective value of
 at least $\bar{F}(x^*/4)/2 $ in expectation.
\end{lemma}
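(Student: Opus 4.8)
The plan is to check, for every fixed capacity $t$, the three defining properties of a monotone $(1/4,1/2)$-balanced contention resolution scheme for the knapsack constraint $\{X\subseteq I:s(X)\le t\}$, where the scheme is the map that outputs $I_t$. Its internal randomness is that of the two rounds, and the ``active set'' it receives is the random set of items to which the first round assigns a start time compatible with capacity $t$; since an item $i$ chooses start time $t'$ with probability $x^*_{t'i}/4$, this active set is exactly $R_{\bar x^*_t/4}$ (restricted to items $i$ with $s(i)\le t$). Once the three properties are verified, the ``hence'' follows from the standard inequality $\mathbb{E}[f(\pi(R_x))]\ge c\,F(x)$ valid for any monotone $c$-balanced scheme $\pi$ (the inequality that underlies Theorem~\ref{thm.contscheme}), applied for each $t$ and then averaged over the realized capacity.

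Feasibility and monotonicity are the easy parts. For feasibility, if $I_t\ne\emptyset$ let $i^\star$ be the item of $I_t$ with the largest start time; every $j\in I_t\subseteq I_1$ has $t_j\le t_{i^\star}$, hence $I_t\subseteq J_{i^\star}$, and because $i^\star$ survived the second round, $s(J_{i^\star})<t_{i^\star}\le t-s(i^\star)-1<t$, so $s(I_t)\le s(J_{i^\star})<t$ with probability $1$; the inclusion $I_t\subseteq R_{\bar x^*_t/4}\cap\supp(\bar x^*_t)$ is immediate from the definitions. For monotonicity, enlarging the set of first-round winners can only enlarge every $J_i$, hence only make the survival event $\{s(J_i)<t_i\}$ less likely, so $\Pr[i\in I_t]$ is non-increasing in the active set.

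The crux is balancedness. Fix $i$ with $\bar x^*_{ti}>0$ and condition on $i$ being selected in round $1$ with start time $t_i=\tau$, for a value $\tau$ compatible with $t$; then $i$ ends up in $I_t$ precisely when it survives round $2$, i.e.\ when $s(J_i)<\tau$, equivalently $s(i)+\sum_{j\ne i}s(j)\,\mathbf{1}[j\in I_1,\ t_j\le\tau]<\tau$. The indicators are independent, and the expected total size they contribute is at most $\tfrac14\sum_j s(j)\sum_{t'\le\tau}x^*_{t'j}\le\tfrac14\cdot 2\tau=\tfrac{\tau}{2}$, where $\sum_j s(j)\sum_{t'\le\tau}x^*_{t'j}\le 2\tau$ is read off from the second (``capacity-doubling'') constraint of \eqref{eq.lpnaive-loose} taken at time $\tau$. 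Markov's inequality then bounds the probability that $i$ is discarded in round $2$ by $1/2$, giving $\Pr[i\in I_t\mid i\in R_{\bar x^*_t/4}]\ge 1/2$; the factor $1/4$ in the first round is calibrated exactly so that the doubling constant $2$ collapses to this $1/2$.

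To conclude, the inequality quoted above, applied at the fractional point $\bar x^*_t/4$, gives $\mathbb{E}[f(I_t)]\ge\tfrac12 F(\bar x^*_t/4)$ for each $t$. Since every item of $I_t$ fits when the capacity is $t$ and, by the second-round pruning, no earlier item of the output sequence overflows before the last item of $I_t$, we have $I_t\subseteq\Pi(t)$, hence $f(\Pi(t))\ge f(I_t)$ by monotonicity of $f$; averaging over $t$ with weights $p(t)$ and using $\bar F(x^*/4)=\sum_t p(t)F(\bar x^*_t/4)$ yields the claimed bound $\tfrac12\bar F(x^*/4)$. The step I expect to be most delicate is the balancedness estimate when $i$'s own size is not small relative to $\tau$: an item picked at a time $t_i\le s(i)$ can never survive round $2$, so one must either restrict the argument to items that are small compared with $t$ — and check that neither the relaxation nor the rounding loses anything on the remaining ``large'' items, each of which can occupy at most one slot — or bake this into the formulation; that bookkeeping, together with carrying the constants $1/4$ and $2$ cleanly through the doubling constraint, is where the real care lies.
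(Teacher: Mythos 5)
Your overall architecture (feasibility, monotonicity, a Markov-type balancedness bound, then $I_t\subseteq\Pi(t)$ and averaging over $p(t)$) is the same as the paper's, and the feasibility, monotonicity and averaging steps are fine. The gap is in the balancedness estimate, which is the substance of the lemma. You bound the expected congestion at the threshold $\tau=t_i$ by $\tfrac14\sum_{j}s(j)\sum_{t'\le\tau}x^*_{t'j}$ and claim that $\sum_{j}s(j)\sum_{t'\le\tau}x^*_{t'j}\le 2\tau$ is ``read off'' from the second constraint of \eqref{eq.lpnaive-loose}. It is not: that constraint only bounds the \emph{truncated} quantity $\sum_{j}\bigl(\sum_{t'\le\tau}x^*_{t'j}\bigr)\min\{s(j),\tau\}$. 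Nothing in \eqref{eq.lpnaive-loose} prevents an item $j$ with $s(j)\gg\tau$ from carrying positive mass at times $t'\le\tau$, in which case your expected-size bound is false and Markov yields nothing. The paper's proof turns precisely on repairing this: it replaces each size by $s'(j)=\min\{s(j),t'_i\}$, observes that $s(J'_i)\ge t'_i$ implies $s'(J'_i)\ge t'_i$ (the truncation level equals the threshold, so truncating cannot destroy the bad event), and only then invokes the LP constraint and Markov's inequality. This truncation device is the missing idea in your argument.

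A second, related loose end: you write the discard event as $s(i)+\sum_{j\ne i}s(j)\mathbf{1}[\,\cdot\,]\ge\tau$, i.e.\ with $i$'s own size included, but your Markov computation compares only the sum over $j\ne i$ against the full threshold $\tau$; keeping the $s(i)$ term, Markov at threshold $\tau-s(i)$ does not give $1/2$. You flag exactly this ``large item'' difficulty in your closing paragraph and leave it as unfinished bookkeeping, so the claimed bound $\Pr[i\in I_t\mid i\in R_{\bar{x}^*_t/4}]\ge 1/2$ is not actually established. The way to close it (consistent with the paper's expectation computation and with the Gupta et al.\ scheme) is to run the congestion test against the other items only, counting each with its truncated size and its fractional probability of landing at or before $t'_i$; item $i$'s own size is then accounted for not by the congestion test but by the requirement $t_i\le t-s(i)-1$ in the definition of $I_t$, and feasibility still follows since the last surviving item $i^\star$ of $I_t$ satisfies $s(I_t)<t_{i^\star}+s(i^\star)<t$. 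One of the two routes you sketch is indeed this one, but it has to be carried out rather than deferred.
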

 \begin{proof}
  Take arbitrarily $t \in [T]$.
  We define a random mapping $\pi\colon 2^I\rightarrow 2^I$
  as follows.
  Let $I' \subseteq I$.
  We let each $i \in I'$ independently sample an integer $t'_i$
  from $[t-s(i)-1]$
  with probability $x^*_{t'_i i}/\bar{x}^*_{ti}$.
  Define $J'_i=\{j \in I' \colon t'_{j} \leq t'_i\}$ for each $i \in I'$.
  Then, $\pi(I')$ is defined as $\{i \in I' \colon s(J'_i) < t'_i\}$.

  Let us see that $\pi$ is a $(1/4,1/2)$-balanced contention resolution scheme
  for $\bar{x}^*_t$.
  For this, we analyze the probability that 
  $i$ is included in $\pi(R_{\bar{x}^*_t/4})$,
  conditioned that $i \in R_{\bar{x}^*_t/4}$.
  Recall that $i\in R_{\bar{x}^*_t/4}$ 
is not included in $\pi(R_{\bar{x}^*_t/4})$ if
 $s(J'_i) \geq t'_i$ 
 Let $j$ be an arbitrary item other than $i$, and let $s'(j) = \min\{s(j),t'_i\}$.
 Notice that
 $s'(J'_i) \geq t'_i$ holds
 if $s(J'_i) \geq t'_i$ holds.
 We give  an upper bound on the probability that 
 $s'(J'_i) \geq t'_i$ happens.
 The item $j \in I \setminus \{i\}$ is included in 
  $R_{\bar{x}^*_t/4}$
  with probability $\bar{x}^*_{tj}/4$,
  and then
  it is included in $J'_i$ 
  (i.e., $j$ chooses an integer at most $t'_i$)
 with probability at most $\sum_{t' \in [t'_i]}x^*_{t'j}/\bar{x}^*_{tj}$.
 Hence $\mathbb{E}[s'(J'_i)] \leq \sum_{j\in I}  \sum_{t' \in [t'_i]}
 x^*_{t'j}\min\{s(j),t'_i\}/4 \leq t'_i/2$, where the last inequality
 follows from the second constraint of \eqref{eq.lpnaive-loose}.
 Applying Markov's inequality, we obtain $\Pr[s'(J'_i) \geq
 t'_i] \leq 1/2$.
  Therefore, $\Pr[i \in \pi(R_{\bar{x}^*_t/4}) \mid i \in R_{\bar{x}^*_t/4}] \geq 1/2$,
which means that $\pi$ is a $(1/4,1/2)$-balanced contention resolution scheme
for $\bar{x}^*_t$.
  
  Next, we prove that $\pi$ is monotone.
  Let $I' \subseteq I'' \subseteq I$.
  Then, $\sum_{j \in J'_i}s(i')$ is not smaller in the computation of $\pi(I'')$
  than in the computation of $\pi(I')$, if 
  each item in $I'$ samples the same integer both in $\pi(I')$  and $\pi(I'')$.
  This implies $\Pr[i \in \pi(I')] \geq \Pr[i \in \pi(I'')]$ for each $i \in I'$.
  Thus, $\pi$ is monotone.

  Let us observe that $I_t$ coincides with $\pi(R_{\bar{x}^*_t/4})$.
  Recall that,
  in the first round of the algorithm,
  each item $i$ independently chooses an integer $t_i$.
  The probability that $t_i \leq t-s(i)-1$ 
  is $\sum_{t' \in [t-s(i)-1]}x^*_{ti}/4=\bar{x}^*_{ti}/4$.
  Hence, 
  item set $\{i \in I \colon t_i \leq t-s(i)-1\}$ coincides with $R_{\bar{x}^*_t/4}$.
  Then, 
  the decision of whether or not an item $i$ in this set is discarded
  in the second round of the algorithm is the same 
  as the computation of $\pi(R_{\bar{x}^*_t/4})$.
  Therefore, $I_t$
  coincides with $\pi(R_{\bar{x}^*_t/4})$.
 \end{proof}

By Theorem~\ref{thm.contscheme} and Lemma~\ref{lem.rounding}, 
our algorithm achieves
$((1-1/\sqrt[4]{e})/2-o(1))$-approximation
if it is combined with the continuous greedy algorithm with
stopping time $1/4$. 

Lemma~\ref{lem.rounding} also implies that the integrality gap of \eqref{eq.lpnaive-loose} 
is at least 1/8. On the other hand, 
there exist some instances indicating that the integrality gap is at most $1/3+\epsilon$ for any $\epsilon > 0$
even when the objective function is modular.
Suppose that the capacity is $T > 1$ with probability 1, and there are three items:
two items $i$ and $j$ of size $T$, and an item $k$ of size $1$. The
weight of these items are all 1, and the objective value is defined as
the sum of the weights of chosen items.
Clearly a knapsack of capacity $T$ can include at most one of the items,
and hence the maximum objective value of integer solutions is 1.
On the other hand, 
a fractional solution defined by $x_{0i}=1$, $x_{0j}=(T-1)/T$, $x_{T-1,k}=1$ and setting the other variables to be $0$ achieves the objective value $3-1/T$.

\para{Conversion into a polynomial-time algorithm}
We transform the pseudo-polynomial algorithm 
into the polynomial-time one.
Let $W = f(I)$ and $w=\min_{i \in I}f(\{i\})$.
We assume $w > 0$ without loss of generality;
if $f(\{i\}) = 0$ for some item $i \in I$, we can safely remove $i$ from
$I$
because the submodularity implies $f(S)=0$ for any $S\subseteq I$ with $i\in S$.
Recall that we assume that the submodular function $f$ is given
as an oracle. Indeed, algorithms in this paper
can be implemented if we can compute the value of the function (or
the value of its multilinear extension). 
We assume that the oracle
is encoded in $\Omega(\log (W/w))$,
and hence we say that an algorithm runs in polynomial time
if its running time is expressed as a polynomial in $\log (W/w)$.

The idea for the conversion is to use a more compact relaxation, which
is obtained by defining
variables and constraints for 
a polynomial number of integers in $[T]$.
Let $\epsilon$ be a positive constant smaller than 1,
and let $\eta = \lfloor \log_{1-\epsilon}(\epsilon w /(W\log T))\rfloor$.
For each $t \in [T-1]$, let $\bar{p}(t)$ denote $\sum_{t'=t+1}^T p(t')$.
We first define $\{\tau_0,\tau_1,\ldots,\tau_{q},\tau_{q+1}\}\subseteq [T]$ 
such that
$q=O(\log T+\log(W/(w\epsilon)))$,
$\tau_0=0$, $\tau_1=1$, $\tau_{q+1}=T$,
$\tau_{j} < \tau_{j+1} \leq 2\tau_j$ holds for any $j=1,\ldots,q$,
and there exists $q_{\eta} \in \{1,\ldots,q\}$ satisfying the following conditions:
   \begin{itemize}
\item $\bar{p}(\tau_{j}) \geq \bar{p}(\tau_{j+1}-1) \geq (1-\epsilon) \bar{p}(\tau_j)$ for any $j \in [q_{\eta}]$;
\item $\bar{p}(\tau_j) < \epsilon w/(W \log T)$ for any $j \in \{q_{\eta}+1,\ldots,q\}$.
 \end{itemize}
Such a subset of $[T]$ can be defined as follows.
For $j \in \{1,\ldots, \eta+1\}$, let $\tau'_j$ be the
minimum integer $t\in [T-1]$ such that $\bar{p}(t)< (1-\epsilon)^{j-1}$.
We assume without loss of generality that $\bar{p}(\min_{i \in I}s(i))=1$,
which means $\tau'_1 \geq \min_{i \in I}s(i)$.
We denote the set of positive integers in
$ \{\tau'_j  \colon j=1,\ldots,\eta+1\} 
\cup \{2^j \colon j \in [\lceil\log T \rceil -1]\}$
by $\{\tau_1,\ldots,\tau_{q}\}$,
and sort those integers so that $1=\tau_1 < \tau_2 < \cdots <
\tau_{q}$. 
We define $q_{\eta}$ so that $\tau_{q_{\eta}}=\tau_{\eta+1}'$.
Then, the obtained subset satisfies the above conditions.

In addition, we define the set of integers in $\{\tau_0,\ldots,\tau_{q+1}\}\cup \{\tau_j -s(i)+1 \colon j \in \{1,\ldots,q+1\}, i \in I\}$ 
as $\{\xi_0,\ldots,\xi_{r+1}\}$, where $0=\xi_0<\xi_1 < \ldots < \xi_r< \xi_{r+1}=T$. 
Notice that $r=O(n\log T + n\log (W/(w\epsilon)))$.

Roughly speaking, we define variables for each $\xi_k$ ($k \in [r]$),
and constraints
for each $\tau_j$ ($j \in [q]$).
Specifically, a variable $y_{k i}$ is defined for each $k \in [r]$ and $i \in I$,
and $y_{k i}$ replaces variables $x_{\xi_k, i},\ldots,x_{\xi_{k+1} - 1, i}$ 
in \eqref{eq.lpnaive-loose}.
For $j=1, \ldots,q+1$ and $i \in I$, we define 
an auxiliary variable $z_{j i}$ as $\sum_{\xi_{k+1}-1\leq \tau_j-s(i)} y_{ki}$,
and define $z_{j}$ as the $|I|$-dimensional vector whose component corresponding to $i \in I$ is $z_{j i}$.
Then, the compact relaxation is described as follows.
\begin{equation}
\label{eq.lpcompact}
 \begin{array}{lll}
  \text{maximize} & \sum_{j=0}^{q}\bar{p}(\tau_{j}) (F(z_{j+1}) - F(z_{j}))\\
  \text{subject to} &
   \sum_{k \in [r]} y_{k i} \leq 1, & \forall i \in I,\\
  & \sum_{i \in I}\sum_{\xi_k < \tau_j} y_{ki} \min\{s(i), \tau_j\}\leq 2\tau_j, & \forall j \in \{1,\ldots,q\},\\
  & z_{ji} = \sum_{\xi_{k+1} -1 \leq \tau_j - s(i)} y_{ki} & \forall j \in [q],\\
  & y_{k i} \geq 0, & \forall i \in I, \forall k \in [r].
 \end{array}
\end{equation}

\begin{lemma}
\label{lem.relaxation}
The optimal objective value of \eqref{eq.lpcompact}
is not smaller than that of \eqref{eq.lpnaive-loose}.
\end{lemma}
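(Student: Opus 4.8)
The plan is to take an optimal solution $x^*$ of \eqref{eq.lpnaive-loose} and build from it a feasible solution of \eqref{eq.lpcompact} whose objective is at least $\bar F(x^*)$; since the optimum of \eqref{eq.lpcompact} is at least the value of any feasible solution, this gives the lemma. The natural construction is to aggregate the mass of $x^*$ inside each block: set $y_{ki}=\sum_{t=\xi_k}^{\xi_{k+1}-1}x^*_{ti}$ for every $k\in[r]$ and $i\in I$, and let $z_j$ be the auxiliary vectors determined from $y$ as in \eqref{eq.lpcompact}.

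First I would check feasibility. Nonnegativity is immediate. Since $\xi_0=0$ and $\xi_{r+1}=T$, the blocks $\{\xi_k,\ldots,\xi_{k+1}-1\}$ partition $\{0,\ldots,T-1\}$, so $\sum_k y_{ki}=\sum_{t\in[T-1]}x^*_{ti}\le 1$, giving the first constraint. For the capacity constraint at $\tau_j$, the key point is that every $\tau_j$ is one of the grid points $\xi_m$; hence the blocks with $\xi_k<\tau_j$ are exactly $k=0,\ldots,m-1$ and together cover the time indices $\{0,\ldots,\tau_j-1\}\subseteq[\tau_j]$. Therefore $\sum_{\xi_k<\tau_j}y_{ki}=\sum_{t\in[\tau_j-1]}x^*_{ti}\le\sum_{t\in[\tau_j]}x^*_{ti}$, and multiplying by $\min\{s(i),\tau_j\}$ and summing over $i\in I$ reduces the constraint to the second constraint of \eqref{eq.lpnaive-loose} at time $\tau_j$. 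The third constraint holds by the definition of $z_j$.

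Next I would identify $z_j$ with $\bar x^*_{\tau_j}$. Here I use that each $\tau_j-s(i)+1$ is also a grid point $\xi_{m'}$ (or is nonpositive, in which case it is dropped and the relevant sums are empty); then the condition $\xi_{k+1}-1\le\tau_j-s(i)$ defining $z_{ji}$ selects exactly the blocks $k=0,\ldots,m'-1$, which cover $\{0,\ldots,\tau_j-s(i)\}=[\tau_j-s(i)]$. Hence $z_{ji}=\sum_{t\in[\tau_j-s(i)]}x^*_{ti}=\bar x^*_{\tau_j i}$, with both sides equal to $0$ when $\tau_j<s(i)$; in particular $z_0=\bar x^*_0=0$ and $F(z_0)=0$.

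Finally I would compare the objectives. Substituting $z_j=\bar x^*_{\tau_j}$, the objective of \eqref{eq.lpcompact} at $y$ is $\sum_{j=0}^{q}\bar p(\tau_j)\bigl(F(\bar x^*_{\tau_{j+1}})-F(\bar x^*_{\tau_j})\bigr)$. Using $F(\bar x^*_0)=0$, $\tau_0=0$, $\tau_{q+1}=T$, and $\bar p(\tau_{j-1})-\bar p(\tau_j)=\sum_{t'=\tau_{j-1}+1}^{\tau_j}p(t')$, a summation-by-parts rearrangement turns this into $\sum_{j=1}^{q+1}\bigl(\sum_{t'=\tau_{j-1}+1}^{\tau_j}p(t')\bigr)F(\bar x^*_{\tau_j})$. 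On the other hand, since the intervals $(\tau_{j-1},\tau_j]$ partition $\{1,\ldots,T\}$, we have $\bar F(x^*)=\sum_{j=1}^{q+1}\sum_{t'=\tau_{j-1}+1}^{\tau_j}p(t')F(\bar x^*_{t'})$. Because $\bar x^*_t$ is componentwise nondecreasing in $t$ and $F$ is monotone, $F(\bar x^*_{\tau_j})\ge F(\bar x^*_{t'})$ whenever $t'\le\tau_j$, so the former sum dominates the latter term by term and the objective of $y$ is at least $\bar F(x^*)$. The only delicate part is the index bookkeeping that aligns the blocks with the points $\tau_j$ and $\tau_j-s(i)+1$; once those alignments are established, both the feasibility verification and the objective comparison are routine.
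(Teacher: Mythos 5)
Your proposal is correct and takes essentially the same route as the paper: the same block-aggregation $y_{ki}=\sum_{t=\xi_k}^{\xi_{k+1}-1}x^*_{ti}$, the same feasibility check using that each $\tau_j$ is a grid point, and the same alignment $z_{ji}=\bar{x}^*_{\tau_j i}$ via the grid points $\tau_j-s(i)+1$. The only cosmetic difference is in the final comparison, where you apply summation by parts to the compact objective and compare term by term using monotonicity of $F$ along the nondecreasing path $\bar{x}^*_t$, whereas the paper rewrites $\bar{F}(x)$ as $\sum_t \bar{p}(t)\left(F(\bar{x}_{t+1})-F(\bar{x}_t)\right)$ and bounds $\bar{p}(t)\le\bar{p}(\tau_j)$ within each block; these are equivalent rearrangements of the same argument.
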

\begin{proof}
Suppose that $x$ is a feasible solution for  \eqref{eq.lpnaive-loose}.
From $x$, we define 
a solution $y$ for \eqref{eq.lpcompact}
so that $y_{k i}=\sum_{t =\xi_k}^{\xi_{k+1}-1} x_{ti}$ for each $k \in [r]$ and $i \in I$.
We define variables $z$ from $y$ by the third constraints of \eqref{eq.lpcompact}. 
Then, $(y,z)$ is feasible to \eqref{eq.lpcompact}.
Indeed, it is immediate from the feasibility of $x$ in \eqref{eq.lpnaive-loose} that $(y,z)$
satisfies the constraints of \eqref{eq.lpcompact} except the second one.
As for the second constraints, we can observe
that $\sum_{\xi_k < \tau_j}y_{ki}=\sum_{t < \tau_j}x_{ti}$ holds for any $i \in I$ and $j \in \{1,\ldots,q\}$
by the definition of $y$
and the fact that $\tau_j$ is included in $\{\xi_1,\ldots,\xi_{r}\}$.

Let us show that the objective value of $(y,z)$ in \eqref{eq.lpcompact}
 is not smaller than $\bar{F}(x)$. We have
\[
 \bar{F}(x)  
  =\sum_{t=1}^T p(t)F(\bar{x}_t)
  = \bar{p}(T-1)F(\bar{x}_T)+\sum_{t=1}^{T-1} (\bar{p}(t-1)-\bar{p}(t)) F(\bar{x}_t)
  = \sum_{t=0}^{T-1} \bar{p}(t) (F(\bar{x}_{t+1})- F(\bar{x}_{t})),
\]
where $\bar{x}_0$ denotes the zero-vector for convention.
The right-hand side can be written as 
\begin{align}
 \sum_{t=0}^{T-1} \bar{p}(t) (F(\bar{x}_{t+1})- F(\bar{x}_{t}))
& = 
\sum_{j=0}^q \sum_{t=\tau_j}^{\tau_{j+1}-1} \bar{p}(t) (F(\bar{x}_{t+1})- F(\bar{x}_{t}))
\nonumber
\\
& \leq 
\sum_{j=0}^q \sum_{t=\tau_j}^{\tau_{j+1}-1} \bar{p}(\tau_j) (F(\bar{x}_{t+1})- F(\bar{x}_{t}))
\nonumber
\\
& =
\sum_{j=0}^q \bar{p}(\tau_j) (F(\bar{x}_{\tau_{j+1}})- F(\bar{x}_{\tau_j})).
\label{eq.bound-lp}
\end{align}
Recall that $\bar{x}_{\tau_j i}=\sum_{t\leq \tau_j-s(i)}x_{ti}$
for each $j \in [q]$ and $i\in I$.
There exists $k' \in [r]$ such that $\tau_j-s(i)+1=\xi_{k'}$,
and hence 
$\sum_{t\leq \tau_j-s(i)}x_{ti}$ can be written as $\sum_{\xi_{k}-1\leq \tau_j-s(i)}y_{ki}=z_{ji}$.
Thus $\bar{x}_{\tau_j}=z_{j}$ holds for each $j \in [q]$,
implying that \eqref{eq.bound-lp} is equal to the objective value of $(y,z)$.
\end{proof}

\begin{lemma}
\label{lem.conversion}
From a feasible solution to \eqref{eq.lpcompact} achieving the objective value $\theta$,
we can construct a feasible solution to \eqref{eq.lpnaive-loose} achieving
the objective value of at least $(1-\epsilon)(\theta-\epsilon w)/2$.
\end{lemma}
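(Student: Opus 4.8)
The plan is to ``un-bucket'' a feasible $(y,z)$ for \eqref{eq.lpcompact} into an $x$ for \eqref{eq.lpnaive-loose}, then pay separately for the two sources of loss. Concretely, I would put all of $y_{ki}$ at the last time step of its block, i.e.\ set $x_{ti}=y_{ki}$ when $t=\xi_{k+1}-1$ and $x_{ti}=0$ otherwise. This placement is chosen so that $\bar x_{\tau_j}=z_j$ for every $j$: indeed $\bar x_{\tau_j,i}=\sum_{t'\le \tau_j-s(i)}x_{t'i}=\sum_{k:\,\xi_{k+1}-1\le \tau_j-s(i)}y_{ki}=z_{ji}$ by the definition of $z$. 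The first and third families of constraints of \eqref{eq.lpnaive-loose} are then immediate from those of \eqref{eq.lpcompact}.

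The delicate point is the second family of constraints, which is indexed by \emph{every} $t\in\{1,\dots,T\}$ whereas \eqref{eq.lpcompact} constrains only $\tau_1,\dots,\tau_q$. Given $t$, let $j$ be the largest index with $\tau_j\le t$. Every block contributing to $\sum_{i}\sum_{t'\le t}x_{t'i}\min\{s(i),t\}$ has $\xi_{k+1}-1\le t<\tau_{j+1}$, hence $\xi_k<\tau_{j+1}$; so, when $j<q$, this sum is at most $\sum_i\sum_{\xi_k<\tau_{j+1}}y_{ki}\min\{s(i),\tau_{j+1}\}\le 2\tau_{j+1}\le 4\tau_j\le 4t$ by the second constraint of \eqref{eq.lpcompact} together with $\tau_{j+1}\le 2\tau_j$, and when $j=q$ it is at most $\sum_i s(i)=T\le 2\tau_q\le 2t$ trivially. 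Thus $x$ violates the second constraints by at most a factor of $2$, so $x/2$ is feasible for \eqref{eq.lpnaive-loose}. Since $F$ is monotone with $F(\mathbf 0)=0$, we have $F(\lambda v)\ge\lambda F(v)$ for $\lambda\in[0,1]$, and hence $\bar F(x/2)=\sum_t p(t)F(\bar x_t/2)\ge\tfrac12\bar F(x)$; this is where the factor $\tfrac12$ in the statement originates.

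It remains to show $\bar F(x)\ge(1-\epsilon)(\theta-\epsilon w)$. Write $\bar F(x)=\sum_{t=0}^{T-1}\bar p(t)\bigl(F(\bar x_{t+1})-F(\bar x_t)\bigr)$ and group the summands over the blocks $[\tau_j,\tau_{j+1})$, using $\bar x_{\tau_j}=z_j$ and noting that each increment $F(\bar x_{t+1})-F(\bar x_t)$ is nonnegative by monotonicity. Dropping the blocks $j\in\{q_{\eta}+1,\dots,q\}$ costs at most $\sum_{j>q_{\eta}}\bar p(\tau_j)W\le\epsilon w$, because each such $\bar p(\tau_j)<\epsilon w/(W\log T)$ and there are only $O(\log T)$ of them (the $\tau_j$ exceeding $\tau_{q_{\eta}}$ are all powers of two), whence $\sum_{j\le q_{\eta}}\bar p(\tau_j)(F(z_{j+1})-F(z_j))\ge\theta-\epsilon w$. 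On each remaining block $\bar p$ is nearly flat, $\bar p(t)\ge\bar p(\tau_{j+1}-1)\ge(1-\epsilon)\bar p(\tau_j)$, so, again using nonnegativity of the increments and telescoping within each block,
\[
\bar F(x)\ \ge\ (1-\epsilon)\sum_{j=0}^{q_{\eta}}\bar p(\tau_j)\bigl(F(z_{j+1})-F(z_j)\bigr)\ \ge\ (1-\epsilon)(\theta-\epsilon w),
\]
and combining with $\bar F(x/2)\ge\tfrac12\bar F(x)$ finishes the proof. The main obstacle is the bookkeeping interplay in the last two steps: the geometric spacing $\tau_{j+1}\le 2\tau_j$ is exactly what keeps the constraint violation to a single factor of $2$, while the $\bar p$-level-set spacing is exactly what keeps the objective loss down to the $(1-\epsilon)$ and $\epsilon w$ terms, and one has to verify that the construction of the $\tau_j$'s delivers both simultaneously without, e.g., the number of tail blocks inflating the $\epsilon w$ term.
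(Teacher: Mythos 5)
Your proof is correct and takes essentially the same route as the paper's: the same feasibility argument for the second constraints (using $\tau_{j+1}\le 2\tau_j$ to lose only a factor $2$, plus the trivial bound for the last block), the same use of $F(\lambda v)\ge\lambda F(v)$, and the same Abel-summation/block accounting with the $(1-\epsilon)$ flatness on $j\le q_{\eta}$ and the $\epsilon w$ tail bound. The only difference is cosmetic: you concentrate each $y_{ki}$ at $t=\xi_{k+1}-1$ whereas the paper spreads it uniformly over $\{\xi_k,\dots,\xi_{k+1}-1\}$; both placements yield $\bar{x}_{\tau_j}=z_j$, so the two arguments coincide.
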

\begin{proof}
Let $(y,z)$ be a feasible solution to \eqref{eq.lpcompact}.
For each $k \in [r]$, $i \in I$, and $t \in \{\xi_{k},\ldots,\xi_{k+1}-1\}$,
we define $x_{ti}$ as $y_{ki}/(\xi_{k+1}-\xi_k)$.

 We prove that $x/2$ is feasible to \eqref{eq.lpnaive-loose}.
 It is immediate from the definition of $x$ that $x$ satisfies
 the first and the third constraints of \eqref{eq.lpnaive-loose}.
 We focus on the second constraints.
 Let $t \in \{1,\ldots,T-1\}$.
 Suppose that $\tau_{j} \leq t < \tau_{j+1}$ holds for some $j \in [q]$.
 Then, 
 for each $i \in I$, we have 
$\sum_{t' \in [t]} x_{t'i} \leq \sum_{t' < \tau_{j+1}} x_{t'i}=\sum_{\xi_k < \tau_{j+1}}y_{ki}$,
 where the equality follows from the fact that $\tau_{j+1} \in \{\xi_1,\ldots,\xi_{r+1}\}$.
Moreover, $\min\{s(i),t\} \leq \min\{s(i),\tau_{j+1}\}$ also holds.
 Hence,
 $$
 \sum_{i \in I} \sum_{t' \in [t]}x_{t'i}\min\{s(i),t\} \leq
 \sum_{i\in I}\sum_{\xi_k < \tau_{j+1}} y_{ki}\min\{s(i), \tau_{j+1}\} \leq
 2\tau_{j+1}
 $$
 holds, where the last inequality follows from the second constraints of
 \eqref{eq.lpcompact}
 when $j < q$, and from $\tau_{q+1}=T=\sum_{i \in I}s(i)$ and the first
 constraints of  \eqref{eq.lpcompact} when $j=q$.
 By its definition, 
 $t \geq \tau_j \geq \tau_{j+1}/2$.
 This implies that $x/2$ is feasible to \eqref{eq.lpnaive-loose}.

 Let $\theta$ be the objective value of $(y,z)$ in \eqref{eq.lpcompact}.
 We show that the objective value $\bar{F}(x/2)$ of $x/2$ in \eqref{eq.lpnaive-loose} is at least $(1-\epsilon)(\theta-\epsilon w)/2$.
We observe that
 \[
  \bar{F}\left(\frac{x}{2}\right)
 = \sum_{t=1}^T p(t)F\left(\frac{\bar{x}_t}{2}\right)
\geq \frac{1}{2} \sum_{t=1}^T p(t)F\left(\bar{x}_t\right)
=
\frac{1}{2} \sum_{j=0}^q \sum_{t=\tau_j}^{\tau_{j+1}-1} \bar{p}(t) \left(F\left(\bar{x}_{t+1}\right)- F\left(\bar{x}_{t}\right)\right).
 \]
Since $\bar{p}(t) \geq \bar{p}(\tau_{j+1}-1)$ for any $t \leq \tau_{j+1}-1$,
the right-hand side of the above inequality satisfies
\begin{align*}
 \frac{1}{2} \sum_{j=0}^q \sum_{t=\tau_j}^{\tau_{j+1}-1} \bar{p}(t) \left(F\left(\bar{x}_{t+1}\right)- F\left(\bar{x}_{t}\right)\right)
& \geq
 \frac{1}{2} \sum_{j=0}^q  \bar{p}(\tau_{j+1}-1) \sum_{t=\tau_j}^{\tau_{j+1}-1}\left(F\left(\bar{x}_{t+1}\right)- F\left(\bar{x}_{t}\right)\right)\\
& =
 \frac{1}{2} \sum_{j=0}^q  \bar{p}(\tau_{j+1}-1) \left(F\bigl(\bar{x}_{\tau_{j+1}}\bigr)- F\bigl(\bar{x}_{\tau_j}\bigr)\right).
 \end{align*}
Recall that  $\bar{x}_{\tau_j}=z_j$ and $\bar{p}(\tau_{j+1}-1) \geq (1-\epsilon)\bar{p}(\tau_j)$ hold for any $j \in [q_{\eta}]$.
Therefore,
\[
  \frac{1}{2} \sum_{j=0}^q  \bar{p}(\tau_{j+1}-1) \left(F\bigl(\bar{x}_{\tau_{j+1}}\bigr)- F\bigl(\bar{x}_{\tau_j}\bigr)\right)
\geq 
  \frac{1-\epsilon}{2} \sum_{j=0}^{q_{\eta}} \bar{p}(\tau_{j}) \left(F\left(z_{j+1}\right)- F\left(z_j\right)\right).
\]

On the other hand, $\theta$ can be written as 
\[
 \theta  
 =  \sum_{j=0}^{q} \bar{p}(\tau_j) (F(z_{j+1}) - F(z_j))
 =  \sum_{j=0}^{q_{\eta}} \bar{p}(\tau_j) (F(z_{j+1}) - F(z_j))
 +  \sum_{j=q_{\eta}+1}^{q} \bar{p}(\tau_j) (F(z_{j+1}) - F(z_j)).
\]
Recall that $\bar{p}(\tau_j) \leq \epsilon w/(W\log T)$ if $j \geq q_{\eta}+1$.
Moreover, we have $q-q_{\eta}\leq \log T$, and hence
\[
  \sum_{j=q_{\eta}+1}^{q} \bar{p}(\tau_j) (F(z_{j+1}) - F(z_j))
\leq 
  \sum_{j=q_{\eta}+1}^{q} \bar{p}(\tau_j) W
\leq \epsilon w.
\]
Combining all these discussion, we have
$ \bar{F}(x/2)
 \geq (1-\epsilon) (\theta-\epsilon w)/2$.
\end{proof}

We now wrap up our algorithm.
Our algorithm
first applies the continuous greedy algorithm with stopping time $1/4$
to compute a solution
$y$
such that $4y$ is feasible for 
\eqref{eq.lpcompact}
and the objective value of $y$ in \eqref{eq.lpcompact}
is $1-1/\sqrt[4]{e}$ times that of any feasible solution, particularly the optimal value $h$ of \eqref{eq.lpcompact}. 
From $y$, we compute a solution $x$ for \eqref{eq.lpnaive-loose} by
Lemma~\ref{lem.conversion}.
We see that $4x$ is feasible for \eqref{eq.lpnaive-loose}, 
and the objective value of $x$ in \eqref{eq.lpnaive-loose}
is at least $(1-\epsilon)((1-1/\sqrt[4]{e})h-\epsilon w)/2$.
Since we are assuming $\bar{p}(\min_{i\in I}s(i)) \geq 1$,
picking the item of the smallest size at time 0
achieves objective value $w$.
This means $h \geq w$,
and hence the objective value of $x$
is at least
$(1-\epsilon)(1-\epsilon - 1/\sqrt[4]{e})h/2$.
Then, applying the rounding algorithm to $4x$,
we obtain a sequence of objective value 
$(1-\epsilon)(1-1/\sqrt[4]{e}-\epsilon)h/4$.

To make this algorithm run in polynomial-time,
we do not explicitly write down $x$.
In the rounding algorithm, values of $x$ are used
for deciding $t_i$ for each $i \in I$ in the first round of the rounding algorithm.
This is possible without writing down $x$ as follows.
Notice that $x_{ti}$ takes the same value for any $t \in [\tau_j,\tau_{j+1})$
by the construction of $x$.
Hence each $i$ chooses $t_i$ as follows.
First, $i$ chooses $k \in [q]$ with probability
$y_{k i}$, and is discarded with probability $1-\sum_{k \in [r]}y_{k i }$.
Then, $i$ chooses $t_i$ from $[\tau_k, \tau_{k+1})$
uniformly at random.
This algorithm runs in polynomial time with respect to $1/\epsilon$ and the input size of the instance.
We give a pseudo-code of the algorithm in Algorithm~\ref{alg:stoch}.

\begin{algorithm}[t]
  \caption{Randomized algorithm of approximation ratio $(1-\epsilon)(1-\epsilon-{1}/{\sqrt[4]{e}})/4-o(1)$}
 \label{alg:stoch}
 \For{$\forall j \in \{1,\ldots,\eta+1\}$}{
 compute $\tau'_j = \argmin\{t \in [T-1]\colon \bar{p}(t)<
 (1-\epsilon)^{j-1}\}$ by the binary search
 }
 compute $\tau_0,\ldots,\tau_{q+1}$, $q_{\eta}$, and
 $\xi_0,\ldots,\xi_{r+1}$\;
 $y \ot$ output of the continuous greedy with stopping time $1/4$
 applied to \eqref{eq.lpcompact}\;
 $I' \ot \emptyset$\;
 \For{$i \in I$}{
 choose a number $k$ from $[q]$ with probability $y_{ki}$ or $I' \ot I' \cup
 \{i\}$ with 
 probability $1-\sum_{k\in [q]}y_{ki}$\;
 \lIf{$i \not\in I'$}{choose an integer $t_i$ from $[\tau_k,\tau_{k+1})$ uniformly at random}
 }
 $\Pi' \ot $ sequence obtained by sorting the items $i \in I \setminus I'$ in a non-decreasing order
 of $t_i$\;
  $\Pi \ot (\Pi'_1)$, $l\ot 2$\;
 \For{$i =2,\ldots, |\Pi'|$}{
 \lIf{$\sum_{j \in [i-1]}s(\Pi'_j)< t_{\Pi'_i}$}{
 $\Pi_l \ot \Pi'_i $, $l\ot l+1 $
 }
 \lElse{
 $I' \ot I' \cup \{\Pi'_i\}$
 }
 }
 append the items in $I'$ to the suffix of $\Pi$ arbitrarily, and \Return{$\Pi$}\;
\end{algorithm}

With the conversion given above, 
we obtain the following theorem.

\begin{theorem}
For any constant $\epsilon \in (0,1)$, there exists a randomized approximation algorithm
of approximation ratio
$(1-\epsilon)(1-\epsilon - 1/\sqrt[4]{e})/4 -o(1)\approx 0.055-\epsilon$ for \submax, 
which runs in polynomial time with respect to $1/\epsilon$ and the input size of the instance.
\end{theorem}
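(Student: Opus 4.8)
Since this theorem is the culmination of Section~\ref{sec:poly_16approx}, its proof amounts to chaining together the compact relaxation, the continuous greedy algorithm, the conversion of Lemma~\ref{lem.conversion}, and the rounding of Lemma~\ref{lem.rounding}, while keeping track of the multiplicative loss at each step. Write $\OPT$ for the optimal expected objective value of the given \submax instance and $h$ for the optimal value of \eqref{eq.lpcompact}. First I would establish $h\ge\OPT$: any sequence $\Pi$ yields an integer feasible solution of \eqref{eq.lpnaive-loose} of the same objective value, obtained by setting $x_{ti}=1$ when item $i$ is picked at cumulative size $t$, so — the three constraints of \eqref{eq.lpnaive-loose} having been checked to be valid — the optimum of \eqref{eq.lpnaive-loose} is at least $\OPT$, and Lemma~\ref{lem.relaxation} then gives $h\ge\OPT$. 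Next, the objective of \eqref{eq.lpcompact} is, after Abel summation, a nonnegative combination of the values $F(z_j)$ with $z_0$ the zero vector and each $z_j$ a nonnegative linear image of $y$, hence a smooth monotone submodular function of $y$, and the $y$-polytope of \eqref{eq.lpcompact} is downward-closed and solvable; so running the continuous greedy algorithm with stopping time $1/4$ on \eqref{eq.lpcompact} produces $y$ such that $4y$ is feasible for \eqref{eq.lpcompact} and the objective of \eqref{eq.lpcompact} evaluated at $y$ is at least $(1-1/\sqrt[4]{e}-o(1))\,h$.

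I would then feed $y$ into the construction in the proof of Lemma~\ref{lem.conversion}, setting $x^{*}_{ti}=y_{ki}/(\xi_{k+1}-\xi_k)$ on each block $t\in[\xi_k,\xi_{k+1})$ (up to the scaling used in that proof); feasibility of $4y$ for \eqref{eq.lpcompact} makes $x^{*}$ feasible for \eqref{eq.lpnaive-loose} and yields $\bar F(x^{*}/4)\ge(1-\epsilon)\bigl((1-1/\sqrt[4]{e})\,h-\epsilon w\bigr)/2$ with $w=\min_{i\in I}f(\{i\})$, which is at least $(1-\epsilon)(1-1/\sqrt[4]{e}-\epsilon)\,h/2$ because packing the smallest item at time $0$ already attains value $w$, hence $h\ge w$. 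Applying Lemma~\ref{lem.rounding} to $x^{*}$, the rounding algorithm yields, for every realized capacity, the output of a monotone $(1/4,1/2)$-balanced contention resolution scheme for $f$ under that capacity, so the resulting sequence has expected objective value at least $\bar F(x^{*}/4)/2\ge(1-\epsilon)(1-\epsilon-1/\sqrt[4]{e})\,h/4-o(1)\ge(1-\epsilon)(1-\epsilon-1/\sqrt[4]{e})\,\OPT/4-o(1)$, which is the claimed ratio. Polynomial running time follows because there are only $q=O(\log T+\log(W/(w\epsilon)))$ breakpoints $\tau_j$ and $r=O(n\log T+n\log(W/(w\epsilon)))$ breakpoints $\xi_k$ (with $W=f(I)$), so \eqref{eq.lpcompact} has polynomial size, and because the rounding never needs $x^{*}$ written out: since $x^{*}_{ti}$ is constant on each block, each item can first choose a block with probability $y_{ki}$ and then a uniform time inside it — this is precisely Algorithm~\ref{alg:stoch}.

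The main obstacle is not this assembly but the two ingredients it rests on, Lemmas~\ref{lem.relaxation} and~\ref{lem.conversion}, that is, the design of the compact relaxation \eqref{eq.lpcompact}: the breakpoints $\tau_j$ must be chosen as a mixture of a geometric grid of the tail probabilities $\bar p(\cdot)$ and the powers of two so that (i) coarsening \eqref{eq.lpnaive-loose} into \eqref{eq.lpcompact} does not decrease the optimum and (ii) a solution of \eqref{eq.lpcompact} can be spread back over \eqref{eq.lpnaive-loose} at a cost of only a factor $(1-\epsilon)/2$ and an additive $\epsilon w$, all with polynomially many breakpoints; this careful sketching of the knapsack capacity is exactly the step with no counterpart in the stochastic knapsack problem. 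Within the assembly itself, the only delicate part is the bookkeeping — distinguishing $x^{*}$ from $x^{*}/4$ when invoking the $(1/4,1/2)$-scheme, combining the factors $1-1/\sqrt[4]{e}$, $1-\epsilon$, $1/2$, $1/2$ and $-o(1)$, and replacing the additive loss $-\epsilon w$ by $-\epsilon h$ through $h\ge w$.
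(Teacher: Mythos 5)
Your proposal follows the paper's own route step for step: validity of \eqref{eq.lpnaive-loose} and Lemma~\ref{lem.relaxation} to get $h\ge \OPT$, continuous greedy with stopping time $1/4$ on \eqref{eq.lpcompact} (whose objective is indeed smooth monotone submodular in $y$ and whose polytope is downward-closed and solvable), Lemma~\ref{lem.conversion}, Lemma~\ref{lem.rounding} as a monotone $(1/4,1/2)$-balanced contention resolution scheme for every realized capacity, the bound $h\ge w$, and the block-sampling implementation that avoids writing the long vector down (Algorithm~\ref{alg:stoch}). The one place the bookkeeping goes wrong is exactly the scaling you yourself flag as delicate. With $x^*_{ti}=y_{ki}/(\xi_{k+1}-\xi_k)$ as you literally define it, the proof of Lemma~\ref{lem.conversion} applied to $y$ (which is feasible for \eqref{eq.lpcompact} by downward-closedness) gives $\bar{F}(x^*/2)\ge(1-\epsilon)(\theta-\epsilon w)/2$ with $\theta\ge(1-1/\sqrt[4]{e}-o(1))h$; it does \emph{not} give the inequality you assert for $\bar{F}(x^*/4)$, which in general is only half as large (exactly half when $f$ is modular, since then $\bar{F}$ is linear). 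Feeding this $x^*$ to Lemma~\ref{lem.rounding} therefore only yields $\bar{F}(x^*/4)/2\ge(1-\epsilon)(1-\epsilon-1/\sqrt[4]{e})h/8$, a factor $2$ short of the theorem. The paper's wrap-up avoids this as follows: let $x$ be the \eqref{eq.lpnaive-loose}-feasible solution produced by Lemma~\ref{lem.conversion} from $y$ (so $x=x^*/2$ and $\bar{F}(x)\ge(1-\epsilon)((1-1/\sqrt[4]{e})h-\epsilon w)/2$), note that $4x=2x^*$ is also feasible for \eqref{eq.lpnaive-loose} because the construction is linear in $y$ and $4y$ is feasible for \eqref{eq.lpcompact}, and run the rounding algorithm on $4x$; its guarantee is then $\bar{F}(4x/4)/2=\bar{F}(x)/2\ge(1-\epsilon)(1-\epsilon-1/\sqrt[4]{e})h/4-o(1)$, which together with $h\ge\OPT$ and $h\ge w$ gives the claimed ratio. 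Apart from this factor-of-two scaling of the vector handed to the rounding step, your assembly coincides with the paper's proof.
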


\section{Conclusion}
\label{sec:conclusion}

We considered the maximization problem of a nonnegative monotone submodular
function under an unknown or a stochastic knapsack constraint.
We presented adaptive policies that achieve constant robustness ratios for
an unknown knapsack constraint when the cancellation is allowed.
For the case where the cancellation is not allowed,
we presented approximation algorithms that achieve constant approximation
ratios for a stochastic knapsack constraint.

There still remain many interesting directions of further studies.
We mention two of them here.
First, even for 
\knapsackUC with cancellation,
there is still a gap between the best known upper and lower bounds
on the robustness ratio if we consider randomized policies;
the best known lower bound is $1/2$ achieved by the deterministic policy
of Disser et al.~\cite{DisserKMS17}, and we give an upper bound $8/9$ in Section~\ref{sec:hardness_cancel}.
Hence it is an interesting direction to investigate whether there exists a randomized policy
achieving a robustness ratio better than $1/2$.

Another interesting future work is to investigate an upper bound on the robustness ratio of \submaxUC with cancellation.
The best known upper bound on the robustness ratio achieved by the
deterministic policies is $1/2$ which is 
given by an instance of \knapsackUC.
Although this is tight for deterministic policies to \knapsackUC even if they are
restricted to universal policies,
it may be possible to give a smaller upper bound if we consider
submodular objective functions.
It is interesting to investigate
whether an upper bound smaller than $1/2$ is achievable for \submaxUC.

\para{Acknowledgement}
The first author is supported by JSPS KAKENHI Grant Number JP16K16005. 
The second author is supported by JSPS KAKENHI Grant Number JP17K12646 and JST ERATO Grant Number JPMJER1201, Japan. 
The third author is supported by
JSPS KAKENHI Grant Number JP17K00040 and 
JST ERATO Grant Number JPMJER1201, Japan.
 
\bibliography{knapsack}

  \end{document}